\documentclass[a4paper,11 pt]{article}
\usepackage{amsmath}
\usepackage{amssymb}
\usepackage{amsthm}
\usepackage{float}
\floatstyle{boxed}
\restylefloat{figure}
\usepackage[font=small,labelfont=bf]{caption}
\usepackage{graphicx}
\usepackage[colorlinks=true, linkcolor=blue]{hyperref}
\usepackage{cite}
\newcommand{\ket}[1]{\left| #1\right\rangle }
\newcommand{\inp}[2]{\left\langle #1 \mid #2\right\rangle}
\newtheorem{thm}{Theorem}[section]
\newtheorem{defin}[thm]{Definition}
\newtheorem{lem}[thm]{Lemma}
\newtheorem{cor}[thm]{Corollary}
\newtheorem{prop}[thm]{Proposition}
\newtheorem{rem}[thm]{Remark}
\newcommand{\bra}[1]{\left\langle #1\right|}
\newcommand{\pty}[2]{\textbf{PAR}_{#1,#2}}
\newcommand{\rank}{\text{rank}}

\marginparwidth 0pt
\oddsidemargin  0pt
\evensidemargin  0pt
\marginparsep 0pt

\topmargin   0pt

\textwidth   6.5in
\textheight  8.5 in

\author{Maximilian Fillinger}
\title{Data Structures in Classical and Quantum Computing}

\begin{document}
\maketitle
\begin{abstract}
This survey summarizes several results about quantum computing
related to (mostly static) data structures. First, we describe
classical data structures for the set membership and the predecessor
search problems: Perfect Hash tables for set membership from the paper
\cite{fks} by Fredman, Koml\' os and Szemer\' edi and a
data structure by Beame and Fich for predecessor search presented
in \cite{bf}. We also prove results about their space complexity
(how many bits are required) and time complexity (how many bits
have to be read to answer a query).

After that, we turn our attention to classical data structures with
quantum access. In the quantum access model, data is stored in classical
bits, but they can be accessed in a quantum way: We may read several
bits in superposition for unit cost.
We give proofs for lower bounds in this setting that show that
the classical data structures from the first section are, in some sense, asymptotically optimal
- even in the quantum model. In fact, these proofs are simpler and
give stronger results than previous proofs for the classical model
of computation. The lower bound for set membership was proved by
Radhakrishnan, Sen and Venkatesh in \cite{qu_set_mem} and the result
for the predecessor problem by Sen and Venkatesh in \cite{qu_cell_probe}.

Finally, we examine fully quantum data structures. Instead of encoding
the data in classical bits, we now encode it in qubits. We
allow any unitary operation or measurement in order to answer queries.
We describe one data structure by de Wolf in \cite{dewolf_thesis} for the set
membership problem and also a general framework using
fully quantum data structures in quantum walks by Jeffery, Kothari
and Magniez in \cite{quant_walk}.
\end{abstract}

\section{Introduction}

\subsection{Data Structures}
\label{int:data}

\em Data structures \em are a fundamental area of study in computer science
since efficient storage and retrieval of data is an important task.
In a data structure problem, we want to encode objects from some \em universe
\em $\mathcal U$ into
bit strings so that certain \em queries \em about the stored object can be
answered efficiently. The study of data structures is to find and analyse
trade-offs
between the length of the bit string and the time it takes to answer queries.
The time is measured in terms of the number of bits or blocks of memory that
we must read to answer a query.
Data structure problems can be \em static \em or \em dynamic\em .
For static problems,
we are content with having queries answered. For dynamic problems, we also want
the data structure to efficiently support some operations that change the stored object.
This survey is mostly about static problems.

Examples of data structure problems are the \em set membership problem \em
and its close relative, the \em dictionary problem\em . In the set membership
problem, we want to store a set of integers $S$ so that we can efficiently
find out whether some number $x$ is contained in $S$ or not. The set $S$
has size $\leq n$ and the numbers it contains are all $< m$ for some integers
$n \ll m$.
The dictionary
problem is the same except that each number in the set $S$ is associated with
some additional data.
Given an integer $k$, we want to be able to find out whether $k$ is in the set
and if yes, what data it is associated with.
For example, we might think of a phone book: The integers would be
representations of names and the associated data would be phone numbers.
How would we store such a phone book on a computer in a way that is both
efficient in terms of memory and allows us to quickly retrieve the phone number
of any given person?

A simple solution to the set membership problem is the \em bit vector \em
method. We encode our set as an $m$-bit string where the $i$th bit (counting
from 0 to $m-1$) is set to 1 if and only if $i\in S$. This allows us to answer
set membership queries by reading a single bit, but it uses a lot of space.
When $n \ll m$, only a small fraction of bits is set to 1, so it seems like
we are wasting a lot of space.

\em Hash tables \em offer a practical solution to the set membership problem.
We take a ``random looking'' hash function $h : \{ 0,\dots , m-1\}\to
\{ 0,\dots , n-1\}$.
We would like to store a set
$S$ of size $n$ in $n$ blocks of memory by storing each
$i\in S$ in block $h(i)$. However, since $m$ is larger than $n$,
our set may contain numbers $i,j$ such that $h(i) = h(j)$. We call such pairs
\em collisions \em of $h$. If the set $S$ we want to
store contains a collision, we need to resolve it in some way.
The easiest approach is to store in slot $k$ a pointer to the head
of a linked list that contains all the $i\in S$ with $h(i) = k$. See
Figure \ref{fig_hash} for an illustration.

\begin{figure}[h]
\centering
\includegraphics{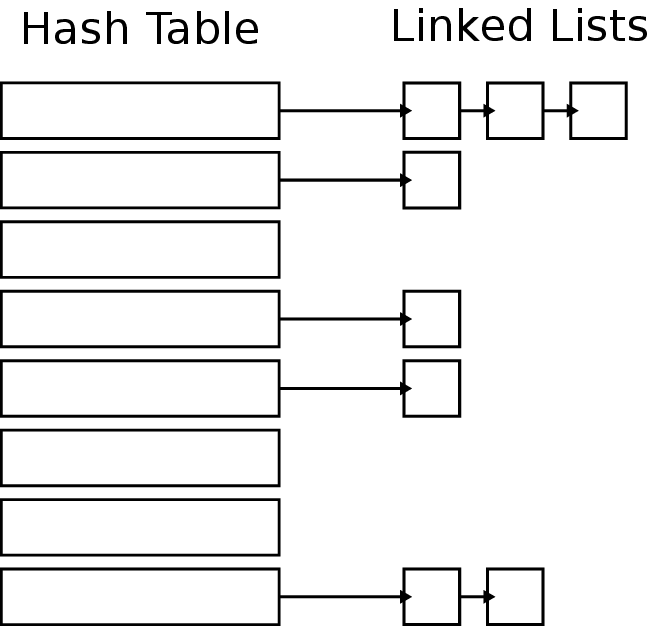}
\caption{The $i$th cell in the hash table contains a pointer to the head of a
linked list which contains the stored integers $j\in S$ with $h(j) = i$.}
\label{fig_hash}
\end{figure}

This translates easily into a solution for the dictionary problem: Instead
of integers, we store pairs $(k, p)$ where $k$ is an integer and $p$ is a
pointer to the data associated with $k$.

When the set $S$ that we store is selected at random, there is a high
probability that
we can quickly find out whether any given integer $k$ is in $S$ or not. However,
in the worst case we must read a lot of memory blocks: If $n\leq \sqrt m$ then
there exists a set $S$ with at most $n$ elements such that all elements of $S$
have the same hash value.
In that case, our encoding of $S$ is clearly not better than a simple linked
list. Thus, we must read $n$ blocks of memory in the worst case.

In Section \ref{mem_class}, we describe the Perfect Hashing scheme by
Fredman, Koml\' os and Szemer\' edi which works efficiently for every set $S$.
The downside of this method is that, while ordinary hash tables also allow to add and
remove elements of the set easily, there is no straightforward way to do so in
the Perfect Hashing scheme (other than encoding the changed set ``from scratch'').
Nevertheless, if we are only concerned with the static set membership problem,
the Perfect Hashing scheme is asymptotically optimal when we require that our
queries are answered correctly with probability 1, as we will see in Sections
\ref{mem_class} and \ref{quantum_set}.

Another problem that we study in this survey is the \em predecessor problem\em .
Again, we encode sets $S\subseteq \{ 0,\dots , m-1 \}$ of integers, but this
time, the data structure should allow us to quickly find the
\em predecessor \em of any integer $i < m$ in $S$. That is, we want to find
out if $S$ contains an element smaller than $i$, and if yes, we want to find
the largest $j\in S$ such that $j < i$. Examples of data structures for
the predecessor problem are \em Fusion Trees \em by Fredman and
Willard in \cite{fusion} and \em X-fast Tries \em by Willard in
\cite{xfast}. Fusion Trees need
$O(n)$ blocks of $\log m$ bits of memory and support predecessor queries that
read $O(\log n/\log \log m)$ blocks. X-fast Tries
use $O(n\log m)$ blocks of size $\log m$ and support queries reading
$O(\log\log m)$ blocks. Combining these two data structures with a contribution
of their own, Beame and Fich describe a data structure in \cite{bf} that uses
$O(n^2 \log n/\log\log n)$ blocks of size $\log m$ and answers predecessor
queries reading
$$ O \left( \min\left( \frac{\log\log m}{\log\log\log m},
\sqrt{\frac{\log n}{\log \log n}}\right)\right)$$
blocks.

\subsection{Quantum Computing -- Informal}

A \em quantum computer \em is a mostly hypothetical computing device that
operates on the basis of quantum mechanics, unlike classical
computers.\footnote{It is not quite true that actual classical computers are based on
classical physics. Quantum effects have to be taken into account in the
construction of classical computer hardware since the transistors used for
this are now so small that quantum effects cannot be ignored anymore.
However, they are designed to behave classically.}
Actual quantum
computers have only been built on a very small scale of a few qubits
(quantum bits).
While there does not seem to be any physical law preventing the construction of
large-scale quantum computers, it is a hard engineering problem.
This does not prevent theorists from inventing algorithms for quantum computers.
The most famous example is \em Shor's factoring algorithm \em presented in
\cite{shor} which efficiently computes the prime factors of a given number and
could be used to break the widely-used RSA encryption scheme. Another important
algorithm is \em Grover's search algorithm \em described in \cite{grover} which
can search a list of length $N$ in the order of $\sqrt N$ computational steps.

It is a natural question to ask what quantum computers could do for data
structures. We examine two different models for the interaction of quantum
computers and data structures. The first one is the \em quantum access \em model
where the data is still encoded into a classical bit string but is accessed
in a quantum way.

However, it turns out that for the data structure problems that we study in this
survey, the quantum access model of computation has no advantage over the
classical model, at least in asymptotic terms. Nevertheless, the theory of
quantum computing is a useful mathematical tool: The proofs of lower
bounds for these problems in the quantum access model are easier to
understand and yield stronger results than earlier proofs given for the
classical model. Since a quantum computer can do anything that a classical
computer can do, this also gives us classical lower bounds. There are many more
areas of computer science where results from quantum computing are
relevant for classical computing. Unlike quantum algorithms, these results
are useful even if large-scale quantum computers are never built and even if
quantum mechanics turns out not to be an accurate description of reality.
For a survey of such results, see \cite{qu_method}.

The second model can be called \em fully quantum \em data structures. We
encode our data not in bits, but in qubits. To answer queries, we may use
all the operations available in quantum computing.

\section{Preliminaries}

\subsection{Notation}

For every positive integer $n$, we let $[n]$ denote the set
$\{ 0,\dots ,n-1\}$. The base-2 logarithm is written as $\log$ and
the natural logarithm as $\ln$. When an integer $n$ is not a power
of 2, we implicitly round up $\log n$ so that $\log n$ is
the minimum number of bits required to denote any number from $[n]$.
The symbol $a\oplus b$ stands for the bit-wise XOR of two bit strings
$a$ and $b$ and
$a\circ b$ is the concatenation of two strings. We identify
non-negative integers with their binary representation and in this sense, we
may talk about a bit string being larger or smaller than another one,
or about a bit string being a prefix of some number.

\subsection{Quantum Computing -- Formal}

Let us now make the idea of quantum computing a little more precise. A
$k$-qubit quantum memory register is modeled as a unit-length vector in the
complex vector space $\mathbb C^{2^k}$ with the standard scalar product and
norm.\footnote{Such vector spaces are examples of \em Hilbert spaces\em . A
Hilbert space is a complex vector space with a scalar product. In general,
Hilbert spaces may be of infinite dimension, but we only consider finite spaces
in this survey.} We also call this vector space the \em state space \em of
our system.
The vectors of the standard basis
of this space are denoted as $\ket b$ for $b\in \{ 0,1 \} ^k$ or, using the
convention of identifying non-negative integers with their binary representation,
$b\in [2^k]$. Thus, every
possible state of the quantum memory can be written as
$$\sum _{b\in \{ 0,1 \} ^k} \alpha _b\ket b$$
where $\sum _b |\alpha _b|^2 = 1$. The complex number $\alpha _b$ is called
the \em amplitude \em of basis state $\ket b$. When we read the quantum memory
or \em measure \em it, we will get result $b$ with probability $|\alpha _b|^2$
and the state will \em collapse \em to $\ket b$. We can operate on a quantum
register by applying \em unitary transforms \em on it. A unitary transform
is a length-preserving isomorphism, i.e., it is linear, bijective and the length
of a vector does not change when the transform is applied to it.

An easy, yet important example of a unitary transform is the \em Hadamard
transform \em $H$. It operates on a single qubit and maps
$\ket 0\mapsto (\ket 0 + \ket 1)/\sqrt{2}$ and
$\ket 1\mapsto (\ket 0 - \ket 1)/\sqrt 2$. This already determines its
behaviour on the whole vector space $\mathbb C^2$ since it must be linear.
Written as a matrix, the Hadamard transform looks as follows:
$$H = \frac 1{\sqrt 2}\left(\begin{array}{cc} 1 & 1\\ 1 & -1\end{array}\right)$$
By a simple computation, one can verify that $H$ is its own inverse and hence
bijective. It is also easy to see that it is length-preserving. We write
$\ket + = H\ket 0$ and $\ket - = H\ket 1$.

Sometimes, we need to model  uncertainty about a quantum state. As an example,
imagine that we measure a qubit in state $\ket +$ in the
computational basis, but forget the result of the measurement.
Then we do not know whether our
qubit is in state $\ket 0$ or $\ket 1$, but we know that the probability of
either state is 1/2. If we now measure the qubit again, we will see outcome
0 or 1 with probability 1/2 each, just as if it still were in state $\ket +$.
But we know that our qubit is not in that state anymore. \em Density matrices
\em capture the distinction between these cases. Let $\ket{\psi _1},\dots ,
\ket{\psi _n}$ be quantum states and $p_1,\dots , p_n$ positive real numbers
that sum to 1. A system that is in state $\ket{\psi _i}$ with probability
$p_i$ is modeled as a matrix $\rho = \sum _i p_i\ket{\psi _i}\bra{\psi _i}$.
If $n = 1$, we say that our system is in a \em pure state \em and it can be
described by a state vector. If $n > 1$, we speak of a \em mixed state\em .
Every density matrix $\rho$ has trace $\text{Tr} (\rho ) = \sum _i \rho _{ii}
= 1$ and is \em positive semi-definite\em ,
i.e., $\bra{\phi}\rho\ket{\phi} \geq 0$ for every vector $\ket{\phi}$ in the
state space. Conversely, one can show that every matrix with these properties
can be expressed as a sum $\sum _i p_i\ket{\psi _i}\bra{\psi _i}$ for appropriate
probabilities and quantum states. When we measure a qubit that is described
by $\rho$ in the computational basis, with probability $\rho _{ii}$ we will
see outcome $i$ and the system will collapse to pure state $\ket i\bra i$.
When we apply a unitary $U$ to a system described by $\rho$, the result is
$U\rho U^*$.

Let us now apply the density matrix formalism to our example. The density
matrix of $\ket +$ is
$$\ket +\bra +
= \frac 1 2 \left( \begin{array}{cc} 1 & 1\\ 1 & 1\end{array}
\right)$$
whereas the state after the measurement, when we forget the result, is
$$\frac 1 2\ket 0\bra 0 + \frac 1 2\ket 1\bra 1
= \frac 1 2 \left( \begin{array}{cc} 1 & 0\\0 & 0\end{array}\right)
+ \frac 1 2 \left( \begin{array}{cc} 0 & 0\\0 & 1\end{array}\right)
= \frac 1 2 \left( \begin{array}{cc} 1 & 0\\0 & 1\end{array}\right)
\text .$$

An introduction to quantum computing can be found in the book
``Quantum Computation and Quantum Information'' by Michael A. Nielsen and
Isaac L. Chuang \cite{mikeike}. We will assume some familiarity with the
basics of quantum computing, but give a short review of the
\em quantum cell-probe \em model of computation in Section \ref{comp}.

\subsection{Bit-Probe and Cell-Probe Algorithms}
\label{comp}

We now describe the computational models which we use for the algorithms
that answer data structure queries.
In the \em bit-probe \em model,
any computation is for free, but reading a bit from the input bit string $x
= x_0\dots x_{n-1}\in \{ 0,1 \} ^n$ carries unit cost. \em Which \em bits are
read may depend on the results
of previous bit-probes. In the \em cell-probe \em model, we
fix some \em cell-size \em or \em block-size \em $w$ and
view our input bit string $x$ as a sequence of cells or blocks of length $w$.
That is, the first block is $x_0\dots x_{w-1}$, the second is
$x_w\dots x_{2w-1}$ and so on. Instead of single bits, we
may read a whole block at once for unit cost and computation
is still for free. We assume that the length of $x$ is a multiple of $w$ which
can always be achieved by appending some padding.
Obviously, the bit-probe model is the cell-probe
model with cell-size $w = 1$.

We can formalize algorithms in this model as \em decision trees\em .
In a decision tree, each non-leaf node $u$ is labeled with an integer
$i_u\in [n/w]$ and each leaf is labeled with a possible output of the
algorithm. A non-leaf node $u$ in a decision tree has exactly $2^w$ children
and every edge leading to a child of $u$ is labeled with a unique number
from $[2^w]$. A decision tree is evaluated on input $x$ by starting at the root
$r$ and proceeding along the
edge labeled with $x_{i_r}$ to a child $u$ of $r$. We then proceed
along the edge labeled $x_{i_u}$ and so on, until we reach a leaf.
The label of the leaf is the output of the algorithm. The cell-probe complexity
is the depth of the tree.

\em Probabilistic \em cell-probe algorithms are modeled as probability
distributions over finite sets of decision trees and are evaluated by randomly
selecting a decision
tree according to the distribution and evaluating it. The complexity
is the maximal depth among all the trees that have positive probability.
Cell probe algorithms with \em auxiliary input \em from finite domain
$\mathcal Q$ map every $q\in\mathcal Q$ to some decision tree or probability
distribution over decision trees. The auxiliary input does not count towards
the complexity of the algorithm.

\em Quantum cell-probe \em algorithms have classical input $x = x_0\dots x_{n-1}$
and classical output, but compute on qubits instead of classical bits.
Furthermore, they are allowed to read several bits/blocks at once
in superposition for unit cost.
The state space that a quantum algorithm operates on is given as
$\mathcal H = \mathcal H_L\otimes\mathcal H_B\otimes\mathcal H_Z$.
The Hilbert space $\mathcal H_L$ consists of the address qubits and is
used for denoting which blocks of $x$ are to be read next. It consists
of $\log (n/w)$ qubits. The state space $\mathcal H_B$ describes
$w$ qubits which
are called the data qubits. They store the result when blocks of $x$ are read.
Finally, $\mathcal H_Z$ consists of an arbitrary number of qubits which
are used as workspace for the algorithm.

A quantum cell-probe algorithm of complexity $t$ is given
by a sequence $U_0,\dots ,U_t$ of unitary transforms on $\mathcal H$
which are independent of the input $x$. The input is accessed by a
unitary oracle transform $O_x$ that depends on $x$. For
computational basis states
$\ket l_L\in\mathcal H_L$, $\ket b_B\in\mathcal H_B$ and $\ket z_Z\in
\mathcal H_Z$, the action of $O_x$ is given by
$$\ket l_L\ket b_B\ket z_Z\mapsto \ket l_L\ket{b\oplus x_{lw}\dots x_{(l+1)w-1}}_B\ket z_Z\text .$$
The algorithm is evaluated as follows: The state space is initialized
to some state $\ket{\phi }$ which may encode some auxiliary input.
Then, we apply the unitary transform
$$U_tO_xU_{t-1}O_x\dots U_1O_xU_0$$
to $\ket{\phi}$. The output of the algorithm is obtained by measuring
the rightmost qubits of the work space. Quantum cell
probe algorithms can simulate both deterministic and probabilistic
classical cell-probe algorithms.

If the block-size is 1, we can alternatively define the oracle transform
by
$$O_{x,\pm }:\ket l_L\ket b_B\ket z_Z\mapsto (-1)^{b\cdot x_l}\ket l_L\ket b_B\ket z_Z$$
which is equivalent to $O_x$ in the sense that one can be used to implement the other
with the help of the Hadamard transform $H:\ket b\mapsto (1/\sqrt 2)(\ket 0 + (-1)^b\ket 1)$
which can be included in the transforms before and after the query.
We have
\begin{align*}
\ket l_L\ket 0_B\ket z_Z\stackrel{H_B}{\longmapsto}
\frac{1}{\sqrt 2}\ket l_L\left( \ket 0_B + \ket 1_B\right)\ket z_Z
&\stackrel{O_x}{\longmapsto}\frac{1}{\sqrt 2}\ket l_L\left( \ket 0_B + \ket 1_B\right)\ket z_Z
\stackrel{H_B}{\longmapsto}\ket l_L\ket 0_B\ket z_Z\\
\ket l_L\ket 1_B\ket z_Z\stackrel{H_B}{\longmapsto}
\frac{1}{\sqrt 2}\ket l_L\left( \ket 0_B - \ket 1_B\right)\ket z_Z
&\stackrel{O_x}{\longmapsto}(-1)^{x_l}\frac{1}{\sqrt 2}\ket l_L\left( \ket 0_B - \ket 1_B\right)\ket z_Z\\
&\stackrel{H_B}{\longmapsto}(-1)^{x_l}\ket l_L\ket 1_B\ket z_Z\\
\ket l_L\ket b_B\ket z_Z\stackrel{H_B}{\longmapsto}
\frac{1}{\sqrt 2}\ket l_L\left( \ket 0_B + (-1)^b\ket 1_B\right)\ket z_Z
&\stackrel{O_{x,\pm}}{\longmapsto}\frac{1}{\sqrt 2}\ket l_L\left( \ket 0_B + (-1)^{b\oplus x_l}\ket 1_B\right)\ket z_Z\\
&\stackrel{H_B}{\longmapsto}\ket l_L\ket{b\oplus x_l}_B\ket z_Z
\end{align*}
The $\pm$-type oracle transform is
represented by a diagonal matrix where the entries
on the diagonal are either $-1$ or $1$ which is helpful
in a proof presented in Section \ref{quantum_set}.

\subsection{Data Structure Problems and Solutions}

A static data structure problem is given by a finite universe $\mathcal U$, a
finite set of queries $\mathcal Q$, a finite set of answers $\mathcal A$ and
a function $f:\mathcal U\times \mathcal Q\to \mathcal A$. A
classical data structure for such a problem
is given by a function $\phi$ that encodes elements of $\mathcal U$ as bit strings
and a classical cell-probe algorithm with auxiliary input domain $\mathcal Q$
that computes $f(u, q)$ on input $\phi (u)$ and auxiliary input $q$.
This algorithm is called the \em query algorithm\em\footnote{Sometimes,
algorithms in the bit-probe and cell-probe model are called query
algorithms and the bit-probes or cell-probes are called queries. We do not use
this terminology here to avoid confusion with data structure queries.}
of that data structure.
The \em space complexity \em
of the data structure is the maximal number of blocks/bits in $\phi (u)$
and the \em time complexity \em is the maximal number of blocks/bits read by the
query algorithm. We can either consider deterministic algorithms or
we can use probabilistic algorithms and allow some error probability.

A classical data structure with quantum access is defined similarly,
except that instead of classical algorithms, we have a
quantum cell-probe algorithm that, given the initial state $\ket q\ket 0$
and oracle transform $O_{\phi (u)}$,
computes $f(u,q)$. As for classical data structures, we can consider
either exact quantum algorithms which always have to return the
correct answer or we can allow some error probability.

Every data structure problem has two trivial solutions: The first
one minimizes time complexity by listing the answers to all queries.
This, in general, requires a lot of space. The bit vector scheme mentioned
in Section \ref{int:data} is an application of this type of solution to the set
membership problem. The second one minimizes
space complexity by encoding elements of $\mathcal U$ using the information
theoretic minimum of bits, $\log |\mathcal U|$.
This usually causes a large time complexity.
In studying data structures, we look for interesting trade-offs between
these extremes.

Every data structure that uses block-size $w$ and has space complexity
$s$ and time complexity $t$ can be converted to a data structure with
block-size 1, space complexity $ws$ and time complexity $wt$. The
converse is not necessarily true. For a survey on classical data
structures and the classical cell-probe model see \cite{dstruct_survey}.

\subsection{Set Membership and Predecessor Search}

The two problems that we focus on in this survey are \em set membership \em
and \em predecessor search\em . These problems are parametrized by
positive integers $m$ and $n$ with $m\geq n$. The universe $\mathcal U$ in both cases
consists of the sets $S\subseteq [m]$ such that $|S|\leq n$.
In the set membership problem, we want to store a set $S$ so
that we can answer for every $i\in [m]$ the question
``Is $i\in S$?''. In the predecessor search problem, we want to
store $S$ so that we can answer for every $i\in [m]$ the question
``Is there some $j\in S$ with $j < i$ and if yes, which is the greatest $j$
with that property?''. The block-sizes we consider are 1 and $\log m$.

\subsection{Outline}

In Section \ref{class_data}, we present a data structure for each
of the problems we just described: The Perfect Hashing scheme for set
membership and a data structure given by Beame and Fich for predecessor search.
Section \ref{quant_acc} explains proofs for lower bounds in the quantum access
model which show that the data structures given in the previous
section are optimal in the sense that faster query algorithms would require
more space. Finally, Section \ref{full_quant}
describes some ``fully quantum'' data structures that encode
the data not in classical bits but in qubits and shows an application of such
data structures in \em quantum random walk \em algorithms.

\section{Classical Data Structures}
\label{class_data}

\subsection{Static Set Membership: Hash Tables}
\label{mem_class}

Suppose that we want to store a subset $S\subseteq [m] = \{ 0,\dots ,m-1 \}$
for $m\in\mathbb N$ and we want to be able to answer
membership queries ``Is $i\in S$?'' for every $i\in [m]$.

The most straightforward way is to store such a set $S$
as a bit vector $\phi (S)$ with $\phi (S)[i] = 1$ if and only if $i\in S$.
This allows to answer membership queries reading exactly one cell:
We just have to read the cell that contains the $i$th bit of $\phi (S)$
and output the value of that bit. However, this structure always uses
$m$ bits of space or $\lceil m/\log m \rceil$ blocks for block-size
$\log m$, so if the size $n$ of $S$ is small compared
to $m$, this method wastes a lot of space. In this subsection, we will
see the \em Perfect Hashing \em scheme
developed by Fredman, Koml\' os and Szemer\' edi in
\cite{fks} that stores a set $S$ of size $n$ using $O(n)$
blocks of size $\log m$ while answering set membership queries requires
only $O(1)$ cell-probes. Our presentation follows \cite[Section 11]{algo}.

Recall the hash table method of storing a set from Section \ref{int:data}.
Given the name, it is not
surprising that the Perfect Hashing scheme is similar to this method. While hash
tables are quite fast in practice, we saw that they have a worst-case time
complexity of $\Omega (n)$.
To avoid this, we are going to make the following changes: First, we
do not rely on a single hash function but we use a \em universal
class of hash functions\em , i.e., a class of hash functions with the
property that, for any
two given integers $i,j\in [n]$ and $h$ randomly selected from that
class, it is unlikely that $i$ and $j$ collide. Second, we expand
the size of our table to $O(n^2)$ and show that a universal class
of hash functions for that table size contains some function that
has no collision on $S$ of size $n$.
Third, to reduce the size back
to $O(n)$, we use a hash table of size $n$ and resolve collisions
by storing the set $S_i$ of elements of $S$ that collide in slot $i$
(here, $S_i$ is a random variable depending on the selected hash function
$h$) in
a hash table of size $O(|S_i|^2)$. We show that we can choose a hash function
from a universal class such that $\sum _i |S_i|^2 < 2n$.
\begin{defin}
A \em universal class of hash functions \em from $[m]$ to $[k]$ is
a set $\mathcal H$ of functions $h:[m]\to [k]$ such that, for any
two distinct $i,j\in [m]$, if we select $h\in\mathcal H$ uniformly
at random, the probability that $h(i) = h(j)$ is at most $1/k$.
\end{defin}
The following lemma gives an upper bound on the probability that a
hash function selected uniformly at random from a universal class
of hash functions from $[m]$ to $[n^2]$ has a collision on a fixed
set $S$ of size $n$.
\begin{lem}
\label{lem1}
Let $\mathcal H$ be a universal class of hash functions
from $[m]$ to $[n^2]$
and $S\subseteq
[m]$ of size $\leq n$. The probability that a randomly selected hash function
$h\in\mathcal H$ has a collision in $S$ is less than 1/2.
\end{lem}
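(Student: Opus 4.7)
The plan is to use a union bound over all pairs of distinct elements of $S$, combined with the collision probability bound guaranteed by the definition of a universal hash family.

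First, I would enumerate the distinct pairs in $S$. Since $|S| \leq n$, there are at most $\binom{n}{2} = n(n-1)/2$ unordered pairs $\{i,j\}$ with $i \neq j$ and $i,j \in S$. For any one such fixed pair, the definition of a universal class of hash functions from $[m]$ to $[n^2]$ guarantees that $\Pr_{h \in \mathcal H}[h(i) = h(j)] \leq 1/n^2$.

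Next, I would apply the union bound: the event ``$h$ has a collision on $S$'' is the union over all such pairs of the event ``$h(i) = h(j)$''. Hence
\begin{equation*}
\Pr_{h \in \mathcal H}[\text{$h$ has a collision on $S$}] \leq \binom{n}{2} \cdot \frac{1}{n^2} = \frac{n(n-1)}{2n^2} = \frac{1}{2} - \frac{1}{2n} < \frac{1}{2}.
\end{equation*}
This gives the desired strict inequality.

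There is essentially no obstacle here: the argument is a direct two-line calculation combining the definition of universality with a union bound. The only mild subtlety is ensuring the inequality is strict, which follows from the $-1/(2n)$ term (and from the fact that, for $n=1$, the set $S$ admits no pair at all and the collision probability is trivially $0 < 1/2$). The role of the blow-up to table size $n^2$ in the scheme is exactly this: it makes the per-pair bound $1/n^2$ small enough that the union bound over $O(n^2)$ pairs still beats $1/2$, which in turn guarantees that a suitable hash function can be found in a constant expected number of samples.
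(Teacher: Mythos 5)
Your proof is correct and is essentially the paper's proof in slightly different clothing: the paper introduces the collision-counting random variable $C = \sum_{i<j} C_{i,j}$, computes $\mathbb{E}[C] \leq \binom{n}{2}/n^2 < 1/2$ by linearity of expectation, and then invokes Markov's inequality $\Pr[C \geq 1] \leq \mathbb{E}[C]$; your union bound over pairs is exactly the same calculation, since Markov applied to a sum of indicator variables is the union bound. Both routes rest on the same two facts (the per-pair bound $1/n^2$ and the count of $\binom{n}{2}$ pairs) and yield the identical numerical estimate, so there is no substantive difference in approach.
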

\begin{proof}
Let $C$ be the random variable that counts the collisions on $S$ for
$h\in \mathcal H$ selected uniformly at random. That is,
$C = |\{ (i,j)\in S^2\mid i< j,h(i) = h(j)\}|$. If we let $C_{i,j}$
be the indicator random variable for $h(i) = h(j)$, we have
$$C = \sum _{i,j\in S,i<j} C_{i,j}$$
and since $\mathcal H$ is universal, $\mathbb E[C_{i,j}] \leq 1/n^2$. By
linearity of expectation we have
\begin{align*}
\mathbb E[C] = \sum _{i,j\in S,i<j} \mathbb E[C_{i,j}] \leq
    \sum _{i,j\in S,i<j}\frac{1}{n^2}
    = {n\choose 2} \frac{1}{n^2}
    = \frac{n^2 - n}{2}\cdot\frac{1}{n^2}
    = \frac{1}{2} - \frac{1}{2n}
    < \frac 1 2
\end{align*}
Since we have $\text{Pr}[C\geq 1]\leq\mathbb E[C]$ by Markov's inequality,
we can conclude that $h\in\mathcal H$ selected uniformly at random
has a collision on $S$ with probability less than 1/2.
\end{proof}

To construct the $O(n^2)$-size data structure, we still need something more than that:
We need, for every $m$ and $n$, a universal class of hash functions
such that each function can be uniquely identified by $O(\log m)$ bits.
We will now give a construction for classes of hash
functions that we later show to be universal and that can fulfill
this size requirement.
\begin{defin}
\label{concrete_hash}
For $m\in \mathbb N$ and $k \leq m$ and a prime number $p\geq m$, define the class
of hash functions $\mathcal H_{p,m,k}$ from $[m]$ to $[k]$ in the following
way:
For
$a\in \mathbb Z_p^*$, let
$$h_{p,m,k,a} : [m]\to [k], x\mapsto (ax\bmod p)\bmod k$$
and let
$$\mathcal H_{p,m,k} = \{ h_{p,m,k,a} \mid a\in\mathbb Z_p^*\}$$
\end{defin}
If $m$ is known, $h_{p,m,k,a}$ can be easily computed given
$p,k,a$. Clearly, $k$ can be stored in one $\log m$-bit block since
$k\leq m$. We can also choose $p\geq m$ such that $p$ and $a$
can be represented in a constant number of blocks, as follows:

By \cite[Theorem 7.32]{kl08}, there exists a constant $c$
so that for any $r\geq 1$, the number of primes
that require exactly $r+1$ bits to be represented\footnote{I.e.,
the number of primes $p$ with $2^r\leq p < 2^{r+1}$}
is at least
$c\cdot 2^{r}/(k+1)$. Substituting $\log m$ for $k$, we see that there are
at least $c\cdot m/(\log m +1)$ primes $p$ such that $m\leq p < 2\cdot m$.
Hence, there is a prime $p\geq m$ that can be represented with at most
$2\log m$ bits. Thus, $p$ and $a$ can both be stored in two $\log m$-blocks
each. The total number of $\log m$-blocks to store $p$, $k$ and $a$ is
therefore $l = 5$.
We can probabilistically find such a prime $p$ in time
polynomial in $m$ by randomly checking numbers in the appropriate
range for primality
(see \cite[Sections 7.2.1 and 7.2.2]{kl08}).

Let us now prove that the classes in Definition \ref{concrete_hash} are indeed
universal.
\begin{thm}
The classes $\mathcal H_{p,m,k}$ from Definition \ref{concrete_hash}
are universal classes of hash functions.
\end{thm}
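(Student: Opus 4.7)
The plan is to show that for any two distinct $i, j \in [m]$, the probability over uniform $a \in \mathbb Z_p^*$ that $h_{p,m,k,a}(i) = h_{p,m,k,a}(j)$ is at most $1/k$. I would begin with the key arithmetic observation: since $p$ is prime and $0 < |i - j| < m \leq p$, the difference $i - j$ is invertible modulo $p$. Letting $r_a = ai \bmod p$ and $s_a = aj \bmod p$, this gives $r_a - s_a \equiv a(i-j) \not\equiv 0 \pmod p$ for every $a \in \mathbb Z_p^*$, so $r_a \neq s_a$ always, and moreover the map $a \mapsto (r_a, s_a)$ is injective.

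The collision condition $h_{p,m,k,a}(i) = h_{p,m,k,a}(j)$ is equivalent to $r_a \equiv s_a \pmod k$. I would recast this by studying the integer difference $\delta_a := r_a - s_a \in \{-(p-1),\dots,-1\}\cup\{1,\dots,p-1\}$, which satisfies $\delta_a \equiv a(i-j) \pmod p$ and is injective in $a$ (different $a$'s give $\delta_a$'s with different residues modulo $p$). The collision condition becomes $k \mid \delta_a$, so the probability of collision equals $|\{a \in \mathbb Z_p^* : k \mid \delta_a\}|/(p-1)$.

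To finish I would bound the count $|\{a : k \mid \delta_a\}|$. Each $\delta_a$ is determined by the representative $a(i-j) \bmod p \in \{1,\dots,p-1\}$ shifted by either $0$ or $-p$ depending on the sign of $r_a - s_a$. The main obstacle is handling this wraparound: when $k \mid p$ the shift by $p$ preserves divisibility by $k$, so a direct count of multiples of $k$ in $\{1, \dots, p-1\}$ immediately gives the $1/k$ bound; when $k \nmid p$ a more delicate argument is needed. I would pair each $\delta \in \{1, \dots, p-1\}$ with $\delta - p \in \{-(p-1), \dots, -1\}$ and observe that the bijection $a \mapsto \delta_a$ picks exactly one element from each of these $p-1$ pairs. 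The count of bad $a$'s is then at most the number of pairs containing a multiple of $k$, which I would argue is at most $\lfloor (p-1)/k \rfloor$, yielding the desired bound.
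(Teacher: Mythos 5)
You correctly isolate the wraparound issue that the paper's own proof silently glosses over: the collision condition concerns the signed integer difference $\delta_a = r_a - s_a$, whereas the uniformity argument is about the residue $a(i-j)\bmod p \in \{1,\dots,p-1\}$, and these differ by $p$ whenever $r_a < s_a$. But your fix for the case $k \nmid p$ does not go through. Under the pairing $\delta \leftrightarrow \delta - p$, the two members of any pair differ by $p$; since $k \nmid p$, at most one member of a given pair can be a multiple of $k$, so the $\lfloor (p-1)/k\rfloor$ positive multiples of $k$ in $\{1,\dots,p-1\}$ and the $\lfloor (p-1)/k\rfloor$ negative multiples in $\{-(p-1),\dots,-1\}$ land in $2\lfloor (p-1)/k\rfloor$ \emph{distinct} pairs. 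Your claimed bound of $\lfloor (p-1)/k\rfloor$ on the number of pairs containing a multiple of $k$ is therefore off by a factor of two, and the argument only yields a collision probability of roughly $2/k$.

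This is not a defect that sharper counting can repair: the theorem as stated is actually false for this family. Take $p = 11$, $m = 6$, $k = 4$, $i = 1$, $j = 5$. Checking each $a \in \mathbb Z_{11}^*$ shows $h_{11,6,4,a}(1) = h_{11,6,4,a}(5)$ exactly for $a \in \{1,2,9,10\}$, so the collision probability is $4/10 > 1/4$. The root cause is the missing additive constant from the Carter--Wegman family $h_{a,b}(x) = ((ax+b)\bmod p)\bmod k$; with the shift $b$, the map $(a,b)\mapsto(r,s)$ is a bijection onto all ordered pairs $r\neq s$ and the standard count delivers the clean $1/k$ bound. The paper's proof commits the very conflation you spotted --- it treats the mod-$p$ representative of $r_a - s_a$ as if it were the integer difference --- but never flags the issue, and its claimed bound $(\lceil p/k\rceil -1)/(p-1) = 2/10$ is plainly smaller than the true probability $4/10$ in the example above. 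So both proofs fail, but yours at least makes the failure point visible.
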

\begin{proof}
Consider any two distinct natural numbers $y,z < p$. Let
$\tilde{h} _{a} (x) = ax\bmod p$. We first show that for $a\in
\mathbb Z_p^*$ selected uniformly at random,
$\tilde{h}_{a}(y)-\tilde{h}_{a}(z)$ is a uniformly random element of $\mathbb Z_p^*$.
Indeed, $\tilde{h}_{a}(y)-\tilde{h}_{a}(z) = a(y-z)\bmod p$.
We have $y-z\bmod p\neq 0$
since $y\neq z$ and $y,z < p$. For any $b\in\mathbb Z_p^*$, the probability
that $a = b(y-z)^{-1}$ is $1/(p-1)$.

We have $h_{p,m,k,a}(y) = h_{p,m,k,a}(z)$ if and only if
$\tilde{h}_a (y) - \tilde{h}_a (z)\bmod k = 0$. Since for any given
$l\in [n]$ there are
at most $\lceil p/k \rceil - 1$ elements $b$ of $\mathbb Z_p^*$
such that $b\equiv l \mod k$, it follows that the probability
of $\tilde{h}_a (y) - \tilde{h}_a (z) \equiv 0\mod k$ is at most
$$\frac{\lceil p/k\rceil -1}{p-1}\text .$$
We have
$$\lceil p/k\rceil - 1 \leq \frac{p+k-1}k -1 = \frac{p-1}k\text .$$
and therefore, the probability that $h_{p,m,k,a}(y) = h_{p,m,k,a}(z)$
is
$$\frac{\lceil p/k\rceil -1}{p-1}\leq \frac{p-1}{k(p-1)} = \frac 1 k$$
and thus, the definition of a universal class is satisfied.
\end{proof}
\begin{thm}[Quadratic Hash Table]
\label{quhash}
There is a data structure with block-size $\log m$
that stores subsets of $[m]$ of
size $n\leq \sqrt{m}$
in $O(n^2)$ blocks such that the membership query algorithm needs to make
a constant number of cell-probes.
\end{thm}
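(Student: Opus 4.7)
The plan is to apply the universal hash class $\mathcal{H}_{p,m,n^2}$ from Definition \ref{concrete_hash} with $k = n^2$, using Lemma \ref{lem1} to guarantee a collision-free hash function on the fixed set $S$. First I would fix a prime $p$ with $m \leq p < 2m$ (such a prime exists by the Bertrand-type bound cited after Definition \ref{concrete_hash}), so that $p$ and any $a \in \mathbb{Z}_p^*$ each fit in two $\log m$-blocks. Lemma \ref{lem1} tells us that for $a$ drawn uniformly from $\mathbb{Z}_p^*$ the hash function $h_{p,m,n^2,a}$ has a collision on $S$ with probability strictly less than $1/2$, so at least one collision-free choice of $a$ must exist; during preprocessing we locate it either by exhaustive search or by repeated random sampling (expected $O(1)$ attempts, each checkable in polynomial time).

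Given a collision-free $a$, I would let the encoding $\phi(S)$ consist of a header storing $p$, $k = n^2$, and $a$ in $l = 5$ blocks of size $\log m$, followed by a table $T$ of $n^2$ blocks where slot $T[i]$ holds the unique $j \in S$ with $h_{p,m,n^2,a}(j) = i$, or a sentinel value marking emptiness. Since $n \leq \sqrt{m}$ gives $n^2 \leq m$, every element of $S$ fits in one $\log m$-block; the sentinel can be accommodated by widening each cell by a single flag bit, which does not change the asymptotic block count.

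The query algorithm, on input $i$, reads the five header blocks, computes $h_{p,m,n^2,a}(i)$ at no cost (computation is free in the cell-probe model), then reads $T[h_{p,m,n^2,a}(i)]$ in one further probe, and returns ``yes'' iff that cell is non-empty and equals $i$. Correctness is immediate from the collision-freeness of $h_{p,m,n^2,a}$ on $S$: if $i \in S$ then $i$ is the unique preimage of $h(i)$ in $S$ and is stored in that slot, while if $i \notin S$ the slot is either empty or holds a different element of $S$. The total space is $5 + n^2 = O(n^2)$ blocks and the query uses at most $6 = O(1)$ cell-probes.

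The only non-routine point is the existence of a collision-free hash function, which is exactly the content of Lemma \ref{lem1}; everything else is bookkeeping about block sizes and sentinels. Consequently I do not anticipate a serious obstacle, and the proof is essentially the assembly of ingredients already established in the preceding subsection.
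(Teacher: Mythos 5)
Your proposal is correct and follows essentially the same route as the paper: apply Lemma \ref{lem1} to $\mathcal{H}_{p,m,n^2}$ to obtain a collision-free hash function on $S$, store $p$, $a$, and the table size in $l=5$ header blocks, fill an $n^2$-block array indexed by hash value, and answer a query with $l+1$ cell-probes via the test $A[h(i)]=i$. The one cosmetic difference is that you introduce a sentinel flag bit (which strictly pushes the cell width past $\log m$), whereas the paper avoids this by filling each unused slot $A[j]$ with a decoy value $k\in[m]$ with $h(k)\neq j$, so that the equality test remains sound with exactly $\log m$-bit cells.
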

\begin{proof}
We store $S\subseteq [m]$
of size $\leq n$ in the following way:
Let $p$ be a prime number larger than $m$. By Lemma \ref{lem1}
and since $\mathcal H_{p,m,n^2}$ is universal,
there exists some $h_{p,m,n^2,a}\in\mathcal H_{p,m,n^2}$ that has no collision on $S$. Let
$a$ be a number such that $h = h_{p,m,n^2,a}$ has that property.
We set aside the first $l = 5$ blocks to store
$p$, $a$ and $n$. Then, we append an array $A$ of $n^2$ blocks that we fill
in as follows:
\begin{enumerate}
\item For $i\in S$, we store $i$ in $A[h(i)]$.
\item For every $j\in [n^2]$ such that $A[j]$ has not been filled in step 1, we
indicate that no element of $S$ has hash value $j$ by storing in $A[j]$ some
$k\in [m]$ with $h(k) \neq j$.
\end{enumerate}
By our choice of $a$, no collisions can occur in step 1.
Thus, for every $i\in S$, $A[h(i)]$ contains $i$.
Step 2 makes sure that $A[h(i)]$ contains $i$ only if $i\in S$.

The query algorithm to determine whether $i\in S$ works as follows:
We read the first $l$ blocks to determine the hash function $h = h_{p,n^2,a}$ that
was used for storing the data. We then read $A[h(i)]$. If $A[h(i)] = i$, we
output ``Yes'' and otherwise ``No''. This query algorithm reads $l+1 = O(1)$ blocks.
\end{proof}
This data structure already improves on the space complexity $O(m)$ of the bit vector
method for $n\leq \sqrt{m}$. But, as promised,
we can do better.
The Perfect Hashing method by Fredman, Koml\' os and Szemer\' edi uses two
layers of hashing. The first
hash table has size $O(n)$ and each cell $i$ holds a pointer to a
quadratic hash table from the proof of Theorem \ref{quhash}. The
elements $j\in S$ that collide in slot $i$ are stored in that
quadratic hash table. We show that a universal class of hash functions
contains a function $h$ for the first layer such that the sizes of the tables
in the second layer add up to no more than $O(n)$.
\begin{thm}
\label{col_count}
Let $\mathcal H$ be a universal class of hash functions from $[m]$
to $[n]$ and let $S\subseteq [m]$ be a set of size $n$. Let
$N_i$ be the random variable that for $h\in\mathcal H$ selected
uniformly at random counts the
elements $j\in S$ such that $h(j) = i$. Then,
$$\mathbb E\left[ \sum _{i\in [n]} N_i^2 \right] < 2n$$
\end{thm}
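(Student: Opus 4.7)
The plan is to rewrite $\sum_i N_i^2$ as a count of ordered pairs that collide, and then split that count into a diagonal part (automatic) and an off-diagonal part (which we bound using universality).

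First I would observe that $N_i^2$ counts the number of ordered pairs $(j,k) \in S \times S$ with $h(j) = h(k) = i$. Summing over $i$, and using that the hash values partition the pairs according to their common image, gives
\begin{equation*}
\sum_{i \in [n]} N_i^2 = \big|\{(j,k) \in S \times S \mid h(j) = h(k)\}\big|
= n + \big|\{(j,k) \in S \times S \mid j \neq k,\ h(j) = h(k)\}\big|,
\end{equation*}
where the first term on the right counts the $n$ trivial pairs with $j = k$ (which always satisfy $h(j) = h(k)$).

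Next I would take expectations and use linearity, exactly as in the proof of Lemma \ref{lem1}. For each ordered pair $(j,k)$ of distinct elements of $S$, the indicator variable for $h(j) = h(k)$ has expectation at most $1/n$ by the universality of $\mathcal{H}$ (with $k$ playing the role of the codomain size, which here is $n$). There are $n(n-1)$ such ordered pairs, so
\begin{equation*}
\mathbb{E}\!\left[\sum_{i \in [n]} N_i^2\right] \leq n + n(n-1) \cdot \frac{1}{n} = n + (n-1) < 2n,
\end{equation*}
which is the desired bound.

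There is no real obstacle here: the only subtle point is to avoid double-counting when converting the sum of squares into a pair count, which the ordered-pairs reformulation handles cleanly. Everything else is a direct reuse of the linearity-of-expectation argument from Lemma \ref{lem1}.
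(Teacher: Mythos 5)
Your proof is correct and takes essentially the same approach as the paper: both decompose $\sum_i N_i^2$ into the diagonal contribution $n$ plus a collision count, and bound the latter via universality and linearity of expectation. The only difference is cosmetic — you count ordered pairs where the paper writes the same quantity as $2\sum_i \binom{N_i}{2}$ in terms of unordered pairs.
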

\begin{proof}
For any integer $a$ it holds that
$$a^2 = a + 2\cdot\frac{a^2 - a}2 = a + 2\cdot {a\choose 2}$$
and thus, by linearity of expectation,
\begin{align*}
\mathbb E\left[ \sum _{i\in [n]} N_i^2 \right] &=
\mathbb E \left[ \sum _{i\in [n]} N_i + \sum_{i\in [n]} 2\cdot {N_i\choose 2}\right]\\
&= \mathbb E\left[ \sum _{i\in [n]} N_i \right] + 2\cdot \mathbb E\left[ \sum_{i\in [n]}
{N_i\choose 2}\right]
\end{align*}
Since every element of $S$ hashes to exactly one value in $[n]$,
we have $\sum _{i\in [n]} N_i = n$. Now, we show that the
second term is upper bounded by $n$. Note that ${N_i \choose 2}$
is the number of pairs $(j,k)\in S^2$ with $j<k$ such that
$h(j) = h(k) = i$. Thus, $\sum _i {N_i\choose 2}$ is the number of
collisions of $h$ on $S$, i.e., the number of pairs $(j,k)$ with
$j<k$ such that $h(j) = h(k)$. Analogous to the proof of Lemma \ref{lem1},
we can use the universality of $\mathcal H$ to conclude that
$$\mathbb E\left[ \sum _{i\in [n]} {N_i\choose 2}\right] \leq {n\choose 2}\cdot\frac 1 n
=\frac{n-1}2$$
and hence,
$$\mathbb E\left[ \sum _{i\in [n]} N_i^2 \right] \leq \mathbb E[n] + 2\cdot\frac{n-1}2 = n + n-1 < 2n$$
as claimed.
\end{proof}
\begin{cor}[Perfect Hashing]
\label{perf_hash}
There is a data structure for storing subsets of $[m]$ with size $n$
using $O(n)$ blocks of size $\log m$ such that set membership queries can be
answered by reading $O(1)$ cells.
\end{cor}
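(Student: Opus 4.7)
\textbf{Proof proposal for Corollary \ref{perf_hash}.}
The plan is to combine Theorem \ref{quhash} with Theorem \ref{col_count} via a two-level hashing scheme. First, I pick a prime $p \geq m$ (with $O(\log m)$-bit representation) and the universal class $\mathcal{H}_{p,m,n}$, and use it to choose a top-level hash function $h:[m]\to[n]$. By Theorem \ref{col_count}, $\mathbb{E}\bigl[\sum_{i\in[n]} N_i^2\bigr] < 2n$, so in particular there must exist some $a\in\mathbb{Z}_p^*$ such that the corresponding $h=h_{p,m,n,a}$ satisfies $\sum_{i\in[n]} N_i^2 < 2n$; fix such an $a$.

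Next, I build the data structure in three parts. A constant-size header stores $p$, $n$, and the parameter $a$ identifying $h$, in $O(1)$ blocks. The primary table is an array $T[0\dots n-1]$ of $n$ blocks, where $T[i]$ holds a pointer (an offset in the overall bit string) to a secondary structure for the bucket $S_i = \{j\in S : h(j)=i\}$. The secondary structure for bucket $i$ is exactly the quadratic hash table from Theorem \ref{quhash}, applied to $S_i$ with target range $[N_i^2]$; by that theorem it uses $O(N_i^2) + O(1)$ blocks, each of size $\log m$. Since pointers refer to positions in a string of total length $O(n)$ blocks of $O(\log m)$ bits, they fit in $O(\log m)$ bits and thus in a single block.

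The total space is then $O(1) + n + \sum_{i\in[n]}\bigl(O(N_i^2)+O(1)\bigr) = O(n)$ by our choice of $h$. The query algorithm for ``is $x\in S$?'' reads the $O(1)$ header blocks to recover $h$, reads $T[h(x)]$ to obtain the pointer to the relevant secondary table, and then runs the query algorithm of Theorem \ref{quhash} on that table, which uses $O(1)$ cell-probes. Altogether the query makes $O(1)$ cell-probes.

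The only genuine subtlety I foresee is bookkeeping rather than anything conceptual: one must check that pointers and secondary headers really fit in $\log m$-bit blocks (which follows from $n\leq m$ and the total size being $O(n)$), and that the existence of a good $h$ for the top level and a good collision-free hash for each secondary table both follow from the \emph{expectation} bounds via the standard probabilistic argument (the minimum of a random variable does not exceed its expectation, so the desired $a$'s exist). Once those routine points are verified, the construction gives exactly the claimed $O(n)$ space and $O(1)$ cell-probe complexity.
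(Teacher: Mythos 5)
Your proposal is correct and follows essentially the same two-level hashing construction as the paper, with the same use of Theorem~\ref{col_count} to pick the top-level function and Theorem~\ref{quhash} for the per-bucket secondary tables; the only cosmetic difference is that you store explicit pointers to secondary tables, whereas the paper stores the secondary hash parameters in an array $A$ together with the cumulative-offset array $P[i]=\sum_{j<i}|S_j|^2$. One small loose end worth making explicit: applying Theorem~\ref{quhash} to a bucket $S_i$ requires $|S_i|\leq\sqrt m$, and empty buckets need a nonempty target range. The paper addresses these by first assuming $m=\omega(n)$ without loss of generality (otherwise the bit vector already achieves $O(n)$ blocks, and then $|S_i|^2\leq\sum_j|S_j|^2<2n<m$) and by defining $n_i=\max(|S_i|,1)$; your proof should state the same reductions explicitly.
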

\begin{proof}
Without loss of generality, we assume that $m = \omega (n)$, for otherwise the
bit vector method already has a space requirement of only $O(n)$.
We let $p$ be a prime that is greater than $m$.
Let $S\subseteq [m]$ with $|S|\leq n$ be the set we wish to store.
For $h\in\mathcal H_{p,m,n}$, let $S_{h,i} = \{ j\in S\mid h(j) = i\}$
and $n_{h,i} = \max \left( |S_{h,i}|, 1\right)$. By Theorem \ref{col_count},
there is some $h\in\mathcal H_{p,m,n}$ such that
$\sum _{i=0}^{n-1}(|S_{h,i}|)^2 < 2n$. Let $a$ be an element of
$\mathbb Z_p^*$ such that $h = h_a$ has that property and let
$n_i = n_{h,i}$.
For every $i\in [n]$, select $a_i\in \mathbb Z_p^*$
such that the function $h_i : x\mapsto (a_ix\bmod p)\bmod n_i^2$ has no
collisions on $S_i = S_{h,i}$. The existence of these $a_i$ follows from Lemma
\ref{lem1}.

Now, we store $a$, $p$ and $n$ in the first $l$ blocks.
We append an array $A$ with $n$ entries such that, for every
$i\in [n]$, $A[i] = (a_i,n_i)$, and an array $P$ of the same size with
$P[i] = \sum _{j=0}^{i-1}|S_j|^2$.
The array $A$ stores $n$ entries consisting of 3 blocks each, so it uses
$O(n)$ blocks. Since $\sum _{i=0}^{n-1}|S_i|^2 < 2n$, the
array $P$ requires $O(n\cdot \log m)$ bits or $O(n)$ blocks.

We construct for each $i$ a quadratic hash table for $S_i$ as in the
proof of Theorem \ref{quhash} and concatenate all the hash tables in
ascending order for $i$ (we leave out the first $l$ blocks of each table
containing the information about the hash function,
since that is already stored in array $A$). The entries in array $P$ were
selected so that $P[i]$ is the starting position of the $i$th hash table
in this concatenation. We call the concatenated
table $T$ and append it to the data structure. That table requires
$O\left( \sum _{i=0}^{n-1}|S_i|^2\right) = O(n)$ blocks of space.
Thus, we use $O(n)$ blocks in total.
See Figure \ref{fig_perfect} for an illustration of this data structure.

Now, queries whether $j\in S$ are answered in the following way:
\begin{enumerate}
\item Read the first $l$ blocks to find out which hash function
$h = h_{p,m,n,a}$ was used for the primary hash table.
\item Compute $i = h(j)$ and retrieve the parameters $(a_i, n_i) = A[i]$ of
the secondary hash table we need to access and its starting point $s_i = P[i]$.
\item Read $j' = T[((a_i\cdot j\bmod p)\bmod n_i^2) + s_i]$. If $j = j'$,
answer ``Yes'' and otherwise ``No''.
\end{enumerate}
This algorithm requires reading $l + \log (2n)/\log m + 4 = O(1)$ blocks.
\end{proof}

\begin{figure}[h]
\centering
\includegraphics[scale=0.6]{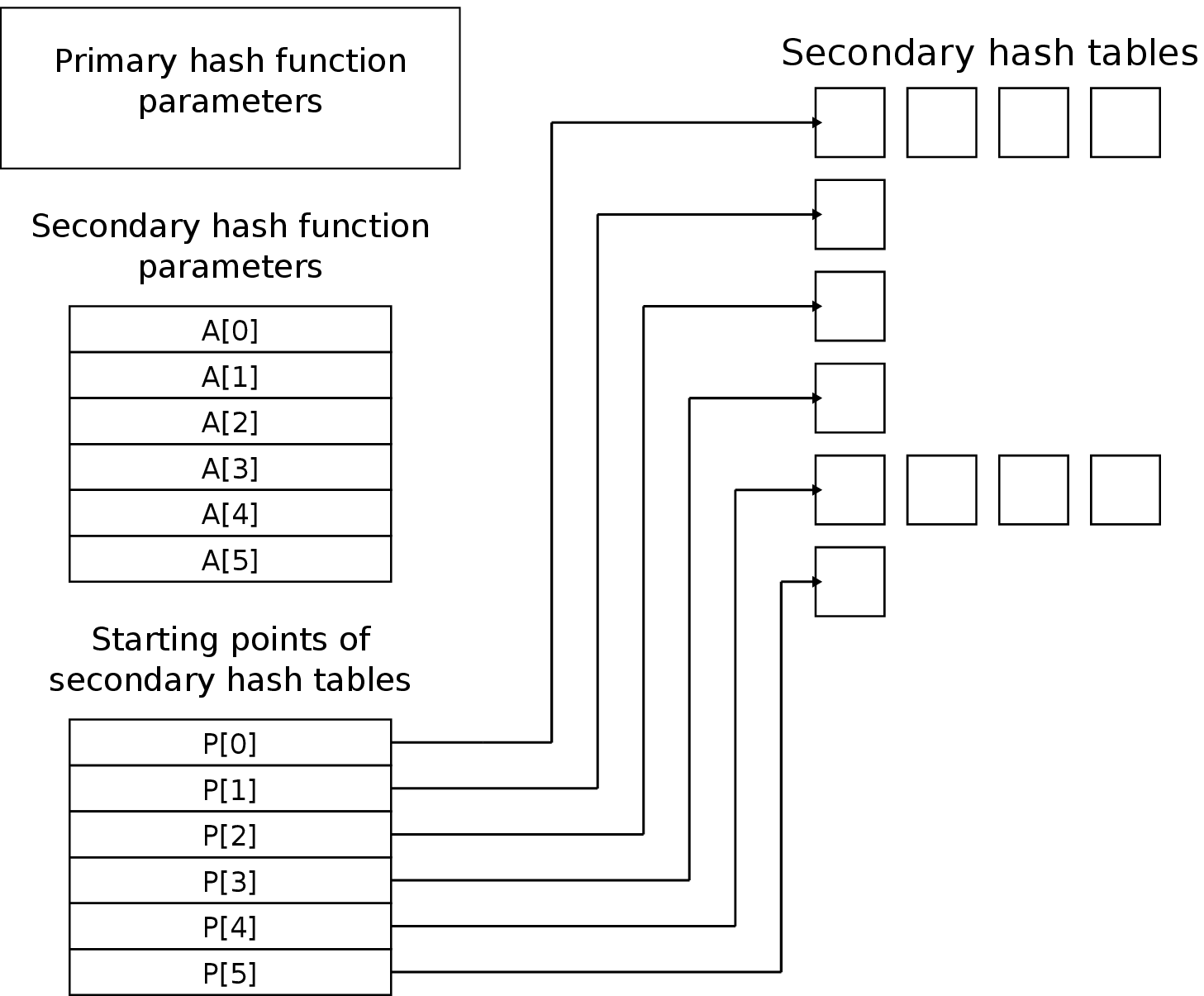}
\caption{A Perfect Hash table with $n = 6$. The $i$th entry of the table
labeled ``Starting points of secondary
hash tables'' stores the beginning of the table that stores the
elements of $P$ with hash value $i$. The $i$th entry of the table labeled
``Secondary hash function parameters'' stores the parameters for that table.
From the sizes of the secondary tables, we can see that two elements collide
at hash value 0 and at hash value 4 and that no other collisions occur.}
\label{fig_perfect}
\end{figure}
\begin{rem}
\label{rank_hash}
The Perfect Hashing scheme also helps us solve the dictionary problem.
If the data associated with each integer fits in $O(1)$ blocks, we can simply
store this data next to the integer in the hash table without affecting the
asymptotic time and space complexity.
We will make use of this fact in Section \ref{pred_bound}
to prove a lower bound on the time complexity of data structures
for the predecessor search problem, given a certain upper limit for the space,
by storing the following information in the dictionary.

Let the rank of some $i\in [m]$ in $S$ be defined
by $\text{rank}_S(i) = |\{ j\in S\mid j\leq i\}|$, the number of elements
in $S$ that are not greater than $i$.
If we allow two cells for each entry in the hash table,
we can store in addition to each $i\in S$ its rank $\text{rank}_S(i)$.
Note that this is \em not \em an efficient data structure for
the rank problem where we want to store a set $S$ such that we can
determine the rank in $S$ of \em any \em $i\in [m]$.
\end{rem}
The following lemma shows that (for $n$ small enough)
the Perfect Hashing scheme is asymptotically
optimal in the classical $(\log m)$-bit cell-probe model.
That is, its space complexity differs from the
information-theoretic minimum only by a constant factor while its
time complexity is constant.
\begin{lem}
\label{optimal}
Let $n\leq m^d$ for some constant $d$ with $0 < d < 1$. Then, the
minimum number of bits required for storing subsets of $[m]$ with
size at most $n$ is
$$\log \sum_{i=0}^n{m\choose i} \geq n(1-d)\log m = \Omega (n\log m)$$
and the minimum number of $(\log m)$-cells is therefore
$\Omega (n)$.
\end{lem}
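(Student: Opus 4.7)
The plan is to give a standard information-theoretic counting argument. Any encoding $\phi$ of subsets of $[m]$ of size at most $n$ must be injective, since distinct sets would otherwise be indistinguishable to the query algorithm. The number of such subsets is exactly $\sum_{i=0}^{n}\binom{m}{i}$, so the image of $\phi$ must contain at least that many bit strings, and the maximum length of such a string is therefore at least $\log \sum_{i=0}^{n}\binom{m}{i}$. This immediately gives the first inequality once I interpret ``minimum number of bits required'' in the expected way.

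For the quantitative bound $\log\sum_{i=0}^{n}\binom{m}{i}\geq n(1-d)\log m$, I would first drop all but the largest term and argue
\[ \sum_{i=0}^{n}\binom{m}{i}\geq\binom{m}{n}. \]
Then I would apply the standard inequality $\binom{m}{n}\geq (m/n)^{n}$, which follows from writing $\binom{m}{n}=\prod_{i=0}^{n-1}\frac{m-i}{n-i}$ and noting that each factor is at least $m/n$ since $m\geq n$. Taking logarithms,
\[ \log\binom{m}{n}\geq n(\log m-\log n). \]
The hypothesis $n\leq m^{d}$ gives $\log n\leq d\log m$, so $\log m-\log n\geq (1-d)\log m$, and combining these yields the desired $n(1-d)\log m$ lower bound.

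Finally, since each $(\log m)$-cell holds $\log m$ bits, dividing the bit lower bound by $\log m$ gives at least $n(1-d)=\Omega(n)$ cells, which is the last claim. There is no real obstacle here beyond picking the binomial estimate cleanly; the only minor care is to state explicitly that $d$ is a constant so that $1-d$ can be absorbed into the $\Omega(\cdot)$, and to observe that the argument works for the worst case over subsets even though the bound is phrased in terms of the total count, because the encoding length must accommodate all inputs simultaneously.
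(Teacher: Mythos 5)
Your proof is correct and follows essentially the same route as the paper: a counting/injectivity argument to get $\log\sum_{i=0}^{n}\binom{m}{i}$ as the bit lower bound, then $\sum_{i=0}^{n}\binom{m}{i}\geq\binom{m}{n}\geq(m/n)^{n}$, followed by $\log n\leq d\log m$ to conclude. The only cosmetic difference is that you explicitly justify $\binom{m}{n}\geq(m/n)^{n}$ and spell out the injectivity argument, which the paper leaves implicit.
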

\begin{proof}
Since the number of sets $S\subseteq [m]$ with $|S|\leq n$ is
$\sum _{i=0}^n{m\choose i}$, the minimum number of bits required for
a data structure that stores such sets is $\log \sum_{i=0}^n{m\choose i}$.
If $n \leq m^d$, we have
\begin{align*}
\log \sum_{i=0}^n{m\choose i}\geq \log {m\choose n}
\geq \log\left(\left( \frac m n\right) ^n\right)
&= n(\log m - \log n)\\
&\geq n(\log m - d\log m)\\
&= n(1-d)\log m\\
&= \Omega (n\log m)
\end{align*}
as claimed.
\end{proof}
However, this result does not show that the Perfect Hashing scheme is
asymptotically optimal in the \em bit-probe \em model.
In the bit-probe model, the Perfect Hashing scheme has a time complexity of
$O(\log m)$ bit-probes and a space complexity of $O(n\log m)$ bits. The space
complexity is asymptotically optimal.
But can we reduce the time complexity without increasing the space complexity?
In Section \ref{quantum_set}, we will see that the answer is ``no'' as long as
we do not allow \em two-sided error\em . The Perfect Hashing
scheme is even asymptotically optimal in the quantum bit-probe model,
both exact and with one-sided error.
However, if we allow two-sided error, we can be faster: In \cite{bmrv},
Buhrman, Miltersen, Radhakrishnan and Venkatesh describe a classical data structure
that uses space $O((n/\epsilon ^2)\log m)$ and answers membership queries
with two-sided error probability at most $\epsilon$ using only one bit-probe.
By setting $\epsilon$ to some small constant, we get a
data structure for the set membership problem with time complexity 1
and (up to a constant factor) minimal space complexity.

\subsection{Predecessor Search: Beame \& Fich}
\label{classical_pred}
If we want to be able to
quickly find the predecessor of $x\in [m]$ in a stored set $S\subseteq [m]$,
i.e., the largest $y\in S$ such that $y < x$, the data structures
described in Section \ref{mem_class} are not very helpful. A better
solution was found by Beame and Fich in \cite{bf}. Their data
structure can store sets $S\subseteq [m]$ with $|S|\leq n$ in
$O(n^2\log n/\log\log n)$ blocks of size $\log m$ bits
and answers predecessor queries with
$$O\left(\min \left( \frac{\log\log m}{\log\log\log m},
\sqrt{\frac{\log n}{\log\log n}}\right)\right)$$
cell-probes.
In the same paper, they also proved a matching lower bound in the classical
deterministic cell-probe model for the time complexity under the condition
that the space complexity is $O(n^2\log n/\log\log n)$.
A simpler proof was given by Sen and Venkatesh in \cite{qu_cell_probe}
for a restricted version of quantum cell-probe algorithms. This restricted model
still encompasses the classical probabilistic and deterministic
cell-probe model.
The proof by Sen and Venkatesh is described in Section \ref{pred_bound} of this survey.

The data structure invented by Beame and Fich
needs to be combined with other data structures to be efficient, namely,
\em X-fast Tries \em and \em Fusion Trees\em . But first we will focus on the
contributions by Beame and Fich.
One building block for their data structure is the parallel hash
table which, given a large enough block-size, supports membership queries
to multiple sets with a constant number of queries.
\begin{lem}[Parallel Hash Table]
\label{par_hash}
Let $q$ be a positive integer and $w = q\log m$.
There is a data structure that stores $q$ sets $S_0,S_1,\dots ,S_{q-1}
\subseteq [m]$ each of size at most $n$ using $O((2n)^q) = O(2^{(\log n +1)q})$ blocks of
size $w$ such that every $q$-tuple of queries of the form
queries $(x_0\in S_0 ?,x_1\in S_1?,\dots ,x_{q-1}\in S_{q-1}?)$ can be
answered with a constant number of cell-probes, independent of
$q$.\footnote{This is possible since $q$ is absorbed into the block-size.}
\end{lem}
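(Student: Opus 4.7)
The plan is to store the $q$ sets in a single ``product'' hash table of $O((2n)^q)$ wide cells, so that $O(1)$ wide-cell-probes suffice to retrieve all the candidate preimages needed to answer the $q$ queries at once. The exponential blow-up in space is exactly what lets the query time collapse to $O(1)$ independently of $q$, and a wide cell of $w = q\log m$ bits is just large enough to hold a $q$-tuple of elements of $[m]$.

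Concretely, I would apply Perfect Hashing (Corollary \ref{perf_hash}) to each $S_i$ to obtain a perfect hash function $\phi_i : [m] \to [R_i]$ with $R_i < 2n$, injective on $S_i$; existence is guaranteed by Theorem \ref{col_count}. Recall that $\phi_i$ decomposes as an outer universal hash $h_i^{(1)} : [m] \to [n]$ with constant-size parameters, followed by a secondary hash whose parameters $(a_{i,j}, n_{i,j}, P_i[j])$ depend on $j = h_i^{(1)}(x_i)$. I would then build two wide-cell product tables: an auxiliary $A$ of $n^q$ cells, where position $(p_0, \ldots, p_{q-1}) \in [n]^q$ packs the $q$-tuple of secondary parameters $(a_{i,p_i}, n_{i,p_i}, P_i[p_i])_i$; and a final $T$ of $(2n)^q$ entries, each occupying $O(1)$ wide cells, where position $(r_0, \ldots, r_{q-1}) \in [2n]^q$ packs the $q$-tuple $(y_0, \ldots, y_{q-1}) \in [m]^q$ together with a $q$-bit validity flag, set so that if $r_i \in \phi_i(S_i)$ then $y_i$ is the unique element of $S_i$ with $\phi_i(y_i) = r_i$ and the $i$th validity bit is on, and otherwise the bit is off. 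The $q$ outer-hash parameters sit in $O(1)$ further wide cells at the front.

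To answer a query $(x_0, \ldots, x_{q-1})$ I would: (a) read the outer parameters and compute all $p_i := h_i^{(1)}(x_i)$ with no further probes; (b) probe $A[(p_0, \ldots, p_{q-1})]$ once to read every secondary parameter and compute all $r_i := \phi_i(x_i)$, treating an empty $h_i^{(1)}$-bucket as an automatic ``No'' for the $i$th query; (c) probe $T[(r_0, \ldots, r_{q-1})]$ in $O(1)$ wide-cell probes to retrieve $(y_0, \ldots, y_{q-1})$ together with the validity flags; and (d) answer ``Yes'' for the $i$th query iff validity bit $i$ is on and $y_i = x_i$. Correctness follows from injectivity of $\phi_i$ on $S_i$; the total cell-probe count is an absolute constant, and the total space is $O(n^q + (2n)^q) = O((2n)^q)$. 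The main obstacle is step (b): the $q$ secondary-parameter lookups depend on the individual $x_i$ and would naively cost $\Omega(q)$ probes, so the product-indexed layout of $A$ is exactly the device that trades space for parallelism, replicating each parameter exponentially many times so that the relevant $q$-tuple always sits in a single wide cell, and its $n^q$ cost is absorbed into $T$'s $(2n)^q$ budget.
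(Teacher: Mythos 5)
Your proposal is correct and follows essentially the same construction as the paper: product-indexed wide-cell arrays of sizes $n^q$ and $(2n)^q$ that replicate the FKS secondary parameters and secondary-table entries across all $q$-tuples of bucket indices, so that a constant number of wide probes recovers all $q$ lookups simultaneously. The only cosmetic differences are that you merge the paper's two $n^q$-size auxiliary arrays $A'$ and $P'$ into a single array, and you add explicit validity flags where the paper instead relies on the dummy entries already stored by the underlying quadratic hash tables.
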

\begin{proof}
To represent a collection $S_0,S_1,\dots ,S_{q-1}$ of sets, we first
create for every $i$ the arrays $A$, $P$ and $T$ from the proof of Corollary
\ref{perf_hash} (with block-size $\log m$).
Let $p_i$, $a_i$ and $n_i$ be the parameters of the
primary hash function $h_i$ used in storing $S_i$
and let $A_i$ and $P_i$ be the first two arrays in the resulting Perfect Hash
table for $S_i$. Let $T_i$ be the concatenation of the secondary tables of
the $i$th Perfect Hash table.

The first things that we store are $p_i,a_i,n_i$ for
every $i$. Using $(q\log m)$-size blocks, this requires $O(1)$ blocks.
Let $j_0,\dots ,j_{q-1}\in [n]$ and $j = j_{0}\circ \dots \circ j_{q-1}$ be the concatenation
of the binary representations of these numbers. We construct arrays
$A'$ and $P'$ of size $n^q$ such that for each such $j$,
\begin{align*}
A'[j] &= (A_0[j_0], A_1[j_1], \dots , A_{q-i}[j_{q-1}])\\
P'[j] &= (P_0[j_0], P_1[j_1], \dots , P_{q-i}[j_{q-1}])
\end{align*}
Each entry in these arrays can be stored in a constant number of
$(q\log m)$-blocks, so the total size of each of these tables is
$O(n^q)$. Furthermore, we construct an array $T'$ of size $(2n)^q$
such that for $j_0,\dots ,j_{q-1}\in [2n]$ and $j$ their concatenation
$$T'[j] = (T_0[j_0], \dots , T_{q-1}[j_{q-1}])\text .$$
Again, each entry in the table can be stored in a constant number of
$(q\log m)$-blocks. The size of the whole table is thus $O((2n)^q)$. The
data structure consists of the arrays $A'$, $P'$ and $T'$ and thus
requires $O((2n)^q)$ blocks of size $q\log m$.

Let us now see how we answer $q$ parallel queries: We have
$x_0,x_1,\dots ,x_{q-1}\in [m]$ and we want to answer the questions
$x_i\in S_i?$ for every $i = 0,\dots ,q-1$. We read a constant number of blocks
to find $p_i$, $a_i$ and $n_i$ for $i = 0,\dots ,q-1$ and can now
compute the hash functions $h_i : x\mapsto (a_ix\bmod p_i)\bmod n_i$ that were
used in the Perfect Hashing tables.
We let $j_i = h_i(x_i)$ and $j = j_0\circ \dots \circ j_{q-1}$. We read
$A'[j]$; let $(a_0',n_0'),(a_1',n_1'),\dots ,(a_{q-1}',n_{q-1}')$ denote the
content.
We also read $P'[j]$ and let $s_0,s_1,\dots ,s_{q-1}$ be
the values stored there. Now let $k_i = (a'_ix\bmod p)\bmod n_i'^2 + s'_i$
and $k = k_0\circ k_1\circ \dots \circ k_{q-1}$. Finally, we read $T'[k]$. Thus,
we get for each $i = 0,\dots ,q-1$ the value $T_i[k_i]$ where
$k_i$ is the position of $T_i$ that we would read when we search for $x_i$ in the
Perfect Hash table of $S_i$. Hence, we can answer all the
queries. We have to read a constant number of $(q\log m)$-blocks for
this algorithm.
\end{proof}

We also need the concept of \em tries\em .
\begin{defin}
Let $\Sigma$ be some alphabet. A \em trie \em over $\Sigma$ is a
tree where each node is some word from $\Sigma ^*$ and
each edge is labeled with a letter from $\Sigma$ such that
the following conditions are satisfied:
\begin{itemize}
\item The root is the empty word.
\item If $u$ is a non-root node and $v$ its parent,
then there is some $\sigma\in \Sigma$ such that
$\sigma$ is the label of the edge between $u$ and its parent
and $u = v\sigma$.
\end{itemize}
We say that a trie \em stores \em a word $\delta$ if it is one of the trie's leaves.
Similarly, we say that it stores a set of words if it stores exactly those
words that are in the set.
\end{defin}
If we think of a trie as a deterministic
automaton where the leaves are the accepting states and the root
is the starting state, the set of
words that the trie stores is the language accepted by it.
The Beame \& Fich data structure
uses tries to store sets $S\subseteq [m]$ where we view elements of $S$ as
words over the alphabet $\left[ m/2^c\right]$ for some $c$. At a
certain subset of the nodes we store information that helps in
finding predecessors. The nodes that we select for this
are determined by a property that is called \em heaviness \em which is defined
as follows:
\begin{defin}
Let $S$ be a set of $s$ strings of length $L$ over the alphabet $[m]$
with $0 < s \leq n$.
Let $T$ be the trie of depth $L$
that stores $S$. A node $u$ in $T$ is called $n$-\em heavy\em , or simply
\em heavy \em since $n$ is always understood from the context, if
the subtree rooted at $u$ has at least $\max (s/n^{1/L}, 2)$ leaves.
\end{defin}
The root is always heavy and a parent of a heavy node is heavy as well.
See Figure \ref{fig:heavy} for an illustration. The depicted trie over the
alphabet $\{ 0,1 \}$ stores
$S = \{ 0010, 0011, 0100, 1101, 1110, 1111 \}$, so we have
$s = 6$ and $L = 4$. The black nodes are $n$-heavy for $n = 16$, since
we have $s/n^{1/L} = 6/16^{1/4} = 6/2 = 3$ and thus, a node is $n$-heavy
if and only if its subtree has at least $3$ leaves. For $n < 16$, a
node needs more than $3$ leaves in its subtree to be heavy. For such $n$,
only the root is $n$-heavy in our illustration.

\begin{figure}[h]
\centering
\includegraphics{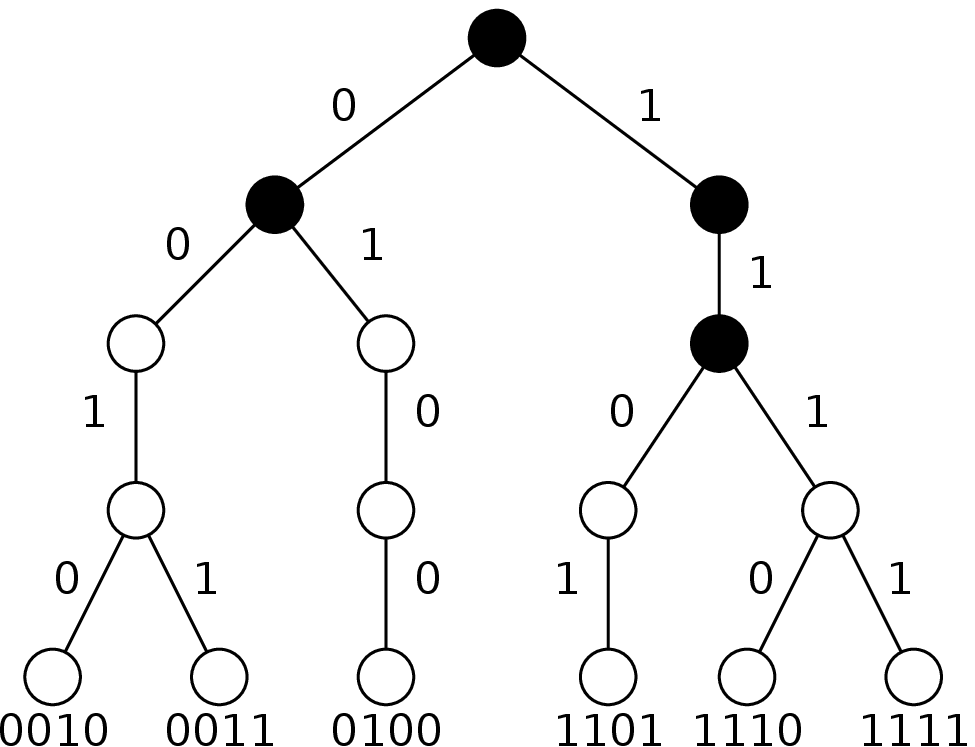}
\caption{A trie over the alphabet $\{ 0,1 \}$ storing a set $S$ of $s = 6$
strings of length $L = 4$. For $n = 16$, the $n$-heavy nodes are colored black.}
\label{fig:heavy}
\end{figure}

Using parallel hashing, we can construct a data structure to
store a trie that allows to search for the longest heavy prefix
of a string with a constant number of cell-probes.
\begin{lem}
\label{par_trie}
Let $T$ be a trie over the alphabet $[2^k]$ of depth $L$ with at most $n$
leaves such that $2L(L-1)\leq\log n$.  There is a data structure with block-size
$b = \Theta (kL)$ that can store $T$ in $O(n)$ blocks
such that, given a string $x=x_1x_2\dots x_{k}$ with $x_i\in [2^k]$, the
longest proper prefix $x'$ of $x$ such that $x'$ is a heavy node can be found
with $O(1)$ cell-probes.
\end{lem}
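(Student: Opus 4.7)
The plan is to store, for each depth $d$, the set $H_d$ of $n$-heavy nodes of $T$ at depth $d$ in a parallel hash table (Lemma \ref{par_hash}) and to answer the longest-heavy-proper-prefix query for $x = x_1 \dots x_L$ by issuing a single parallel lookup that simultaneously tests whether each of the $L-1$ proper prefixes of $x$ is heavy; the algorithm then returns the deepest prefix that was flagged.

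First I would bound $|H_d|$. Let $s \leq n$ be the number of leaves of $T$. Distinct heavy nodes at the same depth root leaf-disjoint subtrees, each of which contains at least $\max(s/n^{1/L}, 2) \geq s/n^{1/L}$ leaves. It follows that $|H_d| \leq n^{1/L}$ for every $d$, and hence $\prod_{d=0}^{L-1} |H_d| \leq n^{L \cdot 1/L} = n$. By padding each length-$d$ word with a fixed symbol up to length $L$, I view every $H_d$ as a subset of the common universe $[2^{kL}]$, whose elements need $kL$ bits.

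Next I would apply Lemma \ref{par_hash} with $q = L$ and sets $H_0, H_1, \dots, H_{L-1}$ to obtain a data structure of block-size $w = \Theta(kL)$ that supports the simultaneous membership query $(y_0 \in H_0?, \dots, y_{L-1} \in H_{L-1}?)$ in $O(1)$ cell-probes. To answer a query on $x$, the algorithm computes the padded prefixes $y_d = x_1 \dots x_d$, packs them as the parallel-query input, issues one parallel lookup, reads off which prefixes are heavy from the returned bit-vector, and outputs the deepest flagged prefix. Since Lemma \ref{par_hash} already charges $O(1)$ cell-probes per parallel query, the total cost is $O(1)$ cell-probes.

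The main obstacle is the space calculation. The bound quoted in Lemma \ref{par_hash} is $O((2n^{1/L})^L) = O(2^L n)$, which is superlinear. To recover the claimed $O(n)$, I would trace through the construction of the parallel hash table and argue that the collision-resolution arrays $A'$, $P'$, $T'$ really have size proportional to $\prod_d |H_d| \leq n$ rather than $(2\max_d |H_d|)^L$, since each index factor $[2|H_d|]$ can be replaced by $[O(|H_d|)]$. The hypothesis $2L(L-1) \leq \log n$, equivalently $n^{1/L} \geq 2^{2(L-1)}$, is what balances the parameters so that this tighter bookkeeping indeed collapses to $O(n)$ blocks and that each cell --- holding an $L$-tuple of candidate heavy-node representatives together with the constant-overhead hash-function parameters inherited from Lemma \ref{par_hash} --- still fits in $\Theta(kL)$ bits. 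Pinning down this accounting is where I expect the bulk of the technical work.
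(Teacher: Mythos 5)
Your proposal diverges from the paper's construction in a way that breaks the stated block-size, and your space accounting does not close the gap you identify. The paper does \emph{not} store the sets $H_d$ of heavy nodes as length-$d$ (or padded length-$L$) words over a large universe. Instead it stores, for each depth $d$, the set of \emph{single alphabet symbols}
$$S_d = \{ z\in [2^k]\mid yz\text{ is a heavy node at depth }d\text{ for some }y\}\text ,$$
which lives in the small universe $[2^k]$. With universe size $2^k$, Lemma \ref{par_hash} with $q$ sets gives block-size $q\cdot k = \Theta(kL)$, as required. In your version the universe is $[2^{kL}]$, so Lemma \ref{par_hash} would give block-size $q\cdot kL = \Theta(kL^2)$, which contradicts the lemma's claim of $\Theta(kL)$; and since $L$ is an unbounded parameter (as large as roughly $\sqrt{(\log n)/2}$ under the hypothesis), this is not a constant-factor slack. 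The spot where your write-up glosses over this is the final claim that ``an $L$-tuple of candidate heavy-node representatives'' fits in $\Theta(kL)$ bits: an $L$-tuple of elements of $[2^{kL}]$ needs $\Theta(kL^2)$ bits.

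The space bound also does not collapse to $O(n)$ by replacing $[2|H_d|]$ with $[O(|H_d|)]$. The factor of $2$ per coordinate is intrinsic to the sizes of the secondary tables inside the parallel hash table of Lemma \ref{par_hash} and does not disappear by tighter indexing, so with $q = L$ you are stuck with $\Omega(2^L n)$ blocks. What the paper actually does is (i) store only the nontrivial depths, so $q\leq L-2$, and (ii) invoke the hypothesis $2L(L-1)\leq\log n$ to bound the exponent: Lemma \ref{par_hash} with set-size bound $n^{1/L}$ gives $O\left( 2^{((\log n)/L + 1)(L-2)}\right)$ blocks, and the hypothesis shows $((\log n)/L + 1)(L-2)\leq\log n$, hence $O(n)$ blocks. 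So the route you sketch needs two corrections you do not make: switch from prefix sets $H_d$ to symbol sets $S_d$ to fix the block-size, and reduce $q$ and use the numerical hypothesis (rather than tighter-indexing in the hash arrays) to fix the space.
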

\begin{proof}
For $d = 1,\dots ,L-2$ let $S_d$ be defined as
$$S_d = \{ z\in [2^k]\mid yz\text{ is a heavy node at depth }d\text{ for some }y\text .\}\text .$$
Since there are at most $n^{1/L}$ heavy nodes at each depth, we have $|S_d|\leq n^{1/L}$.
We store $S_1,\dots ,S_{L-2}$ in a parallel hash table, as in
Lemma \ref{par_hash}. The block-size is $qk = (L-1)k$ and we need
$O\left( 2^{((\log (n)/L) + 1)(L-2)}\right)$
blocks to store the parallel hash table. We have
\begin{align*}
\left( \frac{\log n}{L} + 1\right) (L-2) < \left(\frac{\log n}{L} +2\right)(L-1)
&= \frac{\log (n) (L-1) + 2L(L-1)}{L}\\
&\leq \frac{\log (n) (L-1) + \log n}{L}\\
&= \log n
\end{align*}
and hence we need $O(2^{\log n}) = O(n)$ blocks of memory.

To find the longest heavy proper prefix of $x = x_1x_2\dots x_{k}$,
we evaluate the queries $x_1\in S_1?, x_2\in S_2?,\dots ,x_{k-1}\in S_{k-1}?$
in parallel. This requires only $O(1)$ cell-probes. Since the predecessor of a
heavy node is a heavy node as well, there must be some index $i$
with $0\leq i\leq k-1$ such that, for all $j\leq i$, we have
$x_j\in S_j$ and for $j> i$, $x_j\not\in S_j$.
If $i = 0$, the longest heavy prefix of $x$ is the empty string.
Otherwise, the string $x' = x_1\dots x_i$ is the correct answer.
\end{proof}
The data structure by Beame and Fich is constructed recursively;
in the proof of the following lemma we describe that construction.
\begin{lem}
\label{b+f}
Let $a,c,u,L,n,s$ be integers such that $a,c\in [u+1]$, $n\geq u^u$,
$1\leq L\leq u$ and $s\leq n^{a/u}$. For any $b \geq \left( 2(u-1)^2-1\right)Lu^c$
there is a data structure using block-size $b$ that allows
to store a set of $s$ integers from $\left[ 2^{Lu^c}\right]$ in $O(sn/u^2)$
blocks and allows to answer predecessor queries with $O(a+c)$ cell-probes.
\end{lem}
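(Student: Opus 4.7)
The plan is proof by induction on $a+c$. For the base case $a+c = 0$, the constraint $s \leq n^{0/u} = 1$ forces $|S| \leq 1$; I store this single element (or a null marker) explicitly in one block, and a predecessor query reads it and performs one comparison, using $O(1) = O(a+c)$ cell-probes, with space $O(1) = O(sn/u^2)$.

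For the inductive step, I interpret each integer in $S$ as a length-$L$ word over the alphabet $[2^{u^c}]$ and let $T$ be the corresponding trie. The hypotheses $L \leq u$ and $n \geq u^u$ imply $2L(L-1) \leq \log n$, so Lemma \ref{par_trie} applies with $k = u^c$: it produces a structure of $O(s)$ blocks of size $\Theta(Lu^c)$ that, in $O(1)$ probes, returns the deepest heavy proper prefix $x'$ of any query string $x = x_1 x_2 \cdots x_L$. At each heavy node $x'$ I attach an auxiliary table recording, for each outgoing edge $\sigma$, either a pointer to a recursively-built predecessor structure on the at most $s/n^{1/L}$ elements of $S$ hanging below the child $x'\sigma$ (when $x'\sigma$ is itself heavy), or the largest and smallest elements of $S$ in the subtree rooted at $x'\sigma$ together with their in-set neighbours (when $x'\sigma$ is light).

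Given a query $x$, I locate $x'$ in $O(1)$ probes and branch on whether $x'\sigma$ (for $\sigma = x_{|x'|+1}$) is heavy or light. If heavy, I recurse on the subtree below $x'\sigma$, which is an instance with at most $s/n^{1/L} \leq n^{(a-1)/u}$ elements (using $L \leq u$, so that $n^{a/u - 1/L} \leq n^{(a-1)/u}$), so $a$ drops by one while $c$ is unchanged and the parameter sum decreases. If light, the leaves under $x'\sigma$ form a set of at most $\lceil s/n^{1/L}\rceil$ strings whose distinguishing suffixes of $u^c$ bits give a smaller predecessor instance on which I recurse with $c$ replaced by $c-1$; its answer is combined with the boundary data at $x'$ to answer the original query. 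Either way the recursive subproblem has parameter sum $a+c-1$, so by induction it costs $O(a+c-1)$ cell-probes, for a total of $O(a+c)$.

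The main obstacle is the simultaneous bookkeeping of space and block-size. For space, I sum $O(s_i)$ over heavy nodes $i$ using that each level of $T$ contains at most $n^{1/L}$ heavy nodes with sets shrunk by a factor $n^{1/L}$; the resulting geometric series is bounded by $O(sn/u^2)$, with the hypothesis $n \geq u^u$ absorbing a per-level constant into the $n/u^2$ factor. For block-size, the top level needs $\Theta(Lu^c)$-bit blocks from Lemma \ref{par_trie}, whereas the recursive instance with $c-1$ demands only $(2(u-1)^2 - 1)Lu^{c-1}$-bit blocks; since this is smaller than $b$ by a factor of $u$, the constant $2(u-1)^2 - 1$ provides exactly enough headroom for one level of parallel-hash overhead, which is what closes the induction.
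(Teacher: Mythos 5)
The central gap is that your case analysis reverses heavy and light. You write that when $x'\sigma$ is \emph{heavy} you recurse on the subtree below it and obtain ``an instance with at most $s/n^{1/L}$ elements.'' But a heavy node is by definition one whose subtree has \emph{at least} $\max(s/n^{1/L},2)$ leaves, so there is no useful upper bound on the size of a heavy child's subtree --- it could contain almost all of $S$, in which case $a$ does not decrease and the induction fails. The $a$-reduction in the paper goes the other way: one recurses on the suffixes $S'_y$ hanging under a \emph{non-heavy} child $y$ of the heavy node $x'$, and precisely because $y$ is non-heavy the number of such suffixes is less than $s/n^{1/L}\leq n^{(a-1)/u}$. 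Also note that $x'$ is defined as the deepest heavy proper prefix of $x$, and $x'\sigma$ with $\sigma = x_{|x'|+1}$ is a longer prefix of $x$; so $x'\sigma$ is heavy only in the edge case $x'\sigma = x$, and in the common case $x'$ has a single (necessarily heavy) child that is \emph{not} a prefix of $x$. The paper handles that case without any recursion, by reading stored $\min_S(x')$, $\max_S(x')$ and $\text{pred}(\min_S(x'),S)$.

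Your $c$-reduction is also misdescribed. In the paper the recursion that decreases $c$ serves a different purpose: when $x'$ has at least two children and the letter $x_{d+1}$ does not descend into any non-heavy child, one must find the predecessor of $x_{d+1}\in[2^{u^c}]$ inside the set of \emph{labels of children of $x'$}. This set $S_{x'}\subseteq[2^{u^c}]$ is stored in a recursive predecessor structure with $L$ set to $1$, which then gets rewritten to $L=u$, $c=c-1$. Your phrase ``distinguishing suffixes of $u^c$ bits'' under a light child mixes this alphabet-level recursion with the suffix recursion (which is the $a$-reduction) and does not actually produce a set over the smaller universe $[2^{u^{c-1}}]$. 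Finally, the base case should be $a=0$ for all $c$ (plus $L=1,c=0$) rather than only $a+c=0$; your ``geometric series over levels'' elides the detailed per-part bit accounting the paper does to reach $O(sn/u^2)$ blocks; and the constant $2(u-1)^2-1$ is chosen so that the block-size requirement is monotone non-increasing as the recursion descends, not to give ``headroom'' for hash-table overhead.
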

\begin{proof}
There are two different base cases for the data structure:
The first base case is $a = 0$.
We then have $s\leq 1$, so the sets that we store are singletons or empty.
We can store the single element in such a set using one block of
space.
The second base case is $L = 1$ and $c = 0$. Then, the integers in the set that we store come
from the universe $\{ 0,1 \}$, so we can store it as a 2-bit characteristic
vector.

If $L = 1$ and $c\geq 1$, we replace $L$ by $L' = Lu$ and
$c$ by $c' = c-1$. This change does not affect the value of $Lu^c$ and
thus it does not affect the universe size, block-size or any of the
other parameters. In the recursive instances, either $L$ will be set
to 1 (and we apply the substitution just described, if $c\geq 1$), or $a$ is reduced
by one, until we reach one of the base cases.

Now we assume that $a>0$, $L > 1$ and $c\geq 0$ and let $S$ be a subset
of $\left[ 2^{Lu^c}\right]$ of size at most $s$. Let $T_0$ be the
binary trie of depth $Lu^c$ that stores the set $S$. For $j$ such that
$0<j\leq c$, let $T_j$ be the trie of depth $Lu^{c-j}$ that consists of the
nodes in $T_0$ at all levels divisible by $u^j$; the parent relation in $T_j$
is defined as follows: For a node $v$ at level $k$ in $T_j$ (and hence at level
$u^{jk}$ in $T_{j-1}$), we let the parent of $v$ be the ancestor of $v$ in
$T_{j-1}$ at level $u^{j(k-1)}$. The trie $T_j$ encodes the set $S$ if we view
it as a set of strings over $\left[ u^j\right]$ of length $Lu^{c-j}$. For
$j = c$, we have a trie $T_c$ of depth $L$ that stores $S$ as a set of
length-$L$ strings over $\left[ u^c\right]$. Figure \ref{fig:reduce_tries}
illustrates the construction of $T_0$ and $T_1$ for $L = u = 2$ and $c = 1$.
Both tries store the set $S = \{ 1, 2, 9, 13 \}\subseteq [16]$. In $T_0$, these
numbers are encoded in their binary representations as bit strings of length 4,
in $T_1$ they are encoded in their base-4 representation as strings of length
$2$ over the alphabet $[4]$.

\begin{figure}[h]
\centering
\includegraphics{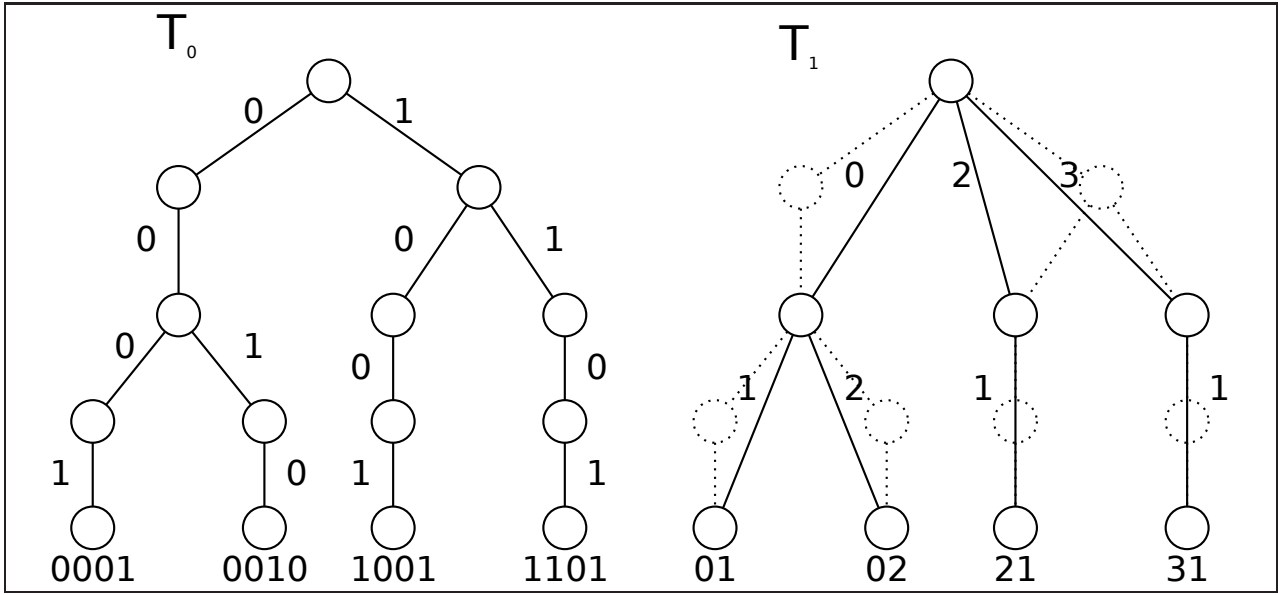}
\caption{An illustration of the construction of the tries $T_0$ and $T_1$.}
\label{fig:reduce_tries}
\end{figure}

For every node $v\in T_c$, we let
$\min _S(v)$ denote the minimal element of $S$ with prefix $v$ and
$\max _S(v)$ the maximal element with prefix $v$.
Every recursive instance of the data structure will be associated
with a node of one of the tries $T_0,\dots ,T_c$ for analysis purposes.

Our data structure consists of the following parts:
\begin{enumerate}
\item An instance of the data structure from Lemma \ref{par_trie} storing $T_c$.
\item For every heavy node $v$ in $T_c$ and every node $v$ that has a heavy parent,
we store $\min _S(v)$, $\max _S(v)$ and $\text{pred}(\min _S(v), S)$, the
predecessor of $\min _S(v)$ in $S$.
\item For every heavy node $v$ that has
at least two children, we store the labels of edges that lead
to the non-heavy children of $v$ in a Perfect Hash table. We also store
the set $S_v = \left\{ w\in \left[ 2^{u^c}\right] \mid vw\text{ is a child of }v\text{ in }T_c \right\}$
in a recursive instance of our data structure (with $L = 1$ and
the other parameters as before). That instance will be associated
with the node $v$ in the trie $T_{c-1}$.
\item For every non-heavy node $v$ at depth $d$ with $0 < d < L$
in $T_c$ that has a heavy parent and at least two leaves in its subtree, we store
the set $S'_v = \left\{ \left. w\in \left[ 2^{u^c}\right] ^{L-d}\right|
vw\in S\right\}$ in a recursive instance of our data structure.
The set $S'_u$ has size at most $s/n^{1/L}\leq s/n^{1/u}\leq n^{(a-1)/u}$.
We associate this instance with node $v$ in $T_c$ if $d<L-1$ and with
node $v$ in $T_{c-1}$ if $d = L-1$.
\end{enumerate}

Let us illustrate this by continuing the example from Figure
\ref{fig:reduce_tries}. Let $L = u = 2$, $n = 4$ and $S = \{ 1, 3, 9, 13 \}$.
Since $u/n^{1/L} = 2/\sqrt{4} = 1$, a node is heavy if and only if its
subtree contains at least two leafs (recall that a node needs at least 2 leaves
in its subtree to be heavy, regardless of the parameters). Thus, in the trie
$T_1$, the heavy nodes are the root and the node $0$. For the root node $r$,
we store $\min _S(r) = 1$, $\max _S(r) = 13$ and
$\text{pred}(\min _S(r), S) = \bot$.
We create a Perfect Hash table storing the labels of the edges to non-heavy
children of $r$, i.e., $2$ and $3$ and we store the set $S_r = \{ 0, 2, 3\}$
in a recursive instance of the data structure with parameters $L = 1$
and $u$ and $c$ as before (these parameters are then transformed to
$L = 2$ and $c = 1$). Since the node $0$ is heavy too, we store
$\min _S (0) = 1$, $\max _S(0) = 2$ and $\text{pred}(\min _S(0), S) = \bot$.
We store the labels of the edges to its non-heavy children in a Perfect Hash
table. That is, we store the set $\{ 1,2 \}$. We also store this set in a
recursive instance of our data structure with parameters $L = 1$ and $u$ and
$c$ as before. The leaf $01$ has a heavy parent, so we store
$\min _S(01) = 1$, $\max _S (01) = 1$ and $\text{pred}(\min _S(01), S) = \bot$.
Likewise, we store $\min _S(02) = 2$, $\max _S(02) = 2$ and
$\text{pred}(\min _S(02), S) = 1$. The node $2$ has a heavy parent, but
only one leaf in its subtree, so we store $\min _S(2) = 9$, $\max _S(2) = 9$
and $\text{pred}(\min _S(2), S) = 2$. For the node $3$, we store
$\min _S(3) = 13$, $\max _S(3) = 13$ and $\text{pred}(\min _S(3), S) = 9$.

The algorithm for finding the predecessor of some number $x$ in $S$
goes as follows: We view $x$ as a string of length $L$ over the alphabet
$\left[ 2^{u^c}\right]$.
First, we find the longest prefix $x'$ of $x$ such that $x'$ is a heavy node
in $T_c$. Since we store $T_c$ using the data structure from Lemma
\ref{par_hash}, we can find this prefix by making $O(1)$ cell-probes. We now
have to consider several cases.

If $x'$ has exactly one
child, then either $\min \{ y\in S\mid y \geq x\} = \min _S(x')$ or
$\max \{ y\in S\mid y < x\} = \max _S(x')$. This holds because if
$x'$ is heavy then its child $x'\sigma$ is heavy too. Then,
$x'\sigma$ is a prefix of all elements of $S$ that have $x'$ as
prefix, so $\min _S(x') = \min _S(x'\sigma )$ and $\max _S(x') = \max _S(x'\sigma )$.
But it is not a prefix of $x$, for otherwise, $x$ would
have a heavy prefix that is longer than $x'$. Therefore, either
$x$ is smaller than all elements of $S$ with $x'$ as prefix or
it is larger than all of them. Suppose that
there is some $y\in S$ with $\max _S(x'\sigma ) < y < x$ or
$\min _S(x'\sigma ) > y > x$. Since $x$, $\min _S(x'\sigma )$ and
$\max _S(x'\sigma )$ have the prefix $x'$ in common, it follows
that $y$ must have prefix $x'$ too. But since it is in $S$, it
must also have the prefix $x'\sigma$, contradicting
$\min _S(x'\sigma )> y$ and $\max _S(x'\sigma )<y$.

Hence,
$$\text{pred}(x,S) = \begin{cases} \max _S(x')&\text{ if }x> \min _S(x')\\
\text{pred}(\min _S(x'), S)&\text{ if }x \leq \min _S(x')\end{cases}$$
which we can compute reading $O(1)$ blocks in our data structure.

Suppose that $x'$ has at least two children. In our data structure,
we have a hash table that stores the labels on the edges that lead
to non-heavy children of $x'$. We use this table to find out whether
any of the non-heavy children is a prefix of $x$ which can be
done by checking whether $x_{d+1}$, the $(d+1)$st letter of $x$,
is in the table. This takes $O(1)$ cell-probes.

If there is no such child, then for each child $x'\sigma$ of $x'$ it holds
that the leaves of the subtree at $x'\sigma$ are either all larger or all
smaller than $x$. Thus, either $x$ is smaller than all leaves of the subtree
rooted at $x'$ or there is some child $x'\sigma$ of $x'$ such
that all leaves of the tree rooted at $x'\sigma$ are smaller than $x$. The
largest $\sigma$ with that property is the predecessor of $x_{d+1}$ in $S_{x'}$.
Thus, we have
$$\text{pred}(x,S) = \begin{cases} \text{pred}(\min _S(x'),S)&\text{ if }x\leq \min _S(x')\\
\max _S(x'\circ \text{pred}(x_{d+1},S_{x'}))&\text{ if }x > \min _S(x')\end{cases}$$
where $x_{d+1}$ is the $(d+1)$th letter of $x$.
We can decide with $O(1)$ cell-probes whether $x\leq\min _S(x')$.
We also can find $\text{pred}(\min _S(x'))$ with $O(1)$ cell-probes. We
find $\sigma =\text{pred}(x_{d+1},S_{x'})$ using the recursive instance of
our data structure. Since $x'\sigma$ is the child of a heavy node, we can
read off $\max _S(x'\sigma)$ with $O(1)$ cell-probes once we found $\sigma$.

If $x'$ has a non-heavy child $y$ that is a prefix of $x$ and if
that child has exactly one leaf in its subtree, we have
$$\text{pred}(x,S) = \begin{cases} \text{pred}(\min _S(y),S)&\text{ if }x\leq \min _S(y)\\
\min _S(y)&\text{ if }x > \min _S(y)\end{cases}$$
which we can compute making $O(1)$ cell-probes. Finally,
if $x'$ has a non-heavy child $y$ that is a prefix of $x$ and has
at least two leaves in its subtree, we have
$$\text{pred}(x,S) = \begin{cases} \text{pred}(\min _S(y),S)&\text{ if }x\leq \min _S(x')\\
y\circ \text{pred}(x_{d+1}\dots x_{L},S'_{y})&\text{ if }x > \min _S(x')\end{cases}$$
which can be computed reading $O(1)$ blocks, except
for the recursive call to find $\text{pred}(x_{d+1}\dots x_{L},S'_{y})$.

We show that this algorithm uses $O(a+c)$ cell-probes by induction. In
the base cases, we can read the whole data using $O(1)$ cell-probes. Let
$T(a,c)$ be the worst-case number of cell-probes to answer a query for the
given parameter values $a$ and $c$. Since in a recursive call, $a$ is reduced
by one, $c$ is reduced by one, or both and any computation outside of
recursive calls requires $O(1)$ cell-probes, we have
$$T(a,c) \leq \max \{ T(a-1,c), T(a,c-1), T(a-1,c-1)\} + k\text{ with }
k = O(1)\text .$$
We show that $T(a,c) = O(a+c)$ by proving that $T(a,c)\leq k(a+c)$. As
induction hypothesis, we assume that the inequality $T(a,c)\leq k(a+c)$
holds for all $a,c$ such that $a+c < n$. We prove that it then also holds when
$a+c = n$.  By induction hypothesis, we have
$$\max \{ T(a-1,c), T(a,c-1), T(a-1,c-1)\} \leq k(a+c-1)\text .$$ Thus,
$T(a,c) \leq k(a+c-1) + k = k(a+c)$.
Therefore, it holds that $T(a,c)= O(a+c)$.

As an example, we show how to find the predecessors of $8$ and $3$, continuing
from the example given in Figure \ref{fig:reduce_tries}. We find the predecessor
of $8$ as follows: In base-4, we
write $8$ as $20$. We first find the longest heavy prefix of $20$ in
$T_1$ which is the root $r$ of the trie. There is a hash table that stores
the labels of edges from $r$ to its non-heavy children. We check if this
table contains $2$, which it does. The node $2$ has exactly one leaf in
its subtree. We read $\min _S(2) = 9$. We now know that among all the leaves
in $T_1$, the leaf $9$ ($21$ in base-4) has the longest common prefix with
$8$ ($20$). Thus, the predecessor of $8$ in $S$ is also the predecessor of $9$
in $S$. We have stored the predecessor of $9$ in our data structure, so we
simply need to read it to answer the query. Thus, we learn that $2$ is the
predecessor of $8$ in $S$.

Let us now find the predecessor of $3$. We find that the longest heavy
prefix is $0$. Checking the hash table, we see that no element stored
in the trie has $03$ as prefix. We have $3 > \min _S(0) = 1$, so we
look for the predecessor of $3$ in $S_0 = \{ 1,2 \}$ which we have stored
in a recursive instance. This predecessor is $2$. Now we know that the
maximal element stored in the subtree rooted at $02$ is the predecessor
of $3$. We read $\max _S(02) = 2$ and find that $2$ is the predecessor of
$3$ in $S$.

It remains to check how much space our data structure requires. We
first count the space required for part 2 of our data structure,
including the part 2 of the recursive instances (and their recursive
instances, etc.). The trie $T_j$ contains at most $sLu^{c-j} +1$ nodes.
Each recursive instance is associated with
a subtree of some $T_j$ and for some of the nodes in these trees,
we store a constant number of memory words of length $Lu^j\leq u^{j+1}$.
In total, we store
$$O\left( \sum _{j = 0}^c \left( sLu^{c-j} + 1 \right) u^{j+1}\right)
= O\left( scLu^{c+1}\right)$$
bits.

We now count the bits for the hash tables in part 3 (over all recursive instances).
Every trie $T_j$ has $s$ leaves and thus at most $2(s-1)$ nodes that
have siblings. Each of those nodes contributes a constant number of
$u^j$-bit entries to its parent's hash table. The hash tables must
therefore have
$$O\left( \sum _{j = 0}^c su^j\right) = O\left( scu^c\right)$$
bits.

Now for part 1: There are at most $2(s-1)$ nodes in addition to the
root in $T_c$ that have a recursive instance associated with them
since these nodes are non-heavy children of heavy nodes, so they
must have siblings (see part 4). For each such node, we store
an instance of the data structure of Lemma \ref{par_trie}. This
requires $O(sLu^cn)$ bits in total. For $j<c$, each node in
$T_j$ for which we store a recursive instance either has a sibling
or is a node in $T_{j+1}$ with at least two children. Each tree
has $s$ leaves and therefore contains at most $s-1$ nodes with
at least two children and $2(s-1)$ nodes with siblings. For all
$T_j$ with $j<c$ taken together, we need
$$O\left( \sum _{j=0}^{c-1} su^{j+1}n \right) = O\left( scu^cn\right)$$
bits.

Taking all this together, our data structure consists of
$$O\left( scLu^{c+1}\right)+O\left( scu^c\right)+O\left( sLu^cn\right) + O\left( scu^cn\right)
= O\left( sLu^{c+2-u}n\right) + O\left( sLu^cn\right) = O\left( sLu^cn\right)$$
bits or $O\left( sn/u^2\right)$ blocks of size $b$.
\end{proof}
Setting the parameters to $L = 1$, $a = u$ and $s = n$, we obtain the
following lemma:
\begin{lem}
\label{bfdata}
Let $m$, $n$, $u$ and $c$ be integers such that $n\geq u^u$, $c\leq u$ and
$m\leq 2^{u^c}$. Let $b\geq 2u^{c+2}$.
Then, there is a data structure using block-size $b$ that stores
subsets $S$ of $[m]$ of size at most $n$ in $O(n^2/u^2)$ blocks such that
predecessor queries can be answered with $O(u+c) = O(u)$ cell-probes.
\end{lem}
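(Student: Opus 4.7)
The plan is to obtain this lemma as a direct specialization of Lemma \ref{b+f}, setting $L = 1$, $a = u$, $s = n$, and taking $c$, $u$, $n$, $m$, $b$ as in the hypotheses of the present statement. Once every hypothesis of Lemma \ref{b+f} is checked under these substitutions, the data structure, its space bound, and its query complexity are read off its conclusion with essentially no further work.

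The bulk of the proof will be a short verification of the parameter constraints. Since $a = u$ and $c \leq u$, both $a$ and $c$ lie in $[u+1] = \{0,\dots,u\}$. The condition $n \geq u^u$ is exactly one of our assumptions. The requirement $1 \leq L \leq u$ holds trivially for $L = 1$ (with $u \geq 1$, which is implicit in $n \geq u^u$ being a nontrivial constraint). The size bound $s \leq n^{a/u}$ becomes $n \leq n^{u/u} = n$, which is satisfied with equality. The universe condition needs one line: Lemma \ref{b+f} stores subsets of $[2^{Lu^c}] = [2^{u^c}]$, and the hypothesis $m \leq 2^{u^c}$ lets us regard any $S \subseteq [m]$ with $|S| \leq n$ as a subset of $[2^{u^c}]$ of size at most $s = n$.

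The one step where a small calculation is needed, and which I expect to be the only place the reader has to pause, is checking the block-size inequality. Lemma \ref{b+f} demands $b \geq (2(u-1)^2 - 1)\,L\,u^c = (2(u-1)^2 - 1)\,u^c$. Since $2(u-1)^2 - 1 \leq 2u^2$ for every $u \geq 1$, the hypothesis $b \geq 2u^{c+2} = 2u^2 \cdot u^c$ is strictly stronger and therefore implies the bound required by Lemma \ref{b+f}. With this in hand, Lemma \ref{b+f} supplies a data structure storing $S$ in $O(sn/u^2) = O(n^2/u^2)$ blocks of size $b$ and answering predecessor queries with $O(a + c) = O(u + c)$ cell-probes. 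Because $c \leq u$, the query complexity simplifies to $O(u)$, and the lemma follows.
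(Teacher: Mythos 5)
Your proposal is correct and matches the paper's own route exactly: the paper derives Lemma \ref{bfdata} from Lemma \ref{b+f} by the same substitution $L=1$, $a=u$, $s=n$, and you have simply spelled out the parameter checks the paper leaves implicit. The block-size verification via $2(u-1)^2 - 1 \leq 2u^2$ and the embedding $[m] \subseteq [2^{u^c}]$ are exactly the right observations.
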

The block-size for this data structure is not $\log m$ and there is
a lower bound on the values for $n$ we can choose. To compensate
for this, we combine this data structures with other data structures
for the predecessor problem. If $n$ is small enough, we store the data in a
\em fusion tree\em , a data structure by Fredman and Willard which they
described in \cite{fusion}. We will not describe this data structure here,
but summarize its properties.
\begin{lem}[Fusion Trees]
\label{lem:fusion}
Let $b$, $n$ and $m$ be positive integers such that $n < m < 2^b$. There
exists a data structure for the predecessor problem with block-size $b$,
space complexity $O(n)$ and time complexity $O(\log _b n) = O(\log n/\log b)$.
\end{lem}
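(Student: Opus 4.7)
My plan is to build a balanced B-tree with a large branching factor $B = b^{\Theta(1)}$ (Fredman and Willard use $B = b^{1/5}$). Each internal node stores $B$ separator keys $y_1 < \dots < y_B$, each fitting in a single $b$-bit block since $m < 2^b$, and its $B+1$ children correspond to the intervals that these keys induce on $[m]$. Such a tree has depth $O(\log_B n) = O(\log n / \log b)$, so a root-to-leaf traversal answers a predecessor query in the claimed time, provided each node can be handled in $O(1)$ cell-probes. There are $O(n/B)$ internal nodes of constant-block size plus $n$ leaves storing the elements of $S$, giving total space $O(n)$ blocks.

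The technical heart of the argument is showing that, within one node, one can locate the predecessor of a query $x$ among the $B$ stored keys in $O(1)$ cell-probes. Since each $y_i$ alone fills a full $b$-bit block, the $y_i$ cannot be compared against $x$ in parallel directly. Instead, one precomputes for the node a \emph{sketch} function that extracts only the $O(\log B)$ bit positions at which some pair of the $y_i$ disagrees; the $B$ sketches, together with the constants needed to evaluate the sketch, then fit into a single $b$-bit block. A single multiplication-plus-mask computes the sketch of $x$, and a tagged-subtraction against the packed stored sketches, followed by a most-significant-set-bit extraction, returns the index of the predecessor sketch in $O(1)$ word operations. One then corrects for the case that $x$ and the returned $y_i$ disagree first at a bit position the sketch discards, by computing the longest common prefix of $x$ and $y_i$ (the bit length of $x \oplus y_i$) and performing one more sketched lookup inside the node.

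The main obstacle is verifying these bit-manipulation tricks: one must exhibit explicit multiplier constants that gather the scattered distinguishing bits into contiguous positions, and show that the packed parallel comparison correctly identifies the predecessor sketch using only additions, subtractions, multiplications, and bitwise operations --- all of which the cell-probe model absorbs into $O(1)$ cost between memory accesses. This is what Fredman and Willard carry out in detail in \cite{fusion}; since the present survey treats fusion trees as a black box, I would appeal to their construction rather than reproduce it, and take the stated space and time complexities directly from there.
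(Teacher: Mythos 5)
Your proposal matches the paper's treatment: the survey states this lemma without proof and simply cites Fredman and Willard's paper \cite{fusion} as a black box, exactly as you do in your final paragraph. The high-level sketch you give of the fusion tree construction (B-tree with branching factor $b^{\Theta(1)}$, per-node sketch compression of distinguishing bit positions, and word-level parallel comparison) is a correct summary of that reference, though the paper itself does not include it.
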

If $n$ is larger, we need \em x-fast tries\em , a data structure invented by
Willard. Since this data structure will be connected more closely with the
structure from Lemma \ref{bfdata} than the fusion trees, let us have a closer
look at it.
\begin{thm}[X-fast Tries, \cite{xfast}]
Let $n < m$. There exists a data structure for the predecessor problem with
block-size $\log m$, space complexity $O(n\log m)$ and time complexity
$O(\log\log m)$.
\end{thm}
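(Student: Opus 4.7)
The plan is to construct, for each $S \subseteq [m]$ with $|S| \le n$, a binary trie of depth $\log m$ over the binary representations of the elements of $S$, and then to accelerate the search through this trie by a binary search over the depth. First I would keep in the trie exactly those nodes $v \in \{0,1\}^{\le \log m}$ that are a prefix of some element of $S$; since each of the $\le n$ leaves contributes at most $\log m$ ancestors, there are at most $O(n \log m)$ trie nodes, and storing each node in $O(1)$ cells of size $\log m$ gives $O(n \log m)$ blocks for the trie itself. At every internal node $v$ that has only one child in the trie (so the other half of $v$'s subtree contains no element of $S$), I would store a \emph{descendant pointer}: if $v$'s right child is missing, store a pointer to the largest $j \in S$ with prefix $v0$; if $v$'s left child is missing, store a pointer to the smallest $j \in S$ with prefix $v1$. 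Finally, keep the elements of $S$ in a sorted doubly linked list, indexed by the leaves of the trie.

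The second ingredient is a family of Perfect Hash tables: for each depth $d \in \{1, \dots, \log m\}$, let $P_d \subseteq \{0,1\}^d$ denote the set of length-$d$ prefixes of elements of $S$. Since $|P_d| \le n$, by Corollary \ref{perf_hash} I can store $P_d$ in $O(n)$ cells with $O(1)$ membership queries, and I will store next to each prefix the pointer to the corresponding trie node. Summed over all $\log m$ levels this contributes $O(n \log m)$ additional blocks, keeping the total space $O(n \log m)$ as claimed.

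For the query, I would binary-search over the depth to find the \emph{longest} prefix of $x$ that lies in some $P_d$. Concretely: set $\ell = 0$, $r = \log m$, and at each step probe the hash table at depth $d = \lfloor (\ell + r)/2 \rfloor$ to check whether $x[1\ldots d] \in P_d$; update $\ell$ or $r$ accordingly. This uses $O(\log \log m)$ hash lookups, each of cost $O(1)$. Let $v$ be the resulting deepest prefix node. If $v$ corresponds to a leaf (i.e., $x \in S$), its predecessor in $S$ is the element immediately preceding it in the linked list. Otherwise, $v$ has exactly one child in the trie on the side \emph{opposite} to the next bit of $x$: its descendant pointer then points to either the largest element of $S$ strictly below $x$ in that subtree, or to the smallest element of $S$ strictly above $x$. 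In the first case we output this element directly; in the second case we output its predecessor in the doubly linked list. Everything after the binary search uses $O(1)$ cell probes.

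I expect the main subtlety to lie in verifying the invariant about descendant pointers and the case analysis at the end: one must argue that when the binary search terminates at node $v$ with, say, a missing right child, then all elements of $S$ with prefix $v0$ lie strictly below $x$ in the usual order and all other elements of $S$ lie on the ``opposite side'' of $x$ relative to $v$, so that the predecessor of $x$ is either $\max_S(v0)$ or its neighbour in the linked list. The space bound and the $O(\log \log m)$ running time are then routine, following directly from the $\log m$ perfect hash tables (each $O(n)$ blocks, $O(1)$ probe per membership query) and the depth-binary-search.
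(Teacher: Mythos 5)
Your construction is essentially identical to the paper's: a binary trie with one Perfect Hash table per level, descendant pointers at nodes missing a child, a sorted doubly linked list on the leaves, and a binary search over prefix length to locate the longest stored prefix of $x$ in $O(\log\log m)$ hash lookups. The case analysis at the found node (leaf vs.\ missing-left-child vs.\ missing-right-child, followed by one linked-list step if needed) matches the paper's argument exactly.
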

\begin{proof}
We interpret the numbers in $S$ as bit strings and
store them in a binary trie of depth $\log m$ with
the following augmentations:
Every node $u$ that has no right child additionally stores a pointer to
the maximal element of the subtree rooted at $u$. Similarly,
every node without a left child stores a pointer to the minimal element of the
subtree. For every level $j$ of the trie, we store
all the nodes at that level together with a
pointer to their position in the trie
in a Perfect Hash table. Furthermore, the leaves
in the trie form a sorted, doubly linked list. The
trie has at most $n\log m$ nodes, so it uses $O(n\log m)$ blocks. Since
every node is stored in only one of the hash tables, the hash tables
taken together also use $O(n\log m)$ blocks. This shows that the whole
data structure has space complexity $O(n\log m)$.

We now show how to implement predecessor queries.
Let $x\in [m]$. To find its predecessor, we first find
a node $u$ in the trie such that $u$ is the longest prefix of $x$
contained in the trie. We do this by binary search on the prefix
length and by using the hash tables to find out whether a prefix
of some given length is contained in the trie.

Having found this prefix $u$, there are three possible situations.
\begin{itemize}
\item $u = x$: We use the list structure of the leaves to
find the predecessor of $u$.
\item $u$ has no left child: Then, every leaf in the subtrie rooted
at $u$ is larger than $x$. We follow the pointer to the minimum
of that subtrie to the successor of $x$. Using the linked list
of the leaves, we find the predecessor of $x$.
\item $u$ has no right child: Every leaf in the subtrie rooted
at $u$ is smaller than $x$. Following the pointer at $u$ leads
us to the predecessor of $x$.
\end{itemize}
The node $u$ cannot have two children because in that case one of them would
have to be a prefix of $x$.

Since $x$ consist of $\log m$ bits, the binary search requires $O(\log\log m)$
lookups in the Perfect Hash tables, each of which can be done with $O(1)$
cell-probes.  After we have found the longest prefix,
only a constant number of cell-probes are required. This shows
that the time complexity is $O(\log\log m)$.
\end{proof}
Now, everything is in place for proving the main result of this section.
\begin{thm}
There is a data structure for the static predecessor problem
that stores $S\subseteq [m]$ with $|S|\leq n$ in
$O(n^2\log n/\log\log n)$ blocks of size $\log m$. Predecessor
queries are answered with
$$O\left(\min \left( \frac{\log\log m}{\log\log\log m},
\sqrt{\frac{\log n}{\log\log n}}\right)\right)$$
cell-probes.
\end{thm}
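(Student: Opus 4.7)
The plan is to assemble the data structure from the three building blocks developed above --- Fusion Trees (Lemma \ref{lem:fusion}), X-fast Tries, and the Beame--Fich core structure of Lemma \ref{bfdata} --- and select which combination to use based on the relative sizes of $n$ and $m$. The two arguments of the minimum correspond to two different regimes, and in each I would use a different combination.

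In the regime $\log\log m \geq \sqrt{\log n\,\log\log n}$ the second argument $\sqrt{\log n/\log\log n}$ is the smaller one, as one can verify by squaring and rearranging. Here Fusion Trees alone with block-size $\log m$ suffice: by Lemma \ref{lem:fusion} they give query time $O(\log n/\log\log m) \leq O(\sqrt{\log n/\log\log n})$ in space $O(n)$, which is well within the claimed $O(n^2\log n/\log\log n)$ budget and already matches the theorem's time bound.

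In the complementary regime $\log\log m < \sqrt{\log n\,\log\log n}$ the first argument is the smaller one, and I would invoke Lemma \ref{bfdata} with $u = c = \lceil\log\log m/\log\log\log m\rceil$. The identity $u^u = 2^{u\log u} = \Theta(\log m)$ then ensures $m \leq 2^{u^c}$, so the lemma gives query time $O(u+c) = O(\log\log m/\log\log\log m)$ and space $O(n^2/u^2)$ wide blocks. Since the wide block-size demanded by the lemma (about $u^2\log m$) exceeds $\log m$, the plan to realize this in the $\log m$-cell-probe model is to simulate each wide block by $O(u^2)$ consecutive $\log m$-blocks; the X-fast trie of the previous theorem handles the longest-heavy-prefix search at the top of the recursion, and Fusion Trees handle the base case where a subproblem shrinks into a constant number of narrow blocks. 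The space then translates to $O(n^2)$ narrow blocks, still inside the $O(n^2\log n/\log\log n)$ budget.

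The main obstacle will be translating Lemma \ref{bfdata}'s query algorithm from wide probes into narrow probes without blowing up the time. A naive simulation replaces each wide-block probe by $u^2$ narrow probes and raises the total cost by a factor of $u^2$, ruining the target bound. Overcoming this requires a careful layout so that each wide-block read of the query algorithm touches only $O(1)$ narrow cells at predetermined offsets --- storing, for instance, the hash parameters, the $\min_S$, $\max_S$ and $\mathrm{pred}$ values, and the child-label hash tables at addresses computable from the current state of the recursion --- and then verifying that with this layout the total query time remains $O(u+c) = O(\log\log m/\log\log\log m)$ across all of the $O(u)$ recursive invocations.
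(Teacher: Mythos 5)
Your case split (fusion trees when $\log\log m$ is large relative to $\log n$, the Beame--Fich machinery otherwise) is in the same spirit as the paper's, and the fusion-tree case is handled correctly. The gap is in the second case, and it is a genuine one: the plan to invoke Lemma~\ref{bfdata} with $c=u$ and then ``simulate each wide block by $O(u^2)$ narrow blocks read at $O(1)$ predetermined offsets'' cannot be made to work. With $c=u$ the lemma demands block-size $\Theta(u^{u+2})=\Theta(u^2\log m)$, and that width is not incidental --- it is exactly what the parallel hash table of Lemma~\ref{par_hash}, sitting inside the longest-heavy-prefix search of Lemma~\ref{par_trie}, needs in order to answer $\Theta(u^2)$ independent hash probes with a single cell-probe. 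Each of those sub-queries requires $\Omega(\log m)$ fresh bits of the table, so any layout that only reads $O(1)$ narrow cells per wide probe is information-theoretically doomed; the naive simulation cost of $\Theta(u^2)$ narrow probes per wide probe is essentially forced, and the ``careful layout'' you describe does not exist. Also, the x-fast trie you mention is not the mechanism the paper uses to perform the heavy-prefix search, and fusion trees do not appear in the base cases of Lemma~\ref{b+f}.

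The paper resolves the block-width mismatch by a qualitatively different trick: it takes $c = u-4$ rather than $c = u$, so the block-size requirement drops to $2u^{u-2}$, which Inequality~\eqref{eq:beamefich2} shows is at most $\log m$ --- i.e., the structure of Lemma~\ref{bfdata} is run genuinely at block-size $\log m$, with no simulation at all. The price is that the universe bound $m\leq 2^{u^c}=2^{u^{u-4}}$ is now violated, and this is what the x-fast trie is actually for: each query $x\in[m]$ is split into a prefix $x_1$ of length $4\log u$ bits and a suffix $x_2$ of length $k=\log m/u^4\leq u^{u-4}$ bits, the set $S'$ of prefixes is held in an x-fast trie, and for each $x_1\in S'$ the suffixes form a set in a universe of size $2^k\leq 2^{u^c}$ that now satisfies the hypotheses of Lemma~\ref{bfdata}. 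The x-fast-trie lookup costs $O(\log u)$ probes, the Beame--Fich instance costs $O(u)$, and there are at most $u^4$ Beame--Fich instances each of size $O(n^2/u^2)$, yielding space $O(n^2u^2)=O(n^2\log n/\log\log n)$ blocks via Inequality~\eqref{eq:beamefich}. To repair your argument you would need to replace the simulation idea with this universe-shrinking step, or find an alternative way to reduce the effective block-width before invoking the recursion.
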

\begin{proof}
Let $b = \log m$ be the block-size. We distinguish two cases, depending on $n$.\\
\textbf{Case 1:} Suppose that
$n < 2^{4(\log\log m)^2/\log\log\log m}$. In that case, we store $S$ in a
fusion tree. For $m$ large enough,
$$\log\log n < 2+2\log\log\log m - \log\log\log\log m\leq 2\log\log\log m$$
and it follows that
$$\log\log m = \frac{2\log\log m}{\sqrt{\log\log\log m}}\cdot \frac{\sqrt{\log\log\log m}} 2
> \sqrt{\frac{\log n\log\log n}8}$$
which gives us
$$\frac{\log n}{\log b} = \frac{\log n}{\log\log m}
< \sqrt{\frac{8\log n}{\log\log n}}\text .$$
We also have
$$\frac{\log n}{\log b} = \frac{\log n}{\log\log m} < \frac{4\log \log m}{\log\log\log m}$$
and thus, by Lemma \ref{lem:fusion}, we can store $S$ in a fusion tree that uses
$O(n)$ blocks of space and supports predecessor queries with
$$O\left(\frac{\log n}{\log b}\right) = O\left(\min \left( \frac{\log\log m}{\log\log\log m},
\sqrt{\frac{\log n}{\log\log n}}\right)\right)$$
cell-probes.\\
\\
\textbf{Case 2:} If $n\geq 2^{4(\log\log m)^2/\log\log\log m}$, we have
$$\sqrt{\frac{\log n}{\log\log n}}\geq \sqrt{\frac{4(\log\log m)^2}{\log\log\log m(2+2\log\log\log m - \log\log\log\log m)}}
\geq \frac{\log\log m}{\log\log\log m}$$
for large enough $m$.
In that case, we combine the data structure by Beame and Fich with x-fast tries.
Let $u$ be the smallest integer such that $u^u\geq \log m$. Then,
\begin{equation}
\label{eq:beamefich}
\frac{\log\log m}{\log\log\log m}\leq \frac{u\log u}{\log u + \log\log u}
\leq u \leq \frac{2\log\log m}{\log\log\log m}
\leq 2\sqrt{\frac{\log n}{\log\log n}}
\end{equation}
where the inequality $u \leq 2(\log\log m)/(\log\log\log m)$
can be seen as follows: Since $u$ is the least integer with $u^u\geq \log m$,
we have $(u-1)^{u-1} < \log m$ and this implies
\begin{align*}
\frac{2\log\log m}{\log\log\log m} &>
                \frac{2\log \left((u-1)^{u-1}\right)}{\log\log \left((u-1)^{u-1}\right)}\\
        &= \frac{2(u-1)\log (u-1)}{\log (u-1) + \log\log (u-1)}\\
        &= \frac{2(u-1)}{1+(\log \log (u-1))/\log (u-1)}\\
        &\geq u\text{ for $m$ large enough.}
\end{align*}
The remaining inequalities in Statement \eqref{eq:beamefich} are easy to see.
Another inequality we need for in the proof is $2u^{u-2}\leq \log m$.
Since $(u-1)^{u-1} < \log m$, it follows that
$$\frac{u^u}{\log m}\leq \frac{u^u}{(u-1)^{u-1}}\leq u\left( 1+\frac{1}{u-1}\right) ^{u-1}\leq ue$$
and multiplying each side of this inequality with $2(\log m)/u^2$ gives
\begin{equation}
\label{eq:beamefich2}
2u^{u-2}\leq \frac{2e\log m}u\leq \log m
\end{equation}
for $m$ large enough so that $u\geq 2e$.

Let $S' = \{ x\in [2^{4\log u}]\mid x\text{ is a prefix of some element of }S\}$.
Let $k = \log m/2^{4\log u}\leq u^u/u^4 = u^{u-4}$.
For every $x\in S'$, let $S_x = \{ y\in [2^k]\mid x\circ y\in S\}$.
We store $S'$ in an x-fast trie. For each $x\in S'$, we store
$S_x$ in an instance of the data structure from Lemma \ref{bfdata}
with $c = u-4$ and $b = \log m$ which is a large enough block-size
for this data structure since
$2u^{c+2} = 2u^{u-2} \leq \log m$ by Inequality \eqref{eq:beamefich2}. Also,
the size of the universe
is $2^k\leq 2^{u^{u-4}} = 2^{u^c}$ and thus, all premises
of the lemma are satisfied.
At each leaf $x$ of the x-fast trie, we also store a pointer to
the data structure storing $S_x$.

The x-fast trie uses space $O(n\log u)$. Each instance of the
data structure of Lemma \ref{bfdata} uses $O(n^2/u^2)$ blocks. There
are $u^4$ leaves in the trie, so all these instances take up space
$O(n^2u^2) = O(n(\log n)/\log\log n)$ in total. Thus, the complete
data structure has space complexity $O(n^2(\log n)/\log \log n)$,
as claimed.

We find the predecessor of $x\in [m]$ as follows: First,
we partition $x$ in a prefix $x_1$ of length $4\log u$ and
a suffix $x_2$ of length $k$. We search the x-fast trie to
find out whether $x_1\in S'$ which takes $O(\log u)$ cell-probes.
If yes, we search $S_{x_1}$ for the predecessor of $x_2$
which requires $O(u)$ cell-probes. If
we find a predecessor $x_2'$ in that set, we output $x_1\circ x_2'$.
If $x_1$ is not in $S'$ or if $S_{x_1}$ contains no
predecessor of $x_2$, we find the predecessor $x_1'$ of $x_1$ in $S'$
and return $x_1'\circ x_2'$ where $x_2'$ is the maximal element
of $S_{x_1'}$. If $x_1$ has no predecessor in $S'$, we conclude
that $x$ has no predecessor in $S$.

In total, this algorithm makes $O(u)$ cell-probes and by Inequality
\eqref{eq:beamefich}, we have
$$O(u) = O\left(\frac{\log \log m}{\log\log\log m}\right)
\leq O\left(\sqrt{\frac{\log n}{\log\log n}}\right)\text .$$
\end{proof}
\section{Classical Data Structures with Quantum Access}
\label{quant_acc}
\subsection{Lower Bounds for Set Membership}
\label{quantum_set}
Radhakrishnan, Sen and Venkatesh proved the following result from which
lower bounds for data structures for the set membership problem can
be derived:
\begin{thm}[{\cite[Theorem 1]{qu_set_mem}}]
\label{mem_exact}
Suppose there is a scheme for storing sets $S\subseteq [m]$ with
$|S|\leq n$ in $s$ bits so that membership queries can be answered
by an exact quantum algorithm that makes at most $t$ bit-probes. Then,
the following inequality must hold:
$$\sum_{i=0}^{n}{m\choose i}\leq\sum_{i=0}^{n\cdot t}{s\choose i}$$
This inequality also holds if the query algorithm is probabilistic
with one-sided error, i.e., if the algorithm always returns ``No''
when $i\not\in S$ but gives the wrong answer when $i\in S$ with
probability at most $\epsilon$ for some $\epsilon < 1$.
\end{thm}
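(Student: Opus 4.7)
The plan is to apply the polynomial method using the $\pm$-type oracle $O_{x,\pm}$ introduced in Section \ref{comp}. The starting point is the standard degree bound: writing each bit of the memory $x\in\{0,1\}^s$ as a $\pm 1$ variable $y_\ell = (-1)^{x_\ell}$, the oracle acts diagonally, multiplying the amplitude on $\ket{\ell}\ket{b}\ket{z}$ by $y_\ell^b$, while the input-independent unitaries $U_0,\dots,U_t$ only take linear combinations. A straightforward induction on the number of queries (reducing $y_\ell^2 = 1$ at each step) shows that after $t$ applications of the oracle, every amplitude of the final state, viewed as a function of $x$, is a multilinear polynomial of degree at most $t$ in $y_1,\dots,y_s$. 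Write $\alpha_{i,a}(x)$ for the amplitude at basis state $\ket{a}$ when the algorithm starts from the initial state $\ket{i}\ket{0}$ and runs with oracle $O_{x,\pm}$; then $\deg\alpha_{i,a}\leq t$.

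Next, for each $S\subseteq[m]$ with $|S|\leq n$, I construct a single scalar polynomial $P_S$ as a witness for $S$. In both the exact case and the one-sided-error case, the algorithm accepts $(i,\phi(S))$ with probability at least $1-\epsilon > 0$ whenever $i\in S$; hence the projection of the final state onto the ``YES'' subspace is nonzero, so there exists an accepting basis state $a_S(i)$ with $\alpha_{i,a_S(i)}(\phi(S))\neq 0$. Define
$$P_S(y) := \prod_{i\in S}\alpha_{i,a_S(i)}(y),$$
which, after reducing modulo $y_\ell^2=1$, is a multilinear polynomial in $y$ of degree at most $nt$. By construction $P_S(\phi(S))\neq 0$. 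On the other hand, if $S\not\subseteq S'$, pick any $i\in S\setminus S'$; by the no-false-positive hypothesis the algorithm \emph{always} rejects on $(i,\phi(S'))$, so \emph{every} accepting amplitude vanishes at $\phi(S')$. In particular $\alpha_{i,a_S(i)}(\phi(S'))=0$, and therefore $P_S(\phi(S'))=0$.

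Linear independence of $\{P_S:|S|\leq n\}$ then follows from a minimum-cardinality argument. Suppose $\sum_S c_S P_S \equiv 0$ with some $c_S\neq 0$, and let $S^*$ minimize $|S|$ among such $S$. Evaluating at the $y$ corresponding to $\phi(S^*)$ kills every term with $S\not\subseteq S^*$; among the rest, any $S\subseteq S^*$ with $c_S\neq 0$ must satisfy $|S|\geq|S^*|$ by minimality, forcing $S=S^*$. The surviving equation $c_{S^*}P_{S^*}(\phi(S^*))=0$ contradicts $P_{S^*}(\phi(S^*))\neq 0$. Since the space of multilinear polynomials of degree at most $nt$ in $s$ variables has dimension $\sum_{i=0}^{nt}\binom{s}{i}$, and the number of linearly independent $P_S$'s is $\sum_{i=0}^{n}\binom{m}{i}$, the desired inequality follows.

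The main delicacy is extracting the factor $nt$ rather than the naive $2nt$: working with the acceptance probability $p_i(x)=\sum_a|\alpha_{i,a}(x)|^2$ (a polynomial of degree $2t$) would lose a factor of two in the exponent. Avoiding this forces us to work with a single complex amplitude per query, and thus to let the witness $a_S(i)$ depend on $S$. The argument survives this nonuniformity precisely because the ``rejection'' side is an all-quantifier statement (\emph{every} accept amplitude vanishes when $i\notin S'$), while the ``acceptance'' side only needs an existence claim (\emph{some} accept amplitude is nonzero when $i\in S$), and the one-sided-error hypothesis is exactly what aligns these quantifiers.
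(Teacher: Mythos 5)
Your proof is correct. It follows the same grand strategy as the paper's proof --- exhibit $\sum_{i\le n}\binom{m}{i}$ linearly independent objects indexed by the sets $S$, and bound the dimension of the space they live in by $\sum_{i\le nt}\binom{s}{i}$ using the polynomial-method degree bound --- but the realization is genuinely different and, arguably, more elementary. The paper works with the matrices $W_S^{\otimes n}$, where $W_S = U_t O_{\phi(S)}\cdots O_{\phi(S)} U_0$, proves their linear independence by applying the candidate zero combination to a carefully padded state $\ket{\psi_T}=\ket{i_1}^{\otimes n-k+1}\otimes\ket{i_2}\otimes\cdots\otimes\ket{i_k}$ and projecting onto the accepting subspace $A_1^{\otimes n}$, and then expands each matrix entry as $\sum_T (-1)^{[\phi(S)]_T}(M_T)_{i,j}$ to confine the $W_S^{\otimes n}$ to the span of $S$-independent matrices $N_T$, $|T|\le nt$. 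You instead extract one scalar per $S$ by choosing an accepting amplitude $a_S(i)$ for each $i\in S$ and multiplying, which replaces tensor powers of operators by ordinary products of multilinear polynomials and replaces the projection onto $A_1^{\otimes n}$ by evaluation at $\phi(S^*)$. Your extremality argument is also the dual of the paper's: they pick $T$ \emph{maximal} among sets with nonzero coefficient (so that $S\ne T$ with $\alpha_S\ne 0$ forces $T\not\subseteq S$), while you pick $S^*$ of \emph{minimum} cardinality (so that $S\subseteq S^*$ with $c_S\ne 0$ forces $S=S^*$); both work, and neither needs to pad to exactly $n$ factors in your version since a product of $|S|\le n$ degree-$t$ factors already has degree $\le nt$. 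Your closing remark about why the nonuniform choice of $a_S(i)$ is harmless --- the rejection hypothesis is a universal statement over accepting basis states while the acceptance hypothesis only needs existence, and one-sided error is exactly what aligns these quantifiers --- is an accurate and illuminating account of where the one-sidedness is used, and it explains cleanly why one gets the factor $nt$ rather than the $2nt$ that would come from squaring amplitudes into probabilities.
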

This result improves upon a result by Buhrman et al. in
\cite{bmrv} who show that ${m\choose n}\leq {s\choose nt}2^{nt}$.
The proof of Theorem \ref{mem_exact} is based on linear algebra. Before going into the details
of that proof, let us see how this result allows us to establish
lower bounds. We either fix $t$ to some value and see how large
$s$ must be to satisfy the inequality or vice versa.
\begin{cor}
If the query algorithm of a data structure for the set membership problem
only makes one bit-probe, the data structure must use space $s\geq m$.
Thus, the bit vector data structure described at the beginning of
\ref{mem_class} is optimal even in the setting of exact quantum
computation.
\end{cor}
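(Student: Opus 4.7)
The plan is to apply Theorem \ref{mem_exact} with $t=1$, which gives the inequality
$$\sum_{i=0}^{n}{m\choose i}\leq\sum_{i=0}^{n}{s\choose i}\text{,}$$
and then to derive $s\geq m$ by comparing the two sums term by term. Assume for contradiction that $s<m$; we may also assume $n\geq 1$, since the case $n=0$ is vacuous (the only storable set is $\emptyset$).

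For the comparison, I would show that $\binom{s}{i}\leq\binom{m}{i}$ for every $0\leq i\leq n$, with strict inequality for at least one value of $i$. The $i=0$ terms are both $1$. For $1\leq i\leq n$ with $i\leq s$, Pascal's identity gives $\binom{m}{i}=\binom{m-1}{i}+\binom{m-1}{i-1}$, and since $i\geq 1$ forces $\binom{m-1}{i-1}\geq 1$ and since $s\leq m-1$ gives $\binom{s}{i}\leq\binom{m-1}{i}$ by monotonicity of $\binom{\cdot}{i}$, we conclude $\binom{s}{i}<\binom{m}{i}$. For $i>s$ we have $\binom{s}{i}=0\leq\binom{m}{i}$. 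Taking $i=1$ in particular yields $\binom{s}{1}=s<m=\binom{m}{1}$, so summing over $0\leq i\leq n$ produces the strict inequality
$$\sum_{i=0}^{n}\binom{s}{i}<\sum_{i=0}^{n}\binom{m}{i}\text{,}$$
contradicting the consequence of Theorem \ref{mem_exact}. Hence $s\geq m$, matching the bit vector scheme exactly.

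There is essentially no obstacle here: all the substance lives in Theorem \ref{mem_exact}, and the corollary is just an elementary binomial-coefficient comparison. The only thing to be careful about is the edge case $n=0$ (trivial) and making explicit that the strict inequality at $i=1$ is what forbids $s<m$; everything else is routine.
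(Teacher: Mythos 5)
Your proposal is correct and takes essentially the same approach as the paper, which simply asserts that setting $t=1$ in Theorem \ref{mem_exact} forces $s\geq m$ without spelling out the binomial-coefficient comparison. You have filled in the routine monotonicity argument (and the $n=0$ edge case) that the paper leaves implicit; nothing in the substance differs.
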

\begin{proof}
When we set $t = 1$, then we must have $s\geq m$ in order to satisfy
the inequality in Theorem \ref{mem_exact}.
\end{proof}
\begin{cor}
Suppose that $n\leq m^d$ for some constant $d$ with $0<d<1$.
In a data structure for the set membership problem that uses
$O(n\log m)$ bits for storage, the query algorithm must make
$\Omega (\log m)$ bit-probes. It follows that the Perfect Hashing
scheme is asymptotically optimal even in the quantum bit-probe
model with one-sided error.
\end{cor}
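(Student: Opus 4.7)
The plan is to apply Theorem \ref{mem_exact} with $s \leq Cn\log m$ for some constant $C$ (the hypothesis $s = O(n\log m)$) and derive a lower bound on $t$ by comparing the asymptotic sizes of the two sides of the inequality
$$\sum_{i=0}^{n}\binom{m}{i} \;\leq\; \sum_{i=0}^{nt}\binom{s}{i}.$$
To show $t = \Omega(\log m)$, I aim to show that if $t$ were smaller than some fixed constant fraction of $\log m$, the right-hand side would be too small to accommodate the left-hand side.

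First, I would bound the left-hand side from below exactly as in Lemma \ref{optimal}: using $\binom{m}{n} \geq (m/n)^n$ and the hypothesis $n \leq m^d$, we get $\sum_{i=0}^n \binom{m}{i} \geq m^{n(1-d)}$, so $\log(\mathrm{LHS}) \geq n(1-d)\log m$. Next I would bound the right-hand side from above using the standard estimate $\sum_{i=0}^{k}\binom{s}{i} \leq (es/k)^k$ for $k \leq s/2$. Applied with $k = nt$ and $s \leq Cn\log m$, this yields
$$\log(\mathrm{RHS}) \;\leq\; nt\log\!\left(\frac{eC\log m}{t}\right).$$
The precondition $nt \leq s/2$ is harmless in the regime of interest since we are trying to derive a contradiction from $t$ being small.

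To conclude, I would substitute $t = \epsilon \log m$ into the inequality of Theorem \ref{mem_exact}, divide through by $n\log m$, and obtain
$$1 - d \;\leq\; \epsilon\, \log\!\left(\frac{eC}{\epsilon}\right).$$
The function $\epsilon \mapsto \epsilon\log(eC/\epsilon)$ is continuous on $(0,\infty)$ and tends to $0$ as $\epsilon \to 0^+$, so there exists a constant $\epsilon_0 = \epsilon_0(C,d) > 0$ below which the inequality fails. Therefore $t \geq \epsilon_0 \log m$, i.e.\ $t = \Omega(\log m)$. Since the Perfect Hashing scheme from Corollary \ref{perf_hash} has space $O(n\log m)$ bits and bit-probe complexity $O(\log m)$ (one reads $O(1)$ cells of $\log m$ bits each), this matches our lower bound, and by Theorem \ref{mem_exact} the bound even holds under exact or one-sided-error quantum queries.

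The main obstacle is purely quantitative: one must be careful that the crude estimate $(es/nt)^{nt}$ really does dominate the full sum in the target regime, and that the constants $C$ and $d$ combine in the way suggested so that the threshold $\epsilon_0$ is strictly positive. Both issues are handled by the fact that we only need to rule out $t = o(\log m)$ up to constants, not identify the sharp constant.
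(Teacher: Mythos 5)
Your proposal is correct and follows essentially the same route as the paper: both lower-bound the left side via Lemma \ref{optimal}, upper-bound the right side via $\sum_{i=0}^{nt}\binom{s}{i}\leq (es/nt)^{nt}$, take logarithms, and then observe that the resulting inequality forces $t/\log m$ to be bounded away from zero. The only cosmetic difference is in the final step: the paper writes $t=\log m/b$ and argues that $b\leq \frac{1}{1-d}(\log(ec)+\log b)=O(\log b)$ forces $b=O(1)$, whereas you write $t=\epsilon\log m$ and invoke continuity of $\epsilon\mapsto\epsilon\log(eC/\epsilon)$ together with its limit $0$ at $\epsilon\to 0^+$ to conclude $\epsilon\geq\epsilon_0(C,d)>0$; these are the same argument under the substitution $b=1/\epsilon$, and your phrasing is arguably a bit more airtight.
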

\begin{proof}
Suppose that $s = O(n\log m)$ and let $c$ be such that
$s\leq c\cdot n\log m$.
We then have
$$
\sum_{i=0}^n {m\choose i}\leq \sum_{i = 0}^{nt} {s\choose i} \leq \sum_{i = 0}^{nt}{cn\log m\choose i}
\leq \left( \frac{ecn\log m}{nt}\right) ^{nt}
= \left( \frac{ec\log m}t\right) ^{nt}
$$
Taking logarithms on both sides of the inequality and using Lemma
\ref{optimal}, we can conclude that
$$n(1-d)\log m \leq nt\log\left( \frac{ec\log m}{t}\right)\text .$$
Let $b$ be such that $t = (\log m)/b$. If we can prove that $b = O(1)$,
it follows that $t = \Omega (\log m)$. From the inequality above,
it follows that
\begin{align*}
&n(1-d)\log m \leq \frac n b\log m \log \left( \frac{ebc\log m}{\log m}\right)\\
\Rightarrow\ &1-d\leq \frac 1 b(\log (ec) + \log b)\\
\Rightarrow\ &b\leq\frac{1}{1-d}(\log (ec) + \log b) = O(\log b)
\end{align*}
which shows that $b = O(1)$ because $O(\log b)$ cannot
grow faster than $b$.
It follows that $t = \Omega (\log m)$.
\end{proof}
Let us now prove Theorem \ref{mem_exact}.
\begin{proof}
Let $s$ be the number of bits required for the data structure and
let $t$ be the maximal number of cell-probes by the query algorithm.
Let $U_0,U_1,\dots ,U_t$ be the unitary transforms that form the
quantum algorithm and $\phi$ the function that encodes sets $S$ into bit strings.
For every set $S\subseteq [m]$ with $|S|\leq n$, we define
$$W_S = U_tO_{\phi (S)}U_{t-1}\dots U_1O_{\phi (S)}U_0$$
where $O_{\phi (S)}$ is a quantum bit-probe oracle of the $\pm$-type.

We prove the result for exact algorithms ($\epsilon = 0$) and then indicate
how it follows for probabilistic algorithms with one-sided error.
We first show that the unitary transforms in
$\mathcal W = \{ W_S^{\otimes n}\}_{S\subseteq [m], |S|\leq n}$
are linearly independent. Then, we prove that they are contained in a vector space of dimension at most
$\sum _{i=0}^{nt}{s\choose i}$. Since $\{ W_S^{\otimes n}\}_{S\subseteq [m], |S|\leq n}$
has $\sum _{i = 0}^n{m\choose i}$ elements, this proves the inequality of
Theorem \ref{mem_exact},
$$\sum _{i = 0}^n{m\choose i}\leq\sum _{i=0}^{nt}{s\choose i}\text .$$

Let $\mathcal H$ be the Hilbert space that the algorithm operates on.
Let $d$ denote its dimension.
We let $A_1$ denote the subspace of $\mathcal H$ that contains
those states that result with probability 1
in answer ``Yes'' when measured and
we let $A_0$ denote the space where the answer is ``No''.
The subspaces $A_0$ and $A_1$ are orthogonal.

Suppose
that there is a nontrivial linear combination
$$\sum _{S\subseteq [m], |S|\leq n} \alpha _SW_S^{\otimes n} = 0\text .$$
Let $T = \{ i_1,\dots ,i_k\}$
be a maximal set such that $\alpha _T\neq 0$. Define
$$\ket{\psi _T} = \ket{i_1}^{\otimes n-k+1}\otimes \ket{i_2}\otimes \dots  \otimes\ket{i_k}$$
where $\ket{i_j}$ is the starting state encoding the query ``$i_j\in S?$''.
For any $S$, we have $W_S^{\otimes n}\ket{\psi _T} = W_S^{\otimes n-k+1}\ket{i_1}^{\otimes n-k+1}\otimes W_S\ket{i_2}\otimes \dots  \otimes W_S\ket{i_k}$.
Since $i_j\in T$ for every $j$, $W_T\ket{i_j}\in A_1$ and thus,
$W_T^{\otimes n}\ket{\psi _T} \in A_1^{\otimes n}$. For $S\neq T$ with $\alpha _S \neq 0$
on the other hand, there is some $j$ such that $i_j\not\in S$
since $T$ is maximal. Thus $W_S\ket{i_j}\in A_0$ and it follows that
$W_S^{\otimes n}\ket{\psi _T}$ is orthogonal to $A_1^{\otimes n}$. If
we let $P_1$ be the projection on $A_1^{\otimes n}$, we have
$$
\sum _{S\subseteq [m], |S|\leq n} \alpha _SW_S^{\otimes n} = 0
\Rightarrow P_1\left( \sum _{S\subseteq [m], |S|\leq n} \alpha _SW_S^{\otimes n}\ket{\psi _T}\right) = 0
$$
and since $W_S^{\otimes n}\ket{\psi _T}$ for $S\neq T$ is orthogonal to
$A_1^{\otimes n}$ and $W_T^{\otimes n}\ket{\psi _T}$ is contained in
$A_1^{\otimes n}$, the result of the projection is
$$
P_1\left( \sum _{S\subseteq [m], |S|\leq n} \alpha _SW_S^{\otimes n}\ket{\psi _T}\right) = \alpha _TW_T^{\otimes n}\ket{\phi _T}\text .
$$
It follows that $\alpha _T W_T^{\otimes n}\ket{\psi _T} = 0$, but this
contradicts our assumption that $\alpha _T\neq 0$. Therefore, the
elements of $\mathcal W$ are linearly independent.

We now go on to show that they are contained in a vector space
of the correct size.
For a set $T = \{t_1,\dots ,t_k\}\subseteq [s]$, let $[\phi (S)]_T$ be the
parity of the bits at locations $t_1,\dots ,t_k$ in $\phi (S)$.
Since $O_{\phi (S)}$ is diagonal, we have
$$
(W_S)_{i,j} = \sum_{k_0,\dots ,k_{t-1}\in [d]} (U_t)_{i,k_{t-1}}(O_{\phi (S)})_{k_{t-1},k_{t-1}}(U_{t-1})_{k_{t-1},k_{t-2}}(O_{\phi (S)})_{k_{t-2},k_{t-2}}\dots (U_1)_{k_1,k_0}(O_{\phi (S)})_{k_0,k_0}(U_0)_{k_0,j}\text .
$$
We can rewrite this as follows, for appropriate sets $l_{k_i}\subseteq [n]$
where each $l_{k_i}$ is either a singleton or empty:
$$
(W_S)_{i,j} = \sum_{k_0,\dots ,k_{t-1}\in [d]} (U_t)_{i,k_{t-1}}(-1)^{[\phi (S)]_{l_{k_{t-1}}}}(U_{t-1})_{k_{t-1},k_{t-2}}(-1)^{[\phi (S)]_{l_{k_{t-2}}}}\dots (U_1)_{k_1,k_0}(-1)^{[\phi (S)]_{l_{k_0}}}(U_0)_{k_0,j}\text .
$$
We can simplify this term by introducing some additional notation:
For $\Delta$ the
symmetric difference between sets,
we define $T_{k_0,\dots ,k_{t-1}} = l_{k_0}\Delta \dots  \Delta l_{k_{t-1}}$.
This gives us
\begin{align*}
(W_S)_{i,j} &= \sum_{k_0,\dots ,k_{t-1}} (-1)^{[\phi (S)]_{T_{k_0,\dots ,k_{t-1}}}} (U_t)_{i,k_{t-1}}(U_{t-1})_{k_{t-1},k_{t-2}}\dots (U_1)_{k_1,k_0}(U_0)_{k_0,j}\\
&= \sum _{T\subseteq [s], |T|\leq t} (-1)^{[\phi (S)]_{T}}\sum _{k_0,\dots , k_{t-1} \text{ such that } T = T_{k_0,\dots ,k_{t-1}}}(U_t)_{i,k_{t-1}}(U_{t-1})_{k_{t-1},k_{t-2}}\dots (U_1)_{k_1,k_0}(U_0)_{k_0,j}\\
&= \sum _{T\subseteq [s], |T|\leq t} (-1)^{[\phi (S)]_{T}}(M_T)_{i,j}
\end{align*}
where the $M_T$ are unitary transforms that only depend on
$U_0,\dots ,U_t$ and $T$, not on $S$. Thus, we have
\begin{align*}
W_S^{\otimes n} &= \sum_{1\leq i\leq n, T_i\in [s], |T_i|\leq t} (-1)^{[\phi (S)]_{T_1}}\dots (-1)^{[\phi (S)]_{T_n}} (M_{T_1}\otimes \dots  \otimes M_{T_n})\\
&= \sum _{T\subseteq [s],|T|\leq nt} (-1)^{[\phi (S)] _T} N_T
\end{align*}
where $N_T$ is independent of $S$, namely,
$$N_T = \sum _{T_1,\dots , T_n\text{ such that }T_1\Delta \dots  \Delta T_n = T} (M_{T_1}\otimes \dots  \otimes M_{T_n})\text .$$
This shows that every element of $\mathcal W$ is contained in the
subspace spanned by $\{ N_T \} _{T\subseteq [s], |T|\leq nt}$ which
has dimension at most
$$\sum _{i=0}^{nt} {s\choose i}$$
as claimed. Thus, we proved that all $\sum _{i=0}^n{m\choose i}$
elements of $\mathcal W$ are linearly independent vectors of a
subspace with dimension at most $\sum _{i=0}^{nt}{s\choose i}$.
This proves the inequality.

To see that the result also holds if we allow
one-sided error, note that the only part where
we made use of the correctness of the algorithm is the proof
that the vectors in $\mathcal W$ are linearly independent. We
used that $W_S\ket{i}\in A_0$ for $i\not\in S$ so that
$W_S^{\otimes n}\ket{\psi _T}$ is orthogonal to $A_1^{\otimes n}$
for $S\neq T$ with $\alpha _S>0$ and we used that $W_T^{\otimes n}\ket{\psi _T}$
has a non-zero projection on $A_1^{\otimes n}$. This is also achieved
by an algorithm that correctly answers membership queries for $i\not\in S$
but has some probability $\epsilon <1$ for making an error on $i\in S$.
\end{proof}
In the same paper, Radhakrishnan, Sen and Venkatesh also proved a lower bound for the space complexity of
data structures for set membership that support queries which make $t$
bit-probes and have \em two-sided \em error probability $\epsilon$ with
$m/n < \epsilon < 2^{-3t}$.
We will not give the proof of this result here, the main ideas are similar to that of Theorem \ref{mem_exact}.
\begin{thm}[{\cite[Theorem 4]{qu_set_mem}}]
Let $t\geq 1$ and let $n/m< \epsilon < 2^{-3t}$. Suppose there is a
quantum-access data structure with a space complexity of $s$ bits and
a time complexity of $t$ bit-probes for set membership with two-sided error at most
$\epsilon$. Then, we have
$$s = \Omega\left( \frac{nt\log (m/n)}{\epsilon^{1/(6t)}\log (1/\epsilon)}\right)\text .$$
\end{thm}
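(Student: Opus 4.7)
My plan is to mirror the two-part structure of the proof of Theorem \ref{mem_exact}: a polynomial-style decomposition that confines all the operators $W_S^{\otimes n}$ to a low-dimensional subspace, paired with a counting argument that exhibits many vectors in that subspace. The first part of the argument goes through essentially unchanged, since it depended only on the algebraic form of $W_S$ and the fact that the oracle is diagonal. Concretely, exactly as in the earlier proof we write
$$W_S^{\otimes n} = \sum_{T \subseteq [s],\,|T| \leq nt} (-1)^{[\phi(S)]_T}\, N_T,$$
with the $N_T$ independent of $S$, so the span of $\{W_S^{\otimes n}\}_{S}$ has dimension at most $\sum_{i=0}^{nt}\binom{s}{i}$. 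The rest of the work is to produce roughly $\binom{m}{n}$ vectors in that span that are sufficiently independent to force this dimension bound to be large.

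For the second part I would first try to recover something like genuine linear independence by pre-processing the algorithm. Starting from the two-sided-error algorithm of time $t$, I would amplify by running $k$ independent copies and taking a quantumly-computed majority vote. This produces a new quantum cell-probe algorithm of time $t' = kt$ with two-sided error $\epsilon_k \leq (4\epsilon)^{k/2}$ by a standard Chernoff estimate. If $k$ is chosen large enough that $\epsilon_k$ is much smaller than $1/\binom{m}{n}$ (or at least small enough that a Gram-matrix perturbation argument applies), the test vectors $W_S^{\prime\,\otimes n}\ket{\psi_T}$ used in the proof of Theorem \ref{mem_exact} become \emph{approximately} orthogonal to the right projection subspaces in the exact sense needed there: $\|P_1 W_T^{\prime\,\otimes n}\ket{\psi_T}\|^2 \geq (1-\epsilon_k)^n$ and $\|P_1 W_S^{\prime\,\otimes n}\ket{\psi_T}\|^2$ is small whenever $S$ fails to contain some element of the maximal $T$.

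Putting the two parts together yields an inequality
$$\sum_{i=0}^{n}\binom{m}{i} \;\lesssim\; \sum_{i=0}^{nkt}\binom{s}{i},$$
and taking logarithms gives $n\log(m/n) = O\bigl(nkt\,\log(s/nkt)\bigr)$, i.e.\ $s \gtrsim nkt\cdot(m/n)^{1/(kt)}$. The final step is to optimize $k$ against the constraints. The amplification parameter $k$ must be at least $\Omega(\log(1/\epsilon)/\ldots)$ for the approximate-independence step to go through, while making $k$ larger weakens the dimension bound. Tuning $k$ to balance these two effects is what produces the unusual exponent $1/(6t)$ in the denominator of the stated bound.

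The hard part, and the place where the proof really diverges from Theorem \ref{mem_exact}, is the quantitative approximate-independence step. In the exact case one simply projects and observes that $\alpha_T W_T^{\otimes n}\ket{\psi_T} = 0$ is absurd. With two-sided error one must instead estimate the condition number (or smallest singular value) of the Gram matrix of the relevant test vectors — or equivalently bound the norm of the inverse — and show that even after the cross-terms from all $S \neq T$ are accounted for, the projection argument still yields a contradiction unless $s$ is large. I expect this to require either a careful matrix-perturbation lemma or a randomized ``planted'' encoding argument, and it is the balance struck there that controls both $\epsilon^{1/(6t)}$ and the $\log(1/\epsilon)$ factors.
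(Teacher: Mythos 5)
The survey you are reading does not actually prove this theorem: it states it and explicitly remarks that ``the main ideas are similar to that of Theorem~\ref{mem_exact}'' and that the proof will not be given. So there is no proof in the paper to compare against, and your proposal must be judged on its own.

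Your outline is in the right spirit and matches the paper's hint: the low-dimensional decomposition $W_S^{\otimes n} = \sum_{|T|\leq nt}(-1)^{[\phi(S)]_T}N_T$ is input-size-agnostic and carries over verbatim, and amplification is a natural way to attack two-sided error. But the step you flag as ``the hard part'' is indeed where the proposal stops being a proof. With exact or one-sided error, for a maximal $T$ with $\alpha_T\neq 0$ the projection $P_1\sum_S\alpha_S W_S^{\otimes n}\ket{\psi_T}$ kills every term except $\alpha_T W_T^{\otimes n}\ket{\psi_T}$, giving an immediate contradiction. With two-sided error one only gets $|\alpha_T|(1-\epsilon_k)^{n/2}\leq\sqrt{\epsilon_k}\sum_{S\neq T}|\alpha_S|$, which is not a contradiction for an arbitrary nontrivial linear combination; the off-diagonal contributions can overwhelm the diagonal one. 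Turning this into a rank lower bound requires either (a) pairwise near-orthogonality strong enough to control the Gram matrix, or (b) a cleverer restriction of the family of test vectors. Route (a) would force $\epsilon_k \lesssim 1/\binom{m}{n}$, hence $k = \Omega\bigl(n\log(m/n)/\log(1/\epsilon)\bigr)$; plugging $t' = kt$ into the dimension count $\sum_{i\leq nt'}\binom{s}{i}$ then yields a bound with an $n^2$ dependence rather than the single factor of $n$ in the statement, and the factors $\epsilon^{1/(6t)}$ and $\log(1/\epsilon)$ do not fall out. The constraint $n/m < \epsilon < 2^{-3t}$, which the theorem imposes and which controls both the amplification rate and the range of validity, does not enter your parameter tuning at all. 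So the gap you name is real, it is the whole difficulty, and the naive balancing you propose does not close it; the actual argument in \cite{qu_set_mem} must be sharper (in particular it almost certainly does not push $\epsilon_k$ all the way below $1/\binom{m}{n}$, but instead trades a weaker approximate-independence claim against a restricted class of test vectors). Your sketch is a reasonable starting point but is not yet a proof, nor does it pin down where the exponent $1/(6t)$ comes from.
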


\subsection{Lower Bounds for Predecessor Search}
\label{pred_bound}
Beame and Fich gave an asymptotic lower bound on the time complexity of predecessor search data structures given an upper bound of $O(n^2\log n/\log\log n)$
$(\log m)$-blocks for the space complexity
in the classical deterministic setting that matches the time complexity of
their data structure. However, in \cite{qu_cell_probe} Pranab Sen and Srinivasan Venkatesh
obtained a simpler proof in a computational model that they call
the \em address-only quantum cell-probe \em model.
Algorithms in this model are quantum cell-probe algorithms but they
do not use the full power of the model of quantum computation.
The address-only model still encompasses classical deterministic and
probabilistic computation and also some famous quantum algorithms,
such as Grover's search algorithm. Therefore, their lower bound shows that
the data structure by Beame and Fich is asymptotically optimal
also in the setting of classical probabilistic computation.
The restriction of address-only algorithms is that we may use quantum
parallelism only over the address lines. This is explained below in more detail.
\begin{defin}
A quantum cell-probe algorithm $U_0,\dots ,U_t$ has the \em address-only \em
property if, before each query to the oracle, the state of the
qubits can be written as a tensor product of two quantum states
where one state consists of the data qubits
and the other consists
of the address- and workspace-qubits. That is, the data qubits may
not be entangled with the rest. Furthermore, the state of the
data qubits may only depend on the stage of the algorithm, but not on the input.

In order to describe this in more precise terms, let $\mathcal H$
be the Hilbert space that the algorithm operates on. We can write
$\mathcal H = \mathcal H_L\otimes \mathcal H_Z\otimes\mathcal H_B$ where
$\mathcal H_L$ consists of the address qubits, $\mathcal H_Z$ consists
of the workspace qubits and $\mathcal H_B$ of the data qubits. We
require that there are $\ket{b_0}_B,\dots ,\ket{b_{t-1}}_B\in\mathcal H_B$ such
that for every possible data $d$, every query $q$ and every integer $i$ with
$0\leq i<t$, we can write $U_iO_dU_{i-1}O_d\dots O_dU_0\ket{q}$ as
$\ket{\phi}\otimes \ket{b_i}_B$ for some
$\ket{\phi}\in\mathcal H_L\otimes\mathcal H_Z$.
\end{defin}
The proof of the lower bound uses a relation between data
structures with quantum access and quantum communication protocols.
In the quantum communication model, there are two parties, called
Alice and Bob, who want to compute a function $f:A\times B\to C$
where $A,B,C$ are finite sets. At the beginning, Alice holds
an encoding $\ket a$ of some $a\in A$ in the computational basis
and some workspace qubits initialized to $\ket 0$. Bob holds an encoding
$\ket b$ of $b\in B$ in the computational basis and also some workspace qubits
in state $\ket 0$.
One of the two parties begins and they take turns alternately. At each
turn, they may apply a unitary transform to the qubits that
they hold (but not to qubits held by the other party) and then give
some of their qubits to the other party. A communication protocol $\mathcal P$
for the function $f$ specifies transforms that are applied
and which qubits are sent by Alice and Bob so that they end up
with a state from which one party can learn $f(a,b)$ after performing
some measurement. We can consider exact protocols or protocols that
have some error probability.

Often, the study of communication complexity is just concerned
with the total amount of communication that occurs. For our purposes,
we need to look at a more fine-grained picture. To this end, we define
\em secure \em and \em safe \em protocols.
\begin{defin}
A protocol $\mathcal P$ is called \em secure \em if the input qubits are never
measured and never sent as messages. (Since they are in the computational
basis, it is possible for Alice and Bob to make copies of their respective
inputs, so every protocol can be made secure without increasing the
communication.)

A protocol is called $[t,c,l_1,\dots ,l_t]^A$-\em safe \em
($[t,c,l_1,\dots ,l_t]^B$-\em safe\em ) if it is a secure protocol where
Alice (Bob) starts and which has exactly $t$ rounds of communication
such that
\begin{itemize}
\item the
first message by Alice (Bob) consists of two parts: The first part has length
$c$ and its density matrix must be independent of the input. It is called the
\em safe overhead\em . The second part contains the message proper. It has
length $l_1$ and its density matrix is allowed to depend on the input. Thus,
the total length of the message is $l_1+c$.
\item for $1<i\leq t$, the $i$th message in the protocol has length
at most $l_i$.
\end{itemize}
We say that a protocol $(t,c,l_a,l_b)^A$-safe ($(t,c,l_a,l_b)^B$-safe)
if and only if it is $[t,c,l_1,\dots ,l_t]^A$-safe ($[t,c,l_1,\dots ,l_t]^B$-safe)
where $l_i = l_a$ for odd (even) $i$ and $l_i = l_b$ for even (odd) $i$.
\end{defin}
The safe overhead may be used, for example, to share EPR-pairs. We
also need to define \em public coin\em -protocols.
\begin{defin}
In a \em public-coin \em communication protocol, we have at the beginning, in
addition to the inputs and workspace qubits, another quantum state
of the form $\sum _c \sqrt{p_c}\ket c _A\ket c _B$ where the subscripts
$A$ and $B$ denote ownership by Alice and Bob respectively and the $p_c$
are positive real numbers. That state is called the \em public coin \em and it
is never measured and never sent as a message. However, Alice and Bob can make
a copy of their half using a CNOT-transform (possibly entangled with the
original coin).
\end{defin}
Here, since the $p_c$ are all positive,
the quantum state $\sum _c\sqrt{p_c}\ket c$ behaves like a
classical random variable $C$ that takes on value $c$ with probability
$p_c$ when measured. Alternatively, we can view a public-coin
protocol as a probability distribution over coinless protocols, and
safe public-coin protocols as distributions over coinless safe protocols.
The following lemma describes how data structures and quantum communication
complexity relate.
\begin{lem}
\label{data_to_com}
Let $f:D\times Q\to A$ be a static data structure problem. Suppose
that there is a data structure for this problem that has block-size
$w$, requires $s$ blocks of space and has a quantum cell probe algorithm
that answers queries with success probability $p$ making $t$ probes. Then there
is a $(2t,0,w+\log s,w+\log s)^A$-safe coinless protocol that solves
the communication problem where Bob is given $d\in D$, Alice is
given $q\in Q$ and they want to compute $f(d,q)$ with success probability $p$.
Note that, while it is usual in communication complexity that Alice has the
\em first \em input for the function that has to be computed, here, she has the
\em second \em input.
If the query algorithm for the data structure is address-only,
we have a $(2t,0,\log s,w+\log s)^A$-safe coinless protocol for this
problem.
\end{lem}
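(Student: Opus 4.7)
The plan is to have Alice simulate the quantum cell-probe algorithm locally, using Bob as the oracle for $O_{\phi(d)}$. Before starting, Alice copies her input $q$ into fresh ancillae via CNOTs and runs the simulation on these copies, so her input register is never sent or measured; the protocol is thus secure. She initializes the simulated registers $\mathcal H_L\otimes \mathcal H_B\otimes \mathcal H_Z$ in state $\ket q\ket 0\ket 0$ and applies the input-independent unitary $U_0$. Because Bob does not need to act before Alice's first message, no safe overhead is required and we can take $c=0$.

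For each $i=1,\dots ,t$, Alice must apply $O_{\phi(d)}$. In the general case she ships both the address register $\mathcal H_L$ ($\log s$ qubits) and the data register $\mathcal H_B$ ($w$ qubits) to Bob; Bob, who holds $d$ and hence $\phi(d)$, applies $O_{\phi(d)}$ and returns both registers to Alice, who then applies $U_i$ and proceeds to the next round. This yields $2t$ messages of length $\log s + w$ each and hence a $(2t,0,w+\log s,w+\log s)^A$-safe protocol. After the final return, Alice applies $U_t$ and measures the designated output qubits, producing the correct answer with probability $p$ by correctness of the simulated algorithm.

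In the address-only case we exploit the fact that, just before the $i$-th query, the global state factors as $\ket{\phi_{i-1}}_{LZ}\otimes\ket{b_{i-1}}_B$, where $\ket{b_{i-1}}_B$ is a pure state depending only on the stage of the algorithm and on $U_0,\dots ,U_{i-1}$. Alice therefore discards her copy of $\mathcal H_B$ and sends only the $\log s$ qubits of $\mathcal H_L$. Bob locally prepares a fresh $\ket{b_{i-1}}_B$, applies $O_{\phi(d)}$ on his joint register $\mathcal H_L\otimes \mathcal H_B$, and returns both registers ($\log s + w$ qubits) to Alice. The recombined state at Alice equals $O_{\phi(d)}(\ket{\phi_{i-1}}_{LZ}\otimes\ket{b_{i-1}}_B)$, so the simulation is exact; this produces the $(2t,0,\log s,w+\log s)^A$-safe protocol.

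The main subtlety is the address-only reduction: one must verify that replacing Alice's discarded $\mathcal H_B$ with Bob's freshly prepared copy yields the correct post-oracle joint state, even when $\ket{\phi_{i-1}}_{LZ}$ is entangled across $\mathcal H_L$ (which Bob holds during the query) and $\mathcal H_Z$ (which Alice retains). This follows directly from the tensor-product factorization in the definition of the address-only property together with the purity and input-independence of $\ket{b_{i-1}}_B$. The remaining ingredients --- counting rounds, tracking message lengths, and confirming the final measurement succeeds with probability $p$ --- are routine.
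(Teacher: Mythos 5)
Your proof is correct and follows essentially the same approach as the paper: Alice simulates the query algorithm, sending the address (and, in the general case, data) qubits to Bob so that he can apply the oracle $O_{\phi(d)}$, and in the address-only case Bob prepares the data qubits himself since they are unentangled and input-independent. Your write-up is somewhat more explicit about securing the input via CNOT copies, about round- and length-counting, and about verifying that the address-only substitution preserves the post-oracle state, but these are routine details that the paper leaves implicit.
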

\begin{proof}
The communication protocol simply simulates the query algorithm
of the data structure. Instead of applying the oracle transform,
Alice sends the address- and data-qubits to Bob who can perform
the oracle transform since he knows $d$. Bob then sends these
qubits back to Alice who continues with the algorithm. In the
case that the algorithm is address-only, Alice does not need
to send the data-qubits. Since they are not entangled with any other
qubits and since their state is not affected by Alice's input, Bob
can prepare the appropriate data-qubits by himself.
\end{proof}
This lemma can help us prove lower bounds on $t$ for a data structure
using space $s$ for a given problem. If we can prove a lower bound
on the communication required, we also know a lower bound on $t$.
In many cases, we have $\log (s) = O(w)$. Recall, for example,
the Perfect Hashing method. There, we have $s = O(n)$ while
$w = \log m$. Since $n\leq m$, we have $\log s = O(\log n) = O(\log m)$.
In such cases, if there is an address-only query algorithm, we have
a $(2t,0,\log s,O(w))^A$-safe protocol for the communication problem.
Thus, if $s$ is small compared to $w$, Alice's messages are significantly shorter
than Bob's. We will use this asymmetry to prove the lower bounds.

We now introduce some notions from quantum information theory. More
about this subject can be found in \cite[Part III]{mikeike}.
\begin{defin}
Let $\rho$ be the density matrix of some quantum system $A$.
The \em von Neumann entropy \em of $A$ is $S(A) = S(\rho ) = -\text{Tr}(\rho \log\rho)$.
The \em mutual information \em of two disjoint quantum systems $A$ and $B$
is $I(A:B) = S(A) + S(B) - S(AB)$.
\end{defin}
We now show some properties of the von Neumann entropy function
and of mutual information.
\begin{lem}
\label{nmann}
The von Neumann entropy and mutual information have the following
properties:
\begin{enumerate}
\item The von Neumann entropy function is subadditive, i.e.,
for all quantum systems $A$ and $B$, we have
$S(AB)\leq S(A) + S(B)$. It follows that $I(A:B)$ non-negative for all quantum
systems $A$ and $B$.
\item $|S(A)-S(B)|\leq S(AB)$.
\item For disjoint quantum systems $A,B,C$, we have
$I(A:BC) = I(A:B) + I(AB:C) - I(B:C)$.
\item $0\leq I(A:B)\leq 2S(A)$.
\item If the Hilbert space of $A$ has dimension $d$, then
$S(A)\leq \log d$.
\end{enumerate}
\end{lem}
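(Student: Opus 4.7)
The plan is to treat the five properties roughly in the order in which they can be bootstrapped from a small set of standard facts about $S(\rho) = -\text{Tr}(\rho\log\rho)$. The single non-trivial ingredient I will rely on is Klein's inequality, the non-negativity of the quantum relative entropy $S(\rho\|\sigma) = \text{Tr}(\rho\log\rho) - \text{Tr}(\rho\log\sigma) \geq 0$; the remaining items are either purification arguments or algebraic rearrangements of the definition.

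For (1), I would apply Klein's inequality to the joint state $\rho_{AB}$ and the product $\rho_A\otimes \rho_B$. A direct computation of $\text{Tr}(\rho_{AB}\log(\rho_A\otimes\rho_B))$ splits into $\text{Tr}(\rho_A\log\rho_A) + \text{Tr}(\rho_B\log\rho_B)$ via $\log(\rho_A\otimes\rho_B) = \log\rho_A\otimes I + I\otimes\log\rho_B$ and the fact that the partial traces of $\rho_{AB}$ are $\rho_A$ and $\rho_B$. This yields $S(AB)\leq S(A)+S(B)$, and the non-negativity of $I(A:B)$ is just a rewriting of this.

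For (2), I would use a purification argument: introduce an auxiliary system $R$ such that $ABR$ is in a pure state. Then any subsystem and its complement have equal von Neumann entropy, so $S(AR)=S(B)$ and $S(BR)=S(A)$ and $S(ABR)=0$. Applying (1) to the pair $(A,R)$ gives $S(B)=S(AR)\leq S(A)+S(R)=S(A)+S(AB)$, and the symmetric inequality follows by swapping roles. For (3), I would simply expand both sides using the definition of mutual information; after cancellation only $S(A)+S(BC)-S(ABC)$ remains on each side. For (4), non-negativity is (1); the upper bound $I(A:B)\leq 2S(A)$ follows from (2) via $S(AB)\geq S(B)-S(A)$, so $I(A:B)=S(A)+S(B)-S(AB)\leq 2S(A)$. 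For (5), I would note that $S$ is maximized on a $d$-dimensional Hilbert space by the maximally mixed state $I/d$: applying Klein's inequality to $\rho$ and $I/d$ gives $S(\rho)\leq\log d$, with equality at $\rho = I/d$.

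The only step that requires care is (1), since it is the single place where a genuinely quantum inequality (Klein's) enters; once it is in hand, (2) is a slick consequence of purification, and (3)--(5) are essentially bookkeeping. The main obstacle is therefore establishing Klein's inequality cleanly or quoting it from a reference such as \cite{mikeike}, after which the rest of the lemma falls out by the sequence of reductions sketched above.
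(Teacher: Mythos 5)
Your proposal is correct, but it diverges from the paper's proof in an interesting way. For properties 1 and 2, the paper simply cites \cite[Section 11.3]{mikeike}, whereas you supply actual derivations: Klein's inequality applied to $\rho_{AB}$ versus $\rho_A\otimes\rho_B$ for subadditivity, and a purification argument for the Araki--Lieb inequality. Both are the standard textbook proofs, so this is a reasonable way to make the lemma self-contained. Your treatments of properties 3 and 4 coincide with the paper's (algebraic expansion, then subadditivity plus property 2 respectively). The genuine divergence is in property 5. The paper gives an elementary, purely classical argument: it proves by induction that $\sum_i \alpha_i\log\alpha_i$ subject to $\sum_i\alpha_i = t$ is minimized when all $\alpha_i$ are equal, then diagonalizes $\rho_A$ and reads off $S(A)\leq\log d$ from this optimization. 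You instead apply Klein's inequality to $\rho$ and the maximally mixed state $I/d$, which collapses the whole argument to one line. The trade-off is clear: your route centralizes on a single quantum-information inequality and gives short, uniform proofs of 1, 2, and 5, but it does so by importing Klein's inequality as a black box; the paper's route for 5 is longer and involves a slightly fiddly induction, but it is self-contained at the level of elementary calculus and avoids invoking relative entropy at all. Either is perfectly valid; your version is slicker if Klein's inequality is already available, while the paper's is better suited to a reader who has not yet seen relative entropy.
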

\begin{proof}
Properties 1 and 2 are proved in \cite[Section 11.3]{mikeike}. Let us prove the
remaining properties here. Property 3 holds because
\begin{align*}
I(A:BC) = S(A) + S(BC) - S(ABC) &= I(A:B) - S(B) + S(AB) + S(BC) - S(ABC)\\
&= I(A:B) + (S(AB) + S(C) - S(ABC)) - S(B) - S(C) + S(BC)\\
&= I(A:B) + I(AB:C) - (S(B) + S(C) - S(BC))\\
&= I(A:B) + I(AB:C) - I(B:C)
\end{align*}

Property 4 holds because
$$0 = S(A) + S(B) - S(A) - S(B)\leq S(A)+S(B) - S(AB) = I(A:B)$$
and
$$I(A:B) = S(A) + S(B) - S(AB)\leq S(A) + S(B) - |S(A) - S(B)|\leq 2S(A)
\text .$$

To prove that property 5 holds, we show by induction in $d$ that the term
$\sum _{i = 0}^d\alpha _i\log \alpha _i$
with $\sum _i\alpha _i = t$ and $0\leq\alpha _i\leq 1$ is minimized when
all $\alpha _i$ are equal. The base case $d = 1$ is trivial. Now suppose
that our claim is true for $d-1$. We show that it also holds for $d$. We write
$$\sum _{i = 1}^d\alpha _i\log\alpha _i = \alpha _d\log\alpha _d
+ \sum _{i = 1}^{d-1}\alpha _i\log\alpha _i$$
and find the minimum of this expression under the condition that
$\alpha _1,\dots , \alpha _d\geq 0$ and $\sum _i^d\alpha _i = t$ in two steps.
First, treating $\alpha _d$ as a variable, we minimize the term
$\sum _{i=1}^{d-1}\alpha _i\log\alpha _i$ under the condition that
$\sum _{i=1}^{d-1}\alpha _i = t-\alpha _d$. Then, we find the value for $\alpha _d$
that minimizes the whole term.

The solution for the first step is given by the induction hypothesis:
$\sum _{i = 1}^{d-1}\alpha _i\log\alpha _i$ is minimized when
$\alpha _1,\dots , \alpha _{d-1}$ are equal, i.e.,
$$\alpha _1,\dots , \alpha _{d-1} = \frac{t - \alpha _d}{d-1}
\text .$$
Using these values for the $\alpha _i$, we have
$$\sum _{i = 1}^d\alpha _i\log\alpha _i = \alpha _d\log\alpha _d
+ (t-\alpha _d)\log\left( \frac{t-\alpha _d}{d-1}\right)\text .$$
Viewing the expression above as a function in $\alpha _d$, we can easily
show that it achieves its global minimum at $\alpha _d = t/d$. For this
value of $\alpha _d$, we get
$$\alpha _1,\dots , \alpha _{d-1} = \frac{t - t/d}{d-1} =
\frac{(d-1)t}{(d-1)d} = \frac t d$$
and so the minimum is achieved when all $\alpha _i$ are equal as we claimed.

Now consider a quantum
system $A$ with density matrix $\rho _A$. We can choose a basis
$\ket{a_1},\dots , \ket{a_n}$ of quantum states such that
$\rho _A$ is a diagonal matrix with respect to that basis, i.e.,
$\rho _A = \sum _{i=1}^d\alpha _i\ket{a_i}\bra{a_i}$ for some
$\alpha _i \geq 0$ with $\sum _{i=1}^d\alpha _i = 1$. We then have
$S(A) = -\sum _{i = 1}^d\alpha _i\cdot\log\alpha_i$ which is \em maximized \em
when $\sum _{i=1}^d\alpha _i\log\alpha _i$ is minimized, i.e., when
$\alpha _1,\dots , \alpha _d = 1/d$. Thus, we have
$$S(A)\leq -\sum _{i=1}^d\frac 1 d\log \left(\frac 1 d\right) = \log d\text .$$
\end{proof}
We can encode classical random variables as quantum systems. Let
$\mathcal H$ and $\mathcal K$ be  disjoint finite-dimensional Hilbert spaces,
$X$ a system in $\mathcal H$ and $Q$ a system in $\mathcal K$.
Suppose that the density matrix of the joint system $XQ$ has a diagonal
representation $\sum _x p_x\ket{x}\bra{x}\otimes\sigma _{x}$
where $p_x > 0$, $\sum _x p_x = 1$, the $\ket x$ are orthonormal vectors
in $\mathcal H$ and
the $\sigma _{x}$ are density matrices in $\mathcal K$. We say that $X$ is
a classical random variable and that $Q$
is a quantum encoding of $X$. The reduced density matrix of $Q$ is
$$\sigma = \text{Tr}_X\left( \sum _x p_x\ket{x}\bra{x}\otimes\sigma _{x}\right)
=\sum _xp_x\sigma _x$$
so if we consider $Q$ on its own, we can describe it as having density
matrix $\sigma _x$ with probability $p_x$.
We have $S(XQ) = S(X) + \sum _x p_xS(\sigma _x)$ and
$I(X:Q) = S(X) + S(Q) - S(X) - \sum _xp_xS(\sigma _x)
= S(Q) - \sum _x p_xS(\sigma _x)$.

Now, consider two classical random variables $X$ and $Y$ and a
quantum encoding $Q$ of the joint random variable $XY$. That is,
we can write the density matrix of $XYQ$ as $\sum _{x,y}p_{x,y}\ket x
\ket y\bra y\bra x \otimes \sigma _{x,y}$ for density matrices
$\sigma _{x,y}$ in the Hilbert space of $Q$ and $p_{x,y} \geq 0$
with $\sum _{x,y} p_{x,y} = 1$. Let $q_y^x = \text{Pr}(Y = y|X = x)$.
We let $Q^x = \sum _y q_y^x\sigma _{x,y}$. The conditional mutual
information between $Y$ and $Q$ is defined as
$I((Y:Q)|X=x) = I(Y:Q^x)$.

Some properties of random variable encodings that we need are
described in the following propositions.
\begin{prop}
\label{multi_vars}
Suppose $M$ is a quantum encoding of a classical random variable
$X = X_1\dots X_n$ where the $X_i$ are independent classical random
variables. Then, $I(M:X_1\dots X_n) = \sum _i I(X_i:MX_1\dots X_{i-1})$.
\end{prop}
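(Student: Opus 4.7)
The plan is to prove this by induction on $n$, using Property 3 of Lemma \ref{nmann} as the main decomposition tool and independence of the $X_i$ to kill unwanted terms. The base case $n=1$ is trivial since $I(M:X_1) = I(X_1:M)$ by the symmetry of mutual information.

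For the inductive step, I would apply Property 3 of Lemma \ref{nmann} with $A = M$, $B = X_1\dots X_{n-1}$ and $C = X_n$, which yields
\begin{equation*}
I(M:X_1\dots X_n) = I(M:X_1\dots X_{n-1}) + I(MX_1\dots X_{n-1}:X_n) - I(X_1\dots X_{n-1}:X_n).
\end{equation*}
The third term on the right vanishes: since the $X_i$ are independent classical random variables, the joint density matrix of $X_1\dots X_{n-1}X_n$ factors as a tensor product of the density matrices of $X_1\dots X_{n-1}$ and of $X_n$, so $S(X_1\dots X_n) = S(X_1\dots X_{n-1}) + S(X_n)$ and hence $I(X_1\dots X_{n-1}:X_n) = 0$. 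Applying the induction hypothesis to $I(M:X_1\dots X_{n-1})$ and using symmetry $I(MX_1\dots X_{n-1}:X_n) = I(X_n:MX_1\dots X_{n-1})$, the identity
\begin{equation*}
I(M:X_1\dots X_n) = \sum_{i=1}^n I(X_i:MX_1\dots X_{i-1})
\end{equation*}
falls out.

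I do not expect any real obstacle here since the statement is essentially a quantum version of the classical chain rule for mutual information, with independence doing the work to rewrite the usual conditional-mutual-information terms $I(X_i:M \mid X_1\dots X_{i-1})$ as the unconditional $I(X_i:MX_1\dots X_{i-1})$. The only point that deserves a sentence of care is justifying that $I(X_1\dots X_{n-1}:X_n)=0$; this is a routine computation with the diagonal density matrix of classical independent variables, but it is the place where the independence hypothesis is actually used.
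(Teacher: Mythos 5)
Your proof is correct and essentially coincides with the paper's, modulo a cosmetic flip: you apply Property~3 of Lemma~\ref{nmann} with $A=M$, $B=X_1\dots X_{n-1}$, $C=X_n$ (peeling off the last variable and keeping $M$ as the encoding system in the induction hypothesis), whereas the paper takes $A=M$, $B=X_1$, $C=X_2\dots X_n$ and applies the induction hypothesis to the composite system $MX_1$ as the encoding of $X_2\dots X_n$. Both routes use the same chain-rule identity and the same independence argument to kill the $I(B:C)$ term, so the substance is identical.
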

\begin{proof}
We prove this by induction in $n$. For $n = 1$, there is nothing
to prove since $I(M:X_1) = I(X_1:M)$. Let $n > 0$ and suppose the statement
holds for $n-1$. We show that it then holds for $n$. By part 3 of
Lemma \ref{nmann}, we have
$$I(M:X_1\dots X_n) = I(M:X_1) + I(MX_1:X_2\dots X_n) - I(X_1:X_2\dots X_n)$$
and since the $X_i$ are independent, we have $I(X_1:X_2\dots X_n) = 0$.
Applying the induction hypothesis to $I(MX_1:X_2\dots X_n)$, we
can conclude that $I(MX_1:X_2\dots X_n) = \sum _{i=2}^nI(X_i:MX_1\dots X_{i-1})$.
Hence, $I(M:X_1\dots X_n) = \sum _{i=1}^n I(X_i:MX_1\dots X_i)$.
\end{proof}
\begin{prop}
\label{expect}
Let $X,Y$ be classical random variables and $M$ an encoding of
$(X, Y)$. Then $I(Y:MX) = I(X:Y) + E_X\left[ I((Y:M)|X=x)\right]$.
\end{prop}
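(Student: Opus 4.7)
The plan is to expand the left-hand side using the chain rule for mutual information (property 3 of Lemma \ref{nmann}) and then identify the resulting difference of mutual informations with the expected conditional mutual information on the right-hand side by exploiting the classical-quantum structure of $XYM$.

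First I would apply property 3 of Lemma \ref{nmann} with $A = Y$, $B = X$, $C = M$ to obtain
\[
I(Y:XM) \;=\; I(Y:X) + I(YX:M) - I(X:M).
\]
Since $I(Y:X) = I(X:Y)$, the proposition reduces to the identity
\[
I(YX:M) - I(X:M) \;=\; E_X\bigl[\,I((Y:M)\mid X=x)\,\bigr].
\]

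To prove this identity I would use that $X$ and $Y$ are classical and $M$ is a quantum encoding of $(X,Y)$, so the joint state has the diagonal form
\[
\rho_{XYM} \;=\; \sum_{x,y} p_{xy}\,\ket{x}\bra{x}\otimes\ket{y}\bra{y}\otimes\sigma_{x,y}.
\]
For such classical-quantum states the von Neumann entropy splits cleanly: writing $M^x = \sum_y q_y^x \sigma_{x,y}$ for the reduced state of $M$ conditioned on $X=x$ (which is exactly the system appearing in the definition of $I((Y:M)\mid X=x)$), one has
\[
S(XM) = S(X) + \sum_x p_x S(M^x), \qquad S(XYM) = S(XY) + \sum_{x,y} p_{xy} S(\sigma_{x,y}),
\]
and similarly $I(Y:M^x) = S(M^x) - \sum_y q_y^x S(\sigma_{x,y})$, since $Y$ is classical inside $YM^x$. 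These formulas follow directly from the diagonal structure of the density matrix and are essentially the quantum version of conditioning that was used implicitly just before the proposition statement.

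Plugging these expressions into
\[
I(YX:M) - I(X:M) \;=\; \bigl(S(YX) - S(YXM)\bigr) - \bigl(S(X) - S(XM)\bigr)
\]
the $S(XY)$ and $S(X)$ terms cancel and one is left with $\sum_x p_x S(M^x) - \sum_{x,y} p_{xy} S(\sigma_{x,y})$, which rearranges to $\sum_x p_x\,I(Y:M^x) = E_X[I((Y:M)\mid X=x)]$, finishing the proof. The only real step requiring care is keeping track of the classical-quantum entropy decompositions; once these are in place the identity is a direct computation and no obstacle of substance remains.
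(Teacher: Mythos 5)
Your proof is correct and is, at its core, the same as the paper's: both hinge on the entropy decompositions for classical-quantum states, $S(XM) = S(X) + \sum_x p_x S(M^x)$ and $S(XYM) = S(XY) + \sum_{x,y} p_{xy} S(\sigma_{x,y})$, plus the observation that $I(Y:M^x) = S(M^x) - \sum_y q_y^x S(\sigma_{x,y})$. The only organizational difference is that you first invoke the chain rule (property~3 of Lemma~\ref{nmann}) to rewrite $I(Y:XM)$ as $I(X:Y) + I(YX:M) - I(X:M)$ and then apply the cq-state formulas to the remaining difference, whereas the paper skips the chain rule and expands $I(Y:MX) = S(Y) + S(MX) - S(MXY)$ directly, substituting the decompositions at once. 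Your route is marginally cleaner in that it isolates the target identity $I(YX:M) - I(X:M) = E_X[I((Y:M)\mid X=x)]$ as a lemma-like step, but the substantive calculations are identical.
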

\begin{proof}
We have $I(Y:MX) = S(Y) + S(MX) - S(MXY)$ and
$S(MX) = S(X) + \sum _xp_xS(Q^x)$. Also,
$$S(MXY) = S(XY) + \sum _{x,y}p_{x,y}S(\sigma _{x,y}) = S(XY) + \sum _{x,y}p_xq_y^xS(\sigma _{x,y})$$
and this gives us
\begin{align*}
I(Y:MX) &= S(X) + S(Y) - S(XY) + \sum _xp_xS(Q^x) - \sum _{x,y}p_xq_y^xS(\sigma _{x,y})\\
&= I(X:Y) + \sum _xp_x\left( S(Q^x) - \sum _yq_y^xS(\sigma _{x,y})\right)
\end{align*}
Furthermore, we have $\sum _y q_y^x S(\sigma _{x,y})
= S(YQ^x) - S(Y)$. Thus,
$$S(Q^x) - \sum _yq_y^xS(\sigma _{x,y})
= S(Q^x) + S(Y) - S(YQ^x) = I(Y:Q^x)$$
and it follows that
$$I(Y:MX) = I(X:Y) + \sum _x p_xI(Y:Q^x) = I(X:Y) + E_X\left[ I((Y:Q)|X=x)\right]$$
as claimed.
\end{proof}
We now prove a proposition that gives an upper bound for the
mutual information of the first message in a safe quantum protocol
and the input that
does not depend on the size of the safe overhead.
\begin{prop}
\label{safe_bound}
Let $M_1$ and $M_2$ be finite-dimensional disjoint quantum systems
and $M=M_1M_2$ an encoding of a classical random variable $X$. Suppose
that the density matrix of $M_2$ is independent of the value $x$
of $X$, i.e., $\text{Tr}_{M_1}(\sigma _x) = \text{Tr}_{M_1}(\sigma _y)$ for
all $x$ and $y$ in the range of $X$. If $M_1$ is supported on $a$
qubits, we have $I(X:M)\leq 2a$.
\end{prop}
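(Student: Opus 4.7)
The plan is to apply the chain rule from Lemma~\ref{nmann}(3) to decompose $I(X:M) = I(X:M_1M_2)$ into pieces that I can control using the hypothesis on $M_2$ and a dimension bound on $M_1$. Concretely, I would write
$$I(X:M_1M_2) = I(X:M_2) + I(XM_2:M_1) - I(M_1:M_2).$$

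The first step is to argue that $I(X:M_2) = 0$. Since $X$ is classical, the joint density matrix of $XM$ has the diagonal form $\sum_x p_x \ket{x}\bra{x}\otimes \sigma_x$, so tracing out $M_1$ gives $\rho_{XM_2} = \sum_x p_x\ket{x}\bra{x}\otimes \mathrm{Tr}_{M_1}(\sigma_x)$. The hypothesis says $\mathrm{Tr}_{M_1}(\sigma_x)$ is the same density matrix $\tau$ for every $x$ in the support of $X$, so $\rho_{XM_2} = \left(\sum_x p_x\ket{x}\bra{x}\right)\otimes \tau$ is a product state. Hence $S(XM_2) = S(X) + S(M_2)$ and $I(X:M_2) = 0$.

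For the second term, part 4 of Lemma~\ref{nmann} gives $I(XM_2:M_1) \leq 2S(M_1)$, and part 5 gives $S(M_1)\leq \log 2^a = a$ since $M_1$ lives in a Hilbert space of dimension at most $2^a$. Thus $I(XM_2:M_1)\leq 2a$. For the third term, part 1 gives $I(M_1:M_2)\geq 0$, so $-I(M_1:M_2)\leq 0$. Putting the three estimates into the decomposition yields
$$I(X:M) = 0 + I(XM_2:M_1) - I(M_1:M_2) \leq 2a,$$
which is the claim.

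The only non-routine step is the vanishing of $I(X:M_2)$, and even that reduces to the observation that a classical variable is uncorrelated from a quantum system precisely when the latter's conditional density matrix does not depend on the classical value. Everything else is bookkeeping with the properties already established in Lemma~\ref{nmann}.
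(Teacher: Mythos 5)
Your proof is correct and follows essentially the same route as the paper: show that the product form of $\rho_{XM_2}$ forces $I(X:M_2)=0$, then apply the chain rule (part~3 of Lemma~\ref{nmann}) together with $I(XM_2:M_1)\leq 2S(M_1)\leq 2a$ and $I(M_1:M_2)\geq 0$. You merely make the use of parts~1 and~5 of the lemma slightly more explicit than the paper does.
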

\begin{proof}
Let $\sigma$ be such that $\text{Tr}_{M_1}(\sigma _x) = \sigma$ for
all $x$.
First, we show that $X$ and $M_2$ have no mutual information.
\begin{align*}
I(X:M_2) &= S(X) + S(M_2) - S(XM_2)\\
&= S(X) + S\left( \text{Tr}_{M_1}\left( \sum _xp_x\sigma _x\right) \right) - S\left( \sum _x p_x\ket x\bra x\otimes \text{Tr}_{M_1}(\sigma _x)\right)\\
&= S(X) + S\left( \sum _x p_x\text{Tr}_{M_1}(\sigma _x) \right) - S\left( \left( \sum _x p_x\ket x\bra x\right)\otimes \sigma\right)\\
&= S(X) + S(\sigma ) - (S(X) + S(\sigma ))\\
&= 0
\end{align*}
By parts 3 and 4 of Lemma \ref{nmann},
\begin{align*}
I(X:M) = I(X:M_1M_2) = I(X:M_2M_1) &= \overbrace {I(X:M_2)}^{=0} + I(XM_2:M_1)
- \overbrace{I(M_2:M_1)}^{\geq 0}\\
&\leq I(XM_2:M_1)\\
&\leq 2S(M_1)\\
&\leq 2a
\end{align*}
\end{proof}
We now prove the round elimination lemma which is important for
proving the lower bound. This lemma only applies to a certain
kind of communication problem.
\begin{defin}
Let $f:A\times B\to C$ be a communication problem. For any natural
number $n$, let $f^{(n)}$ be the communication problem where Alice
receives $a_1,\dots ,a_n\in A$, Bob receives $i\in [n], a_1,\dots ,a_{i-1}$ and
some $b\in B$. The goal is to compute $f(a_i,b)$.

A similar problem, which we will need later on, is $^{(n)}f$. Here,
Alice is given $a\in A$ and $i\in \{ 1,\dots ,n \}$ and Bob is
given $b_1,\dots ,b_n\in B$. The goal is to compute
$f(a,b_i)$.
\end{defin}
Consider a protocol $\mathcal P$ for problem $f^{(n)}$ where
Alice sends the first message.
Intuitively,
it seems unlikely that the first message contains a lot of
useful information for Bob, unless Alice sends her whole input,
since Alice does not know $i$. The round elimination lemma justifies
that intuition. We can transform the protocol $\mathcal P$ to a
protocol for $f$ that uses one less round of communication. In that protocol,
Bob sends the first message. The price we have to pay for the round
elimination is an increased length of Bob's first message and a
slight increase in error probability. The increased length of the
first message is, however, limited to a safe overhead. To prove
this result, we need two lemmas that we state without proof.
The first one is a version of Yao's minimax lemma in \cite{yao}.
A proof of the second one can be found in \cite[Appendix B]{qu_cell_probe}.
\begin{lem}[Yao's Minimax Lemma]
\label{Yao}
Fix some communication problem $f:A\times B\to C$.
For every $[t,c,l_1,\dots ,l_t]^A$-safe quantum communication protocol
$\mathcal P$ for computing $f$ and every probability distribution $D$ on
$A\times B$, let $\epsilon _D^{\mathcal P}$ denote the probability
that $\mathcal P$ for inputs $a,b$ sampled according to $D$ does \em not \em
yield the result $f(a,b)$. Let $\epsilon ^{\mathcal P}$ denote the worst
case probability that $\mathcal P$ does not result in $f(a,b)$.
We have
$$\inf _{\mathcal P:\text{public coin}}\epsilon ^{\mathcal P}
= \sup _D \inf _{\mathcal P:\text{coinless}}\epsilon _D^{\mathcal P}
= \sup _D \inf _{\mathcal P:\text{public coin}}\epsilon _D^{\mathcal P}$$
\end{lem}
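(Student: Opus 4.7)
The plan is to establish the three quantities as equal by handling the easier equality first (between the two suprema), and then using Sion's minimax theorem for the main equality. I will write $\alpha = \inf_{\mathcal P:\text{public}} \epsilon^{\mathcal P}$, $\beta = \sup_D \inf_{\mathcal P:\text{coinless}} \epsilon_D^{\mathcal P}$, and $\gamma = \sup_D \inf_{\mathcal P:\text{public}} \epsilon_D^{\mathcal P}$.

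To show $\beta = \gamma$, it suffices to prove for every fixed $D$ that $\inf_{\text{coinless}} \epsilon_D^{\mathcal P} = \inf_{\text{public}} \epsilon_D^{\mathcal P}$. The $(\geq)$ direction is immediate since a coinless protocol is a degenerate public-coin protocol. For $(\leq)$, view any public-coin protocol as a convex combination $\{q_c, \mathcal P_c\}$ of coinless protocols; then $\epsilon_D^{\mathcal P} = \sum_c q_c\, \epsilon_D^{\mathcal P_c}$, so some coin value $c$ achieves $\epsilon_D^{\mathcal P_c} \le \epsilon_D^{\mathcal P}$, i.e.\ the best coinless protocol is no worse in expectation under $D$.

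For the equality $\alpha = \gamma$, first note $\epsilon^{\mathcal P} = \sup_D \epsilon_D^{\mathcal P}$, since the worst-case input is attained by a point-mass distribution. The $(\geq)$ inequality follows from the standard sup-inf swap: for any public-coin $\mathcal P$ and any $D$, $\epsilon^{\mathcal P} \geq \epsilon_D^{\mathcal P} \geq \inf_{\mathcal P'} \epsilon_D^{\mathcal P'}$. For $(\leq)$, I would invoke Sion's minimax theorem applied to the function $(D,\mathcal P) \mapsto \epsilon_D^{\mathcal P}$ on the product $\Delta(A\times B) \times \Pi$, where $\Pi$ is the set of public-coin $[t,c,l_1,\dots,l_t]^A$-safe protocols viewed as probability measures on coinless protocols. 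The map is linear (hence convex/concave) in $D$ and affine in $\mathcal P$ by the averaging identity above, so Sion's theorem yields $\inf_{\mathcal P} \sup_D \epsilon_D^{\mathcal P} = \sup_D \inf_{\mathcal P} \epsilon_D^{\mathcal P}$, which is precisely $\alpha = \gamma$.

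The main obstacle is establishing the compactness/continuity hypotheses needed for Sion's theorem. A coinless $[t,c,l_1,\dots,l_t]^A$-safe protocol is specified by a finite tuple of unitary transforms acting on the inputs, workspace, and message qubits, but the workspace is not a priori bounded. I would first argue that, without loss of generality, the workspace dimension can be capped by a function of $c$ and the $l_i$ (any extra workspace beyond what is needed to purify messages and the safe overhead contributes nothing, since the protocol's measurement statistics are determined by the reduced state on Alice's and Bob's communication registers). With a bounded workspace, the space of coinless protocols becomes a compact subset of a product of unitary groups, and $\Pi$ becomes a compact convex set of Borel probability measures under the weak-$\ast$ topology. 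The error $\epsilon_D^{\mathcal P}$ is then a polynomial in the entries of the unitaries and a linear functional of both $D$ and the mixing measure, so the continuity and convex-compactness premises of Sion's theorem are met and the argument closes.
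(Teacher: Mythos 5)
The paper does not prove Lemma~\ref{Yao}; it is stated without proof and attributed to \cite{yao}. So your argument must be judged on its own, not compared against a paper proof.

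Your overall strategy is the standard one and is correct in outline: the identity $\beta = \gamma$ by pulling the infimum inside the integral defining a public-coin protocol's error, the easy $\alpha \geq \gamma$ direction, and a continuous minimax theorem for the nontrivial direction. The part that needs repair is the justification of the workspace bound. You claim that extra workspace ``contributes nothing, since the protocol's measurement statistics are determined by the reduced state on Alice's and Bob's communication registers.'' That is not correct: the final answer is produced by measuring a local output register, and that register in general includes workspace qubits that never cross the channel, so the reduced state on the communicated qubits does not determine the output distribution. The argument that actually works is a Schmidt-rank (entanglement-rank) bound: a coinless protocol starts in a product state, local unitaries never increase the Schmidt rank across the Alice--Bob cut, and moving $l$ qubits across the cut can increase it by at most a factor $2^l$; hence after $t$ rounds the rank is at most $2^{c + \sum_i l_i}$. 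Each party may then, after each local unitary, apply the change of basis given by the Schmidt decomposition to compress its ancilla to the Schmidt rank (times the dimension of its untouched input register), which yields the desired bound on workspace. Your stated bound ``a function of $c$ and the $l_i$'' should therefore also involve $|A|$ and $|B|$. Only with this corrected argument do you get a compact parameter space of coinless protocols.

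A smaller observation: Sion's theorem only requires one side of the game to be compact, and $\Delta(A\times B)$ is already compact because $A\times B$ is finite, so strictly speaking you do not need $\Pi$ itself to be compact. The workspace bound is still useful --- it makes the space of coinless protocols a compact metric space on which $\mathcal P_c \mapsto \epsilon_D^{\mathcal P_c}$ is continuous, which is what you need for lower semicontinuity of $\mu \mapsto \int \epsilon_D^{\mathcal P_c}\,d\mu(c)$ under the weak-$*$ topology --- but the emphasis on compactness of $\Pi$ as a premise of Sion is a red herring.
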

\begin{lem}
\label{for_elim}
Suppose $f:A\times B\to C$ is a communication problem. Let
$D$ be a probability distribution on the input set $A\times B$.
Let $\mathcal P$ be a $[t,c,l_1,\dots ,l_t]^A$-safe coinless quantum
protocol for this problem.
Let $X$ and $M$ be classical random
variables that denote Alice's
input and Alice's first message under distribution $D$.
Let $\epsilon _D^\mathcal P$ be the probability that the protocol
makes an error on an input sampled according to $D$.

There is a $[t-1,c+l_1,l_2,\dots ,l_t]^B$-safe coinless protocol
$\mathcal Q$ such that
$$\epsilon _D^{\mathcal Q}\leq\epsilon _D^{\mathcal P} +
((2\ln 2)I(X:M))^{1/4}$$
\end{lem}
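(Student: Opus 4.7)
The plan is to construct $\mathcal Q$ by having Bob simulate Alice's first round of $\mathcal P$ inside the safe overhead of his own first message, then proceed with what was originally the second round of $\mathcal P$. Let $|\psi_x\rangle_{ASP}$ denote the pure state on Alice's private register $A$, safe overhead $S$ and proper message $P$ obtained by applying Alice's first unitary to input $x$ and the all-zero ancilla, and let $\bar\sigma_{ASP} = \sum_x p(x)\,|\psi_x\rangle\langle\psi_x|$ where $p$ is the marginal of $X$ under $D$. In $\mathcal Q$, Bob first prepares a purification $|\Omega\rangle_{ASPE}$ of $\bar\sigma_{ASP}$, with the purifying ancilla $E$ held locally. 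He then applies the unitary from $\mathcal P$ that he would have applied after receiving Alice's first message to $(S,P,B_{\text{priv}})$, producing his proper $l_2$-qubit message. Finally, he sends Alice (i) the $(c+l_1)$-qubit block consisting of $A$ together with $S$, whose reduced density matrix is determined by $\bar\sigma_{ASP}$ and is therefore independent of Bob's input (qualifying as safe overhead), and (ii) his proper message of length $l_2$. The remaining $t-2$ rounds of $\mathcal P$ are executed verbatim, producing a $[t-1,c+l_1,l_2,\ldots,l_t]^B$-safe coinless protocol.

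For the error analysis, observe that just after Bob's first message in $\mathcal Q$ the global state coincides with the global state just after Bob's first round in $\mathcal P$, except that the $ASP$-marginal has been replaced by $\bar\sigma_{ASP}$ instead of $|\psi_x\rangle\langle\psi_x|_{ASP}$. Since all subsequent unitaries are identical in the two protocols and trace distance is monotone under quantum operations, the two final states differ in trace distance by at most $\bigl\||\psi_x\rangle\langle \psi_x|_{ASP} - \bar\sigma_{ASP}\bigr\|_1$, so the success probabilities on a fixed input $(x,y)$ differ by at most half this quantity. Quantum Pinsker's inequality gives $\bigl\||\psi_x\rangle\langle\psi_x|-\bar\sigma\bigr\|_1 \le \sqrt{(2\ln 2)\, S(|\psi_x\rangle\langle\psi_x|\,\|\,\bar\sigma)}$, and averaging over $x$ with Jensen's inequality yields
$$E_x \bigl\||\psi_x\rangle\langle\psi_x|-\bar\sigma\bigr\|_1 \le \sqrt{(2\ln 2)\, I(X:M)},$$
using the identity $\sum_x p(x)\,S(|\psi_x\rangle\langle\psi_x|\,\|\,\bar\sigma) = I(X:M)$ valid for classical $X$ with its quantum encoding on the message system $M$.

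To obtain the fourth-root form, apply a Markov-style threshold argument: for any $\tau>0$, Markov's inequality gives $\Pr_x[\bigl\||\psi_x\rangle\langle\psi_x|-\bar\sigma\bigr\|_1 > \tau]\le \sqrt{(2\ln 2)I(X:M)}/\tau$. Inputs on the good side contribute at most $\tau/2$ to the extra error; inputs on the bad side contribute at most $1$. Setting $\tau$ of order $((2\ln 2)I(X:M))^{1/4}$ balances the two contributions and, after tightening constants, delivers $\epsilon_D^{\mathcal Q}\le \epsilon_D^{\mathcal P} + ((2\ln 2) I(X:M))^{1/4}$. The main obstacle I anticipate is the bookkeeping in the construction: carefully verifying that the block consisting of register $A$ and the simulated safe overhead $S$ really packages as exactly $c+l_1$ qubits whose reduced state is input-independent (so that the new protocol is genuinely $B$-safe with the claimed parameters), and checking that the system denoted $M$ in the hypothesis is a subsystem of $ASP$ for which the Pinsker bound yields the mutual information $I(X:M)$ appearing in the statement, rather than some larger quantity.
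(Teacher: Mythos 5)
The paper does not actually prove this lemma; it cites Sen--Venkatesh for the proof and gives only a one-paragraph sketch after the statement (Bob simulates the ``average'' of Alice's first message, and the extra safe overhead is needed to fix up the entanglement). Measured against that, your proposal has the right high-level idea but breaks down at exactly the points you yourself flag as worries, and those are not mere bookkeeping.

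On the construction: you have Bob purify $\bar\sigma_{ASP}$ and send the block $(A,S)$ as the new safe overhead. This cannot work for two reasons. First, $A$ is Alice's private workspace after her first unitary; its size is not controlled by $c$ or $l_1$, so $|A|+|S|$ need not be $c+l_1$, and purifying a state on $ASP$ (rather than just $SP$) forces $E$ to be as large as $\log|\mathrm{supp}(X)|$. Second, and worse, you apply Bob's round-two unitary to $SPB$ \emph{before} sending $(A,S)$; that unitary can entangle $S$ with $B$ (e.g.\ a CNOT from $B$ onto $S$), so the reduced state of $(A,S)$ afterward depends on Bob's input $y$ and is not safe. The correct construction purifies only the $(c+l_1)$-qubit marginal $\bar\sigma_{SP}$, giving a purifying register $E$ of at most $c+l_1$ qubits; Bob applies his unitary to $SPB$ and sends $E$ (untouched, hence with $y$-independent marginal) as the safe overhead plus his $l_2$-qubit proper message. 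Crucially, Alice must then apply a local \emph{Uhlmann unitary} $V_x$ on $E$ and fresh ancillas to rotate Bob's purification toward the one she would have held in $\mathcal P$. Your proposal has no analogue of this correction step, and without it one cannot compare the two global states at all: $\bar\sigma_{SP}$ and $\sigma_x$ are purified on opposite sides of the cut.

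On the error analysis: you apply Pinsker to $\| |\psi_x\rangle\langle\psi_x|_{ASP}-\bar\sigma_{ASP}\|_1$, whose averaged relative entropy is $I(X:ASP)=S(\bar\sigma_{ASP})$ (the conditional states are pure), \emph{not} $I(X:M)$. This is the ``larger quantity'' you worried about, and it really is larger --- as large as $H(X)$ --- whereas $I(X:M)=I(X:SP)\le 2l_1$ by the safe-overhead property. The fourth root also does not need to be manufactured by a Markov threshold: after Alice's Uhlmann correction, the global states are pure with fidelity $F(\sigma_x,\bar\sigma_{SP})$, so the half-trace-distance is $\sqrt{1-F^2}\le\sqrt{2(1-F)}$; Fuchs--van de Graaf gives $1-F\le\tfrac12\|\sigma_x-\bar\sigma_{SP}\|_1$, Pinsker bounds that by $\sqrt{(2\ln 2)\,D(\sigma_x\|\bar\sigma_{SP})}$, and Jensen (concavity of $t\mapsto t^{1/4}$) yields $((2\ln2)\,I(X:M))^{1/4}$ directly. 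Your Markov argument is not only superfluous but lossy: optimizing $\tau/2+\sqrt z/\tau$ gives $\sqrt2\,z^{1/4}$, a factor $\sqrt 2$ worse than the stated bound. In short: purify $\bar\sigma_{SP}$ rather than $\bar\sigma_{ASP}$, send $E$ rather than $(A,S)$, add Alice's Uhlmann correction, and derive the fourth root from $\sqrt{1-F^2}$.
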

This lemma shows that we can reduce the number of rounds by increasing the
safe overhead at the price of increasing the error probability by an amount that
depends on the mutual information between Alice's input and her first message.
The protocol $\mathcal Q$ is constructed in stages. The first stage is to make
Alice's first message independent of her input by replacing it with a message
that ``averages'' over all possible inputs. In the second stage, Alice does
not send the average message but Bob generates it himself which is possible
since it is independent of Alice's input. Then, they resume as in the protocol
$\mathcal P$. But to achieve the correct entanglement between Alice's and Bob's
state, Bob's first message must contain a safe overhead of $c+l_1$ qubits.

If Alice's input and her first message have little mutual information, we
can drop the first message with only a small increase in error probability.
Let us now prove the Round Elimination Lemma.
\begin{lem}[Quantum Round Elimination]
\label{qu_round_elim}
Let $f:A\times B\to C$ be a communication problem. Suppose
we have a $[t,c,l_1,\dots ,l_t]^A$-safe public coin quantum protocol
for $f^{(n)}$ with worst case error $<\delta$. Then, there also
exists a $[t-1,c+l_1,l_2,\dots ,l_t]^B$-safe public coin quantum protocol
that solves $f$ with worst case error probability less than
$\epsilon = \delta + (4l_1(\ln 2)/n)^{1/4}$.
\end{lem}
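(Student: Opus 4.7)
The plan is to apply Yao's minimax principle to reduce the worst-case question to distributional complexity, and then combine the chain-rule machinery for mutual information (Propositions \ref{safe_bound} and \ref{multi_vars}) with Lemma \ref{for_elim} and an averaging argument. By Lemma \ref{Yao} it suffices to prove that for every probability distribution $D$ on $A\times B$ there is a coinless $[t-1,c+l_1,l_2,\ldots,l_t]^B$-safe quantum protocol for $f$ whose distributional error under $D$ is strictly below $\delta + (4l_1(\ln 2)/n)^{1/4}$.

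Given such a $D$, I would lift it to the distribution $D^{(n)}$ on inputs of $f^{(n)}$ obtained by drawing $X_1,\ldots,X_n$ independently from the $A$-marginal of $D$, picking $i$ uniformly from $[n]$, and sampling $Y$ from the conditional distribution on $B$ given $X_i$. The worst-case bound on $\mathcal{P}$ implies $\epsilon_{D^{(n)}}^{\mathcal{P}} < \delta$, and fixing the public coin of $\mathcal{P}$ optimally for $D^{(n)}$ yields a coinless $[t,c,l_1,\ldots,l_t]^A$-safe protocol $\mathcal{P}'$ with $\epsilon_{D^{(n)}}^{\mathcal{P}'} < \delta$. Denoting Alice's input and first message in $\mathcal{P}'$ by $X=(X_1,\ldots,X_n)$ and $M$, Proposition \ref{safe_bound} (applied to the fact that the $c$-qubit safe overhead is input-independent) gives $I(X:M)\leq 2l_1$, and Proposition \ref{multi_vars} together with the independence of the $X_j$ yields the chain-rule identity $\sum_{i=1}^{n} I(X_i:M X_1\cdots X_{i-1}) \leq 2l_1$. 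Writing $\iota_i = I(X_i:M X_1\cdots X_{i-1})$ and $e_i = \epsilon^{\mathcal{P}'}_{D^{(n)}\mid i}$, we get $\mathbb{E}_i[\iota_i] \leq 2l_1/n$ and $\mathbb{E}_i[e_i] < \delta$ for $i$ uniform on $[n]$.

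For each $i^*\in[n]$ and each fixing $c_{<i^*}\in A^{i^*-1}$ I would build a coinless protocol $\mathcal{R}_{i^*,c_{<i^*}}$ for $f$ as follows: Alice, holding her $f$-input $a$, uses her workspace qubits to prepare the superposition $\sum_{x}\sqrt{p_x}\ket{x}$ over $A^{n-i^*}$ with amplitudes matching $D_A^{n-i^*}$, and then runs $\mathcal{P}'$ as if her input were $(c_{<i^*},a,x_{>i^*})$, while Bob, holding $b$, runs $\mathcal{P}'$ as if his input were $(i^*,c_{<i^*},b)$; the $[t,c,l_1,\ldots,l_t]^A$-safe structure survives because Alice's state preparation can be absorbed into $U_0$. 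Unpacking the output distribution shows that $\epsilon_D^{\mathcal{R}_{i^*,c_{<i^*}}}$ equals the distributional error of $\mathcal{P}'$ conditioned on $\{i=i^*,X_{<i^*}=c_{<i^*}\}$, while the data-processing inequality, applied to the quantum channel that traces out the workspace register holding $X_{>i^*}$, gives $I(a:M_{\mathcal{R}}) \leq I(X_{i^*}:M\mid X_{<i^*}=c_{<i^*})$, with the right-hand side evaluated in $\mathcal{P}'$ under $D^{(n)}$.

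Lemma \ref{for_elim} applied to each $\mathcal{R}_{i^*,c_{<i^*}}$ then produces a $[t-1,c+l_1,l_2,\ldots,l_t]^B$-safe coinless protocol $\mathcal{Q}_{i^*,c_{<i^*}}$ satisfying $\epsilon_D^{\mathcal{Q}} \leq \epsilon_D^{\mathcal{R}} + ((2\ln 2)\,I(a:M_{\mathcal{R}}))^{1/4}$. I would then average this bound over $c_{<i^*}\sim D_A^{i^*-1}$ and over $i^*$ uniform on $[n]$, using concavity of $x\mapsto x^{1/4}$ at each averaging step and using Proposition \ref{expect} to identify $\mathbb{E}_{c_{<i^*}}[I(X_{i^*}:M\mid X_{<i^*}=c_{<i^*})] = \iota_{i^*}$, which yields an overall bound of $\delta + ((2\ln 2)\cdot 2l_1/n)^{1/4} = \delta + (4l_1(\ln 2)/n)^{1/4}$. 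Some specific pair $(i^*,c_{<i^*})$ therefore achieves a coinless protocol meeting the distributional target, and Yao's minimax closes the argument. The main difficulty will be the mutual-information bookkeeping under quantum private randomness: one has to check carefully that tracing out Alice's workspace register holding $X_{>i^*}$ constitutes a legitimate quantum channel so that the classical-quantum data-processing inequality may be invoked to pass from $I(X_{i^*}:M\mid X_{<i^*}=c_{<i^*})$ in $\mathcal{P}'$ to the mutual information between Alice's $f$-input and her first message in $\mathcal{R}_{i^*,c_{<i^*}}$.
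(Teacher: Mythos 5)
Your proposal is correct and follows the paper's proof essentially step for step: Yao reduction to a product distribution on $f^{(n)}$, the chain rule via Propositions \ref{multi_vars}, \ref{safe_bound} and \ref{expect} to bound $\mathbb{E}_{i,c}[I((X_i:M)\mid X_{<i}=c)]$ by $2l_1/n$, embedding the $f$-instance at a random position with superposed dummy inputs, Lemma \ref{for_elim}, and averaging via concavity of $x\mapsto x^{1/4}$. The only (inconsequential) deviation is at the very end, where the paper lets a public coin sample $(i^*,c_{<i^*})$ to assemble a single averaged public-coin protocol while you extract a best-case pair and hand a coinless protocol to Lemma \ref{Yao} -- both are equivalent -- and your cautious appeal to data processing is in fact an equality, since tracing out Alice's superposed dummy registers yields exactly the conditional message density matrix of $\mathcal P'$, which is all the paper asserts.
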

\begin{proof}
Suppose the protocol $\mathcal P$ has worst-case error $\delta ' < \delta$.
Let $\epsilon ' = \delta ' + (4l_1(\ln 2)/n)^{1/4}$. By Lemma
\ref{Yao}, it suffices to give for each distribution $D$ on $A\times B$
a protocol $\mathcal P_D$ that solves $f$ for inputs sampled according to $D$
with error probability $\epsilon _D^{\mathcal P_D}\leq\epsilon ' < \epsilon$.
Let $D$ be an arbitrary probability distribution on the input set $A\times B$.
Let $D^*$ be the distribution on $A^n\times \{ 1,\dots , n\}\times B$
that is sampled by first sampling $i$ from $\{ 1,\dots ,n\}$ uniformly at random,
sampling for every $j\in \{ 1,\dots ,n\}$ a pair $(a_j,b_j)$ according to $D$
and returning $(a_1,\dots ,a_n, i, b_i)$. We have,
by Lemma \ref{Yao} and the fact that $\mathcal P$ has worst-case error
$\delta'$,
a $[t,c,l_1,\dots ,l_t]^A$-safe protocol
$\mathcal P^*$ for $f^{(n)}$ that has error probability
$\epsilon _{D^*}^{\mathcal P^*}\leq \delta '$. Let $M$ be the random
variable for Alice's first message in the protocol and $X$ the
random variable for her input. Her first message consists of
a main part $M_1$ of $l_1$ qubits and a safe overhead $M_2$ of $c$ qubits
whose density matrix is independent of $X$.

Let $X_j$ be the random variable for the $j$th input for Alice
under distribution $D^*$. Then, $X_1,\dots ,X_n$ are independent
and $X=X_1\dots X_n$. Let $Y$ be the random variable for Bob's input
from the set $B$.
(This random variable is the same under $D$ and $D^*$.)
By Proposition
\ref{multi_vars} and \ref{safe_bound},
\begin{align*}
2l_1\geq I(X:M) = I(M:X_1\dots X_n) &= \sum _iI(X_i:MX_1\dots X_{i-1})\\
&= n\cdot \left( \sum _i\frac 1 nI(X_i:MX_1\dots X_{i-1})\right)\\
&= n\cdot \mathbb E_i\left[ I(X_i:MX_1\dots X_{i-1})\right]
\end{align*}
and by Proposition \ref{expect}
\begin{align*}
I(X_i:MX_1\dots X_{i-1}) &= I(X_i:X_1\dots X_{i-1}) + \mathbb E_{X_1\dots X_{i-1}}\left[
I((X_i:M)|X_1 = x_1,\dots ,X_{i-1} = x_{i-1})\right]\\
&= \mathbb E_{X_1\dots X_{i-1}}\left[ I((X_i:M)|X_1 = x_1,\dots ,X_{i-1} = x_{i-1})\right]
\end{align*}
where the last equality holds because the $X_j$ are independent. This
gives us
\begin{equation}
\label{bound}
\frac{2l_1}n \geq \mathbb E_{i,X}\left[ I((X_i:M)|X_1=x_1,\dots ,X_{i-1} = x_{i-1})\right]
\end{equation}
We define $D^*_{i;x_1,\dots ,x_{i-1}}$ as the conditional distribution obtained from $D^*$ by
fixing the element from $\{ 1,\dots ,n\}$ to $i$ and for all $j<i$,
fixing $X_j$ to $x_j$. We have
$$\delta ' \leq \epsilon _{D^*}^{\mathcal P^*} = \mathbb E_{i,X}\left[ \epsilon_{D^*_{i;x_1,\dots ,x_{i-1}}}^{\mathcal P^*}\right]$$
For each $i\in \{ 1,\dots ,n\}$ and $x_1,\dots ,x_{i-1}$, we define a protocol
$\mathcal P'_{i;x_1,\dots ,x_{i-1}}$ for $f$ as follows: Let
$\ket\psi = \sum _x\sqrt{p_x}\ket x$ where $p_x$ is the probability
of $x$ under distribution $D$. Let $x\in A$ be the input for Alice
and $y\in B$ the input for Bob.
Alice and Bob run the protocol $\mathcal P^*$ on
input $\ket{x_1}\dots \ket{x_{i-1}}\ket x\ket{\psi }^{\otimes n-i+1}$ for
Alice and $\ket i\ket{x_1}\dots \ket{x_{i-1}}\ket y$ for Bob and output the result.
The error probability of $\mathcal P'_{i;x_1,\dots ,x_{i-1}}$ is the same
as that of $\mathcal P^*$ under distribution $D^*_{i;x_1,\dots ,x_{i-1}}$,
that is,
$$\epsilon _D^{\mathcal P'_{i;x_1,\dots ,x_{i-1}}} = \epsilon _{D^*_{i;x_1,\dots ,x_{i-1}}}^{\mathcal P^*}$$

Since $\mathcal P^*$ is a safe coinless quantum protocol, $\mathcal P'_{i;x_1,\dots ,x_{i-1}}$
is such a protocol too. Let $X'$ be the classical random variable
denoting Alice's input in $\mathcal P'_{i;x_1,\dots ,x_{i-1}}$.
The density matrix $M'$ of Alice's first message
in $\mathcal P'_{i;x_1,\dots ,x_{i-1}}$ is the same as that of the first
message in $\mathcal P^*$ when $X_1,\dots ,X_{i-1}$ are set to
$x_1,\dots ,x_{i-1}$. Thus, by Lemma \ref{for_elim}, there exists
a $[t-1,c+l_1,l_2,\dots ,l_t]^B$-safe coinless quantum protocol with
error probability
\begin{align*}
\epsilon _D^{\mathcal P_{i;x_1,\dots ,x_{i-1}}} &\leq
\epsilon _D^{\mathcal P'_{i;x_1,\dots ,x_{i-1}}} + (2(\ln 2)I(X':M'))^{1/4}\\
&= \epsilon _{D_{i;x_1,\dots ,x_{i-1}}}^{\mathcal P*} + (2(\ln 2)I((X_i:M)|X_1 = x_1,\dots ,X_{i-1}=x_{i-1}))^{1/4}
\end{align*}
We now define a $[t-1,c+t_1,t_2,\dots ,t_n]^B$-safe public coin quantum
protocol as follows: Alice and Bob use the public coin to select $i\in\{ 1,\dots ,n\}$
uniformly at random and sample $x_1,\dots ,x_{i-1}$ independently according to
$D$. Then they run the protocol $P_{i;x_1,\dots ,x_{i-1}}$. The error
probability is
\begin{align*}
\epsilon _D^{\mathcal P} &= \mathbb E_{i,X_1,\dots ,X_{i-1}}\left[ \epsilon _D^{\mathcal P_{i;x_1,\dots ,x_{i-1}}}\right]\\
&\leq \mathbb E_{i,X_1,\dots , X_i}\left[ \epsilon _{D_{i;x_1,\dots ,x_{i-1}}}^{\mathcal P*}\right] + \left( 2(\ln 2)\mathbb E_{i, X_1,\dots , X_i}\left[ I((X_i:M)|X_1 = x_1,\dots ,X_{i-1}=x_{i-1}))\right]\right) ^{1/4}\\
&\ \ \ \ \text{ since the 4th root function is concave.}\\
&\leq \delta ' + \left( \frac{4\ln 2}n\right) ^{1/4}\text{ by Equation \eqref{bound}.}
\end{align*}
This completes the proof.
\end{proof}
We now show how the predecessor problem reduces to the \em rank parity
problem\em .
\begin{defin}[Rank Parity Problem]
In the \em rank parity \em communication problem $\pty p q$, Alice is given
a number $x$ in $[2^p]$
and Bob is given a set $S\subseteq [2^p]$ with $|S|\leq q$. The
rank of $i\in [2^p]$ in $S$ is defined as
$\rank _S(i) = |\{ j\in S\mid j\leq i\}|$,
i.e., the number of elements in $S$ that are not greater than $i$. The
goal of the rank parity problem is to determine
$\rank _S(x)\bmod 2$.
\end{defin}
\begin{prop}
\label{to_rp}
Suppose that there is a data structure for the predecessor problem
that has block-size $(\log m)^{O(1)}$, uses $n^{O(1)}$ blocks of
space and that allows to answer predecessor queries with an address-only quantum algorithm with worst-case time complexity
$t$ and error probability $\epsilon$. Then, there is a
$\left( 2t+ O(1),0,O(\log n), (\log m)^{O(1)}\right) ^A$-safe coinless
quantum protocol for $\pty{\log m}n$ with error probability
at most $\epsilon$.
\end{prop}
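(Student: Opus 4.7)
The plan is to reduce $\pty{\log m}{n}$ to the predecessor problem by the observation that, if $y = \text{pred}(x, S)$ is the largest element of $S$ strictly less than $x$, then
\[
\rank_S(x) = \rank_S(y) + [x \in S],
\]
with the convention that $\rank_S$ of the ``no predecessor'' marker is $0$. So rank parity is determined by the predecessor $y$, the rank of $y$, and a membership test for $x$ in $S$.

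First I would augment the given predecessor data structure. By Remark \ref{rank_hash} I attach a Perfect Hash table storing the pairs $\{(i, \rank_S(i)) : i \in S\}$ in $O(n)$ blocks of size $\log m$, which is absorbed into the $n^{O(1)}$ blocks of size $(\log m)^{O(1)}$ already used by the predecessor structure (padding hash entries to the larger block size if necessary). On query $x$, the rank-parity algorithm runs the hypothesized address-only predecessor algorithm in $t$ probes with error at most $\epsilon$ to obtain $y$, then performs a constant number of Perfect Hash lookups to check whether $x \in S$ and to read off $\rank_S(y)$ (recognizing the ``no predecessor'' case if needed), and finally outputs $(\rank_S(y) + [x \in S]) \bmod 2$. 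Since any classical cell-probe computation can be realized as an address-only quantum subroutine, and address-onlyness is preserved by sequential composition with such subroutines, the combined query algorithm is address-only, uses $t + O(1)$ probes, and has error at most $\epsilon$.

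Finally I would invoke Lemma \ref{data_to_com} on this augmented data structure. With $s = n^{O(1)}$ and block-size $w = (\log m)^{O(1)}$, one has $\log s = O(\log n)$ and $w + \log s = (\log m)^{O(1)}$, so the address-only form of the lemma yields a $\bigl(2(t+O(1)),\, 0,\, \log s,\, w + \log s\bigr)^A$-safe coinless protocol, which is exactly the required $\bigl(2t + O(1),\, 0,\, O(\log n),\, (\log m)^{O(1)}\bigr)^A$-safe coinless protocol with error at most $\epsilon$. The only mildly delicate point is verifying that the composite algorithm is still address-only; this is routine, and everything else is bookkeeping on the parameter sizes.
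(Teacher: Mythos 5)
Your proposal matches the paper's proof essentially step for step: both use Remark \ref{rank_hash} to attach a Perfect Hash table of $(i,\rank_S(i))$ pairs to the predecessor structure, both recover $\rank_S(x)\bmod 2$ from the predecessor, its stored rank, and a membership test, and both then apply Lemma \ref{data_to_com} with $\log s = O(\log n)$ and $w = (\log m)^{O(1)}$ to get the claimed $(2t+O(1),0,O(\log n),(\log m)^{O(1)})^A$-safe coinless protocol. The one thing you make explicit that the paper leaves implicit is the verification that the composed query algorithm remains address-only; your observation that a classical cell-probe subroutine can be appended while preserving address-onlyness (e.g.\ by swapping the data register with fresh $\ket 0$ workspace before the classical probes) is correct and fills that small gap.
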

\begin{proof}
Let $\phi$ be a data structure for the predecessor problem as
described in the premise of the proposition. We will prove the proposition
by describing a data structure $\psi$ for rank parity queries
which can be converted to a safe coinless quantum protocol by
Lemma \ref{data_to_com}.

Let $S\subseteq [m]$.
As said in Remark \ref{rank_hash}, we can use a Perfect Hash table
to store not only the set $S$ but also the rank of each element
of $S$. The set $S$ is encoded as $\psi (S)$ consisting of
such a hash table together with $\phi (S)$. To find the rank parity
of some $x\in [m]$, we first determine the predecessor $x'$ of $x$
in $S$ for which we need to read $t$ blocks. Then, we look up
$x'$ in the hash table to find out its rank, reading $O(1)$ cells.
We check whether $x\in S$ which again requires $O(1)$ cell-probes.
Now we can compute
$$\rank _S(x) = \begin{cases}\rank _S(x')+1 &\text{ if }x\in S\\
\rank _S(x')&\text{ otherwise}\end{cases}$$
and thus $\rank _S(x)\bmod 2$. The total time complexity of
this algorithm is $t+O(1)$ cell probes. The space complexity of our data
structure is $n^{O(1)} + O(n) = n^{O(1)}$ cells. The only possible source of
error is the predecessor query algorithm. If it returns the correct result,
we obtain the correct value for $\rank _S(x)\bmod 2$. Thus, the
error probability for our query algorithm is at most $\epsilon$.

By Lemma \ref{data_to_com}, there
exists a $\left( 2(t+O(1)),0,\log \left( n^{O(1)}\right) , (\log m)^{O(1)}\right)^A$-safe
coinless quantum protocol for $\pty{\log m}n$ with error probability
at most $\epsilon$. We have $2(t+O(1)) = 2t+O(1)$ and
$\log (n^{O(1)}) = O(1)\cdot \log n = O(\log n)$.
\end{proof}
The following two propositions were proved in the classical setting by Miltersen et al.
in \cite{mnsw}.
\begin{prop}
\label{first_reduction}
Let $k$ and $p$ be integers such that $k$ divides $p$.
If there is a $[t,c,l_1,\dots ,l_t]^A$-safe coinless (public coin) quantum protocol
for $\pty p q$ with error probability $\epsilon$, then there
also is a $[t,c,l_1,\dots ,l_t]^A$-safe coinless (public coin) quantum protocol
for $\pty{p/k}{q}^{(k)}$ with the same error probability.
\end{prop}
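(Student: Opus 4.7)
The plan is to give a purely local, communication-free reduction: from a protocol $\mathcal P$ for $\pty p q$ I will construct a protocol $\mathcal P'$ for $\pty{p/k}{q}^{(k)}$ in which Alice and Bob first transform their inputs deterministically, and then run $\mathcal P$ verbatim on the transformed inputs. Since no communication happens during the transformation, the round structure, the safe overhead, all message lengths, and the error probability all carry over without change, which is exactly the content of the proposition.

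The input transformation I would use is the following. Alice, who holds $a_1,\dots ,a_k\in [2^{p/k}]$, forms
$$x \;=\; a_1\circ a_2\circ\cdots\circ a_k \;\in\; [2^p].$$
Bob, who holds $i$, $a_1,\dots ,a_{i-1}$ and $S'\subseteq [2^{p/k}]$ with $|S'|\le q$, forms
$$S \;=\; \bigl\{\, a_1\circ\cdots\circ a_{i-1}\circ s\circ 0^{(k-i)p/k} \,:\, s\in S'\,\bigr\}\subseteq [2^p],$$
which has at most $q$ elements, so $(x,S)$ is a valid instance of $\pty p q$.

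The key step is to verify that $\rank_S(x)\equiv \rank_{S'}(a_i)\pmod 2$; in fact I expect equality on the nose. Fixing $s\in S'$, I would compare $a_1\circ\cdots\circ a_{i-1}\circ s\circ 0^{(k-i)p/k}$ with $x=a_1\circ\cdots\circ a_{i-1}\circ a_i\circ a_{i+1}\circ\cdots\circ a_k$ as bit strings. The first $(i-1)p/k$ bits coincide, so the comparison reduces to comparing the $i$th block: if $s<a_i$ the element is strictly smaller than $x$, if $s>a_i$ it is strictly larger, and in the tie-breaking case $s=a_i$ the suffix $0^{(k-i)p/k}$ is lexicographically at most $a_{i+1}\circ\cdots\circ a_k$, so the element is $\le x$. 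Thus the element is $\le x$ iff $s\le a_i$, giving the bijection $\{s\in S':s\le a_i\}\leftrightarrow\{y\in S:y\le x\}$ and hence the desired identity of ranks.

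Finally, Alice and Bob run $\mathcal P$ on $(x,S)$ and output its answer. Since $x$ is a deterministic function of Alice's input alone, the density matrix of the safe overhead in Alice's first message, being input-independent for $\mathcal P$, remains input-independent for $\mathcal P'$; all message lengths and the number of rounds are unchanged; and the public coin (if any) is used exactly as in $\mathcal P$. Consequently $\mathcal P'$ is $[t,c,l_1,\dots ,l_t]^A$-safe, coinless if $\mathcal P$ is coinless and public coin if $\mathcal P$ is public coin, with the same error probability $\epsilon$. There is no real obstacle in this argument; the only subtle point is the tie-breaking in the lexicographic comparison when $s=a_i$, which is precisely why I would pad with $0$s (rather than $1$s) on the right.
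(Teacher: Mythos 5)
Your proposal is correct and follows essentially the same route as the paper: Alice concatenates her $k$ inputs, Bob embeds his set into $[2^p]$ by prefixing with $a_1,\dots,a_{i-1}$ and padding with zeros, and one checks the block-by-block lexicographic comparison shows ranks are preserved. Your observation that the construction yields equality of ranks (not merely congruence mod 2) and your explicit remark on why the safe/coinless/public-coin structure is preserved are both sound and slightly more detailed than the paper's exposition.
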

\begin{proof}
Let $\mathcal P$ be a $[t,c,l_1,\dots ,l_t]^A$-safe quantum protocol
for $\pty p q$ with error probability $\epsilon$. We can use it
for designing a protocol for $\pty{p/k}{q}^{(k)}$ as follows.
Let $x_1,\dots ,x_k$ be the inputs for Alice. Let $x\in [2^p]$ be the number that
results from concatenating (the binary representations of) $x_1,\dots ,x_k$.
Let $S$ be the set that Bob receives as input and $i$ the number in $\{ 1,\dots ,k \}$ he receives.
Define a set $S'\subseteq [2^p]$ of size at most $n$ by
$$S' = \{ x_1\circ \dots  \circ x_{i-1}\circ y\circ 0^{p-i(p/k)}\mid y\in S\}$$
where $\circ$ denotes concatenation.
Alice computes $x$ and Bob computes $S'$. Then, they run the protocol
$\mathcal P$ on inputs $x$ and $S'$. We now show that if $\mathcal P$
does not make an error, this protocol returns the correct result.

The correct result on input $x_1,\dots ,x_k,i,S$ is $\rank _S(x_i)\bmod 2$.
For every $y\in S$, we have
$$x_1\circ \dots  \circ x_{i-1}\circ y\circ 0^{p-i(p/k)} \leq x_1\circ \dots  \circ x_k$$
if and only if $y\leq x_i$. Thus, $\rank _S(x_i)\bmod 2
= \rank _{S'}(x)\bmod 2$ which is the value that $\mathcal P$ computes.
\end{proof}
\begin{prop}
\label{second_reduction}
Suppose $k$ divides $q$ and $q$ is a power of 2. If there is a
$[t,c,l_1,\dots ,l_t]^B$-safe coinless (public coin) quantum protocol $\mathcal P$ for the problem
$\pty p q$ then there also is such a protocol for the
problem $^{(k)}\pty{p-\log k - 1}{q/k}$ that has the same error probability
as $\mathcal P$.
\end{prop}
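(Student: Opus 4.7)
The plan is to reduce $^{(k)}\pty{p-\log k-1}{q/k}$ to $\pty{p}{q}$ by a local input transformation plus a deterministic parity correction. Because $q$ is a power of $2$ and $k$ divides $q$, $k$ is itself a power of $2$, so $\log k$ is an integer. I will write $p' = p - \log k - 1$ and $q' = q/k$ throughout.

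First I would describe the input encodings. On input $x \in [2^{p'}]$ and index $i \in \{1,\dots ,k\}$, Alice forms $x' = \langle i-1\rangle \circ 0 \circ \langle x\rangle \in [2^p]$, where $\langle \cdot\rangle$ denotes a fixed-width binary representation ($\log k$ bits for the index, $p'$ bits for $x$) and $0$ is a single separator bit. On input $S_1,\dots ,S_k \subseteq [2^{p'}]$ with $|S_j|\leq q'$, Bob forms $S'\subseteq [2^p]$ by including $\langle j-1\rangle \circ 0 \circ \langle y\rangle$ for every $y\in S_j$ and then padding each ``block $j$'' with arbitrary distinct elements of the form $\langle j-1\rangle \circ 1 \circ \langle y\rangle$ until the block contains exactly $q'$ elements. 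This yields $|S'| = kq' = q$, matching the size bound of $\pty{p}{q}$; there is enough room for the padding as long as $q'\leq 2^{p'}$, a natural non-triviality assumption.

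The heart of the argument is a rank-shift identity that I would verify by splitting $S'$ according to the value of its length-$\log k$ prefix. Elements in blocks $j<i$ lie entirely below $x'$ and contribute $q'$ each; elements in blocks $j>i$ lie entirely above $x'$; and within block $i$, the padding entries carry prefix $\langle i-1\rangle \circ 1$ which strictly exceeds $x'$'s prefix $\langle i-1\rangle \circ 0$, so only the ``real'' entries with $y\leq x$ contribute. This gives
\begin{equation*}
\rank_{S'}(x') = (i-1)q' + \rank_{S_i}(x),
\end{equation*}
so $\rank_{S'}(x')\bmod 2$ and $\rank_{S_i}(x)\bmod 2$ differ only by the single bit $(i-1)q' \bmod 2$, which Alice can compute locally from her inputs.

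With this in hand, the new protocol simply runs $\mathcal P$ on $(x',S')$ and has Alice XOR the output of $\mathcal P$ with $(i-1)q' \bmod 2$. Bob still sends the first message, no extra communication is introduced, and no new randomness is used, so the resulting protocol inherits the $[t,c,l_1,\dots ,l_t]^B$-safe structure of $\mathcal P$ verbatim and is coinless or public-coin exactly when $\mathcal P$ is. The error probability is unchanged because the input transformation and the XOR are deterministic. I do not anticipate any genuine obstacle beyond carefully checking the rank-shift identity; the only design point that matters is that the ``extra bit'' captured by the $-1$ in $p - \log k - 1$ is precisely what lets each block's real entries sit strictly below its padding, so that the parity shift depends only on $i$ and $q'$ and not on the individual sizes $|S_j|$.
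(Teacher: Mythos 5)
Your reduction is correct, but it takes a slightly different route from the paper, and there is a subtlety you should be aware of. The paper pads each $S_j$ to \emph{even size} by adding at most one dummy element $(j-1)\circ 1^{p-\log k}$, so that the rank contributions of all blocks $j<i$ are $\equiv 0\pmod 2$ by construction and the protocol's raw output already equals $\rank_{S_i}(x)\bmod 2$; no post-processing is needed. You instead pad every block up to exactly $q' = q/k$ elements and repair the resulting offset $(i-1)q'$ with a local XOR. The rank identity $\rank_{S'}(x')=(i-1)q'+\rank_{S_i}(x)$ you derive is correct, and your observation about the separator bit $0$ is exactly the point of the $-1$ in the exponent. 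The issue is the XOR step: it requires the party that holds the final state and performs the output measurement to know $i$, i.e.\ it must be Alice. Nothing in the definition of a $[t,c,l_1,\dots,l_t]^B$-safe protocol guarantees this --- depending on the parity of $t$, the final message may land with Bob, who does not know $i$. You are saved by the hypothesis that $q$ is a power of $2$ and $k\mid q$: then $q'=q/k$ is a power of $2$, hence even unless $q'=1$, so your correction term $(i-1)q'\bmod 2$ is identically zero in all non-degenerate cases and the XOR is vacuous. It would strengthen the write-up to point this out explicitly, or to switch to the paper's parity-preserving padding, which sidesteps the question of who outputs.
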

\begin{proof}
Given $\mathcal P$, we can design a protocol for $^{(k)}\pty{p-\log k - 1}{q/k}$
as follows: Alice is given $x\in [2^{p-\log k - 1}]$ and $i\in \{ 1,\dots ,k\}$ and
Bob receives $S_1,\dots ,S_k\subseteq [2^{p-\log k - 1}]$ with $|S_j|\leq q/k$.
First, Alice computes $x' = (i-1)\circ 0 \circ x\in [2^p]$
and Bob computes for
every $j\in \{ 1,\dots ,k \}$ the set
$$S_j' = \begin{cases}
\{ (j-1)\circ 0 \circ y\mid y\in S_j\}&\text{ if }|S_j|\text{ is even}\\
\{ (j-1)\circ 0 \circ y\mid y\in S_j\}\cup \{ (j-1)\circ 1^{p-\log k}\} &\text{ if }|S_j|\text{ is odd}\end{cases}$$
Note that $S_j'$ always has an even number of elements.
Bob takes the union $S = \bigcup _{j=1}^k S_j'$.
All elements of $S$ are in $[2^p]$ and the cardinality
of $S$ is at most $q$ since the $S_j'$ all have cardinality at most $q/k$.
Now, Alice and Bob execute the protocol $\mathcal P$ on $x'$ and $S$
and output the result.

If $\mathcal P$ gives the correct result, this protocol
returns $\rank _{S_i}(x)\bmod 2$: Let $i,j\in \{ 1,\dots ,k \}$. If
$i < j$ then we have $\rank _{S'_j}(x') = 0$. If $i = j$, we
have $\rank _{S'_j} (x') = \rank _{S_i}(x)$ because
$(i-1)\circ 0\circ x\geq (i-1)\circ 0\circ y$ if and only if
$x\geq y$. If $i> j$ then $\rank _{S'_j}(x') = |S'_j| = 0\bmod 2$.
Because of this and because the $S'_j$ are disjoint, we have
$$\rank _S(x') \bmod 2= \sum _j \rank _{S'_j}(x')\bmod 2 = \rank _{S_i}(x)\bmod 2$$
as required.
\end{proof}
Now, we finally have all the tools we need to prove the lower bound
on the predecessor problem.
We start by assuming that there is some data structure that
violates the lower bound.
The main idea of the proof is
to reduce the predecessor problem to $\pty{\log m}n$ and to apply the previous two
propositions and the Round Elimination Lemma to obtain a protocol
for $\pty p q$ \em without communication \em which has error probability
smaller than $1/2$. Such a protocol is impossible, so there
can be no data structure that violates the lower bound.
\begin{thm}
Suppose that we have a data structure for the predecessor problem
for sets $S\subseteq [m]$ of size at most $n$
with block-size $(\log m)^{O(1)}$ that uses space $n^{O(1)}$. Suppose
that there is an address-only quantum cell-probe algorithm for
determining the predecessor of any $x\in [m]$ in $S$ that makes $t$
cell-probes to the representation of $S$. Suppose further
that the error probability of that algorithm is less than 1/3. Then,
it holds that:
\begin{itemize}
\item There is a function $N:\mathbb N \to \mathbb N$ such that for
$n=N(m)$, we must have
$$t = \Omega \left(\frac{\log\log m}{\log\log\log m}\right)\text .$$
\item There is a function $M:\mathbb N\to \mathbb N$ such that for
$m = M(n)$, we must have
$$t = \Omega\left( \sqrt{\frac{\log n}{\log\log n}}\right)\text .$$
\end{itemize}
This lower bound on $t$ also holds in the classical deterministic
and probabilistic setting.
\end{thm}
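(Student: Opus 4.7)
The plan is to assume for contradiction that $t$ violates the stated bound and to strip the protocol down to zero rounds, at which point the worst-case error on the Rank Parity problem must be at least $1/2$ (since the outputting party cannot see the other's input, and Rank Parity depends non-trivially on both). Starting from the hypothetical data structure, invoke Proposition \ref{to_rp} to obtain a $(2t+O(1),\,0,\,O(\log n),\,(\log m)^{O(1)})^A$-safe coinless quantum protocol for $\pty{\log m}{n}$ with error $<1/3$. Then eliminate its rounds one at a time by pairing Propositions \ref{first_reduction} and \ref{second_reduction} with the Quantum Round Elimination Lemma: when Alice sends the next message, apply Proposition \ref{first_reduction} to lift the problem to $\pty{p/k_A}{q}^{(k_A)}$ and then Lemma \ref{qu_round_elim} to remove her round and pass initiative to Bob; symmetrically, when Bob sends next, use Proposition \ref{second_reduction} to lift to $^{(k_B)}\pty{p-\log k_B-1}{q/k_B}$ and then Lemma \ref{qu_round_elim} to remove his round. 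The accumulated safe overhead is harmless, since Proposition \ref{safe_bound} shows it carries no mutual information with the input.

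The bookkeeping then runs as follows: each Alice-elimination shrinks $p$ by a factor $k_A$, each Bob-elimination shrinks $p$ by $\log k_B + 1$ and $q$ by $k_B$, and each elimination contributes at most $(4 l \ln 2 / k)^{1/4}$ to the error, where $l$ is the length of the eliminated message (either $O(\log n)$ for Alice or $(\log m)^{O(1)}$ for Bob). To keep the final error below $1/2$, each $k_i$ must be chosen so its contribution is $O(1/t)$, forcing $k_A$ to be polynomial in $\log n$ and $k_B$ to be polynomial in $\log m$. The asymmetry between Alice's short messages and Bob's long ones is exactly what drives the two different lower bounds. After $O(t)$ pairs of eliminations, either $p$ or $q$ is driven to at most $1$, producing the contradiction.

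The main obstacle is the simultaneous management of the four budgets (residual $p$, residual $q$, cumulative error, and message length appearing in the next elimination). For the first claim, take $N(m)$ so that $n$ is small enough that $p$-decay is the binding constraint; with $k_A,k_B$ chosen polylogarithmic as above, solving the recurrence $p \mapsto (p-\log k_B - 1)/k_A$ down to $1$ yields $t = \Omega(\log\log m/\log\log\log m)$. For the second claim, take $M(n)$ so that $m$ is large enough that $q$-decay binds first; then the geometric recurrence $q \mapsto q/k_B$ over $O(t)$ Bob-rounds, combined with the error budget that forces $k_B$ not too large, yields $t = \Omega(\sqrt{\log n/\log\log n})$. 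The classical (deterministic and probabilistic) lower bound follows automatically, as every classical cell-probe algorithm is trivially address-only in the quantum sense.
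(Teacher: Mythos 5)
Your approach mirrors the paper's proof step for step: reduce predecessor to rank parity via Proposition~\ref{to_rp}, then alternately apply Proposition~\ref{first_reduction} or Proposition~\ref{second_reduction} with the Round Elimination Lemma, choosing the copy parameters $k_A,k_B$ just large enough that each elimination contributes only $O(1/t)$ to the error, so that after stripping all $2t+O(1)$ rounds the cumulative error is still below $1/2$. The observation that the growing safe overhead is harmless because it carries no mutual information with the input (Proposition~\ref{safe_bound}) is also exactly the point the paper relies on, and the recurrences you set up for $p$ and $q$ are the right ones.

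However, the stated \emph{conclusion} of your argument is wrong. You write that ``either $p$ or $q$ is driven to at most $1$, producing the contradiction,'' but if $p$ or $q$ ever drops to $1$, the rank parity instance becomes trivial and there is nothing contradictory about a zero-communication protocol solving it. The actual contradiction (which you state correctly earlier in your proposal and then appear to misremember) is the opposite: assuming $t$ is below the claimed bound, one can verify that after $t$ iterations $p$ is still $(\log m)^{\Omega(1)}$, $q$ is still $n^{\Omega(1)}$, and the error is $\delta + 1/6 < 1/2$, yet the protocol has zero rounds --- and no zero-round protocol can compute a genuinely bivariate rank-parity function with worst-case error below $1/2$. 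Concretely, the verification that $p$ and $q$ stay large is where the hypothesis on $t$ is used; if you instead drive them to $1$ you have negated the very premise of the contradiction. A second, more cosmetic difference: you propose choosing $N(m)$ and $M(n)$ separately so that $p$-decay or $q$-decay binds first, whereas the paper picks the single value $n = 2^{(\log\log m)^2/\log\log\log m}$ that makes both constraints bind simultaneously, so that the one contradiction immediately yields both stated bounds with the same $t$.
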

\begin{proof}
Let $c_1 = (4\ln 2)12^4$.
Suppose we have a data structure for the predecessor problem
that uses at most $n^{c_2}$ blocks of size $(\log m)^{c_3}$ for
some constants $c_2,c_3\geq 1$.
Let $n = 2^{(\log\log m)^2/\log\log\log m}$.
Suppose that the predecessor query
algorithm makes at most
\begin{align*}
t = \frac{\log\log m}{(c_1+c_2+c_3)\log\log\log m}
&=\frac{1}{(c_1+c_2+c_3)\log\log m}\cdot\frac{(\log\log m)^2}{\log\log\log m}\\
&\geq\frac{\log n}{(c_1+c_2+c_3)\log\log m}\\
&\geq\frac{1}{c_1+c_2+c_3}\cdot \sqrt{\frac{\log n}{\log\log n}}
\end{align*}
cell-probes and has error probability $\delta < 1/3$. We will now derive a
contradiction from this assumption.

Let $a = c_2\log n$ and $b = (\log m)^{c_3}$. By Proposition
\ref{to_rp}, there exists a $(2t,0,a,b)^A$-safe coinless
quantum communication protocol $\mathcal P$ that solves the
problem $\pty{\log m}n$ with error probability at most $\delta$.
Let $p_1 = \log m/(c_1at^4)$ and $q_1 = n$.
By Proposition \ref{first_reduction},
there is a $(2t,0,a,b)^A$-safe coinless quantum protocol
that solves $\pty{p_1}{q_1}^{(c_1at^4)}$ with error
probability at most $\delta$.

By the Quantum Round Elimination Lemma (Lemma \ref{qu_round_elim}),
it follows that there is a $(2t-1,a,a,b)^B$-safe public coin quantum
protocol for $\pty{p_1}{q_1}$ with error probability
at most $\delta + (12t)^{-1}$. Let $p_2 = p_1 - \log (c_1bt^4) -1$
and $q_2 = \lfloor q_1/(c_1bt^4)\rfloor$. By Proposition
\ref{second_reduction}, there is a $(2t-1,a,a,b)^B$-safe public-coin
quantum protocol for the problem $^{(c_1bt^4)}\pty{p_2}{q_2}$.

We have
\begin{equation}
\label{eq:noidea}
\frac{\log m}{(2c_1at^4)^i}\geq \log{c_1bt^4}+1\text{ for all $i\leq t$}
\end{equation}
which implies that
$$p_2\geq \frac{\log m}{c_1at^4} - \frac{\log m}{2c_1at^4} =
\frac{\log m}{2c_1at^4}$$
and thus, there is a $(2t-1,a,a,b)^B$-protocol for the problem
$^{(c_1bt^4)}\pty p q$ with
$$p = \frac{\log m}{2c_1at^4}, q = \frac{n}{c_1bt^4}$$

Applying the Round Elimination Lemma again, we obtain
a $(2t-2,a+b,a,b)^A$-safe public coin quantum protocol
for the problem $\pty{p}{q}$ that has error probability at
most $\delta + 2(12t)^{-1}$.

Iterating this process, we let $p_1' = p/(c_1at^4)$
and $q'_1 = q$. Proposition \ref{first_reduction} gives us a
$(2(t-1), a+b, a, b)^A$-safe protocol for $\pty{p'_1}{q'_1}^{(c_1at^4)}$.
Applying the Round Elimination Lemma, we get a $(2t-3, 2a+b, a, b)^B$-safe
protocol for $\pty{p'_1}{q'_1}$ with error probability at most
$\delta + 3(12t)^{-1}$. Now let $p'_2 = p'_1 - \log (c_1bt^4) - 1$ and
$q'_2 = \lfloor n/(c_1bt^4)\rfloor$. With Proposition \ref{second_reduction},
we get a $(2t-3, 2a+b, a, b)^B$-safe protocol for the problem
$^{(c_1bt^4)}\pty{p_2'}{q_2'}$. Because of Equation \eqref{eq:noidea}, we have
$$p_2' \geq \frac{\log m}{2(c_1bt^4)^2} - \frac{\log m}{(2c_1at^4)^2}
= \frac{\log m}{(2c_1at^4)^2}$$
and thus, we have a protocol for $^{(c_1bt^4)}\pty{p'}{q'}$ for
$$p' = \frac{\log m}{(2c_1at^4)^2}, q' = \frac{n}{(c_1bt^4)^2}$$
Applying the Round Elimination Lemma, we obtain a
$(2(t-2), 2(a+b), a, b)^A$-safe protocol for $\pty{p'}{q'}$ that has error
probability at most $\delta + 4(12t)^{-1}$.

We continue this process for $t$ iterations in total. After the $i$th
iteration, we have a $(2(t-i),i(a+b),a,b)^A$-safe public-coin quantum protocol
for the problem $\pty_{p,q}$ with
$$p = \frac{\log m}{(2c_1at^4)^i},q = \frac{n}{(c_1bt^4)^i}$$
and error probability $\delta + 2i(12t)^{-1}$ and thus, after $t$ iterations, we have a
$(0,t(a+b),a,b)^A$-safe public-coin protocol for the problem
$\pty p q$ with
$$p = \frac{\log m}{(2c_1t^4)^t} \geq (\log m)^{\Omega (1)},
q = \frac{n}{(c_1bt^4)^t}\geq n^{\Omega (1)}$$
that has error probability $\delta + 2t(12t)^{-1} = \delta + 1/6 < 1/2$.
That means that Alice can guess with a worst-case error probability better than one half
the rank parity of her input $x$ in Bob's set $S$ without communicating with Bob
and without any shared entanglement.
This clearly is impossible.
\end{proof}

\section{Fully Quantum Data Structures}
\label{full_quant}

\subsection{Introduction}

After looking at classical data structures and lower bounds in the
setting of quantum access to classical data structures, we now turn
to data structures where the data is encoded not in classical bits
but in qubits. The query algorithms may use any unitary transforms
and any measurements on the data. While we may compare the size of such a
fully quantum data structure to the size of its classical counterparts,
this approach is not comparable to the classical or quantum cell-probe model
in terms of time complexity.
Another problem in this setting
is that if a query algorithm involves measurements then the data
may be irreversibly altered. Therefore, we will also need to consider
how many times a data structure can be used.

\subsection{Set Membership}

Our first example of a fully quantum data structure is a data structure
for the set membership problem found by Buhrman, Cleve, Watrous and de Wolf
which is described in \cite[Section 8]{dewolf_thesis}. This data structure is
based on a solution to the
\em quantum fingerprinting \em problem where we want to encode
$x, y\in [m]$ as quantum states $\ket{\phi _x}, \ket{\phi _y}$ which we can
use to determine whether $x = y$ with low error probability.

First, let us have a look at classical fingerprinting. Consider
the following situation: Alice and Bob each hold a bit
string $x$ and $y$ in $\{ 0,1 \} ^n$ respectively. They want to find out whether
$x = y$ while keeping the amount of communication small.
The trivial solution would be for one party to send
the whole bit string to the other. If they want to have certainty,
this approach is actually optimal. If they are content
with a probabilistic test, there are better ways.
Let $\epsilon$ be the error probability they want to allow.
Choose a prime power $q\geq (n-1)/\epsilon$ and let $\mathbb F$ be the finite
field with $q$ elements. Let $a = a_0\dots a_{n-1}\in \{ 0,1 \}^n$
and $f_a = \sum _{i=0}^{n-1}a_iX^i$. If $x = y$, then $f_x = f_y$.
If $x\neq y$ then the polynomial $\bar f = f_x-f_y$ is non-zero.
We now use the following Lemma for which a proof can be found in
\cite[Lemma 16.4]{jukna}:
\begin{lem}
\label{jukna_lem}
Let $\mathbb F$ be a field and $f$ a non-zero polynomial of degree $d$ over that
field. Let $S$ be a finite, non-empty subset of $\mathbb F$. If we select
$r\in S$ uniformly at random, the probability that $f(r) = 0$ is at most
$d/|S|$.
\end{lem}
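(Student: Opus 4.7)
The plan is to reduce the lemma to the classical fact that a non-zero polynomial of degree $d$ over a field has at most $d$ roots. Once that is established, counting the favourable outcomes among the $|S|$ equally-likely choices of $r$ gives the bound immediately: the number of $r \in S$ with $f(r) = 0$ is at most the total number of roots of $f$ in $\mathbb F$, which is at most $d$, so the probability is at most $d/|S|$.

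To establish the root bound, I would proceed by induction on $d$. The base case $d = 0$ is immediate: a non-zero constant polynomial has no roots. For the inductive step, suppose $f$ has degree $d \geq 1$ and is non-zero. If $f$ has no root in $\mathbb F$, we are done. Otherwise, pick a root $r_0 \in \mathbb F$. By the division algorithm for polynomials over a field, we can write $f(X) = (X - r_0) g(X) + c$ where $c \in \mathbb F$ is a constant and $g$ has degree $d - 1$; plugging in $X = r_0$ shows $c = 0$, so $f(X) = (X - r_0) g(X)$. Since $\mathbb F$ is a field (and thus an integral domain), any root $r$ of $f$ satisfies either $r = r_0$ or $g(r) = 0$. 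Moreover $g$ is non-zero because $f$ is non-zero, so by the induction hypothesis $g$ has at most $d - 1$ roots. Counting $r_0$ separately gives at most $d$ roots for $f$, completing the induction.

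The main (and essentially only) obstacle is invoking the field hypothesis at the right spot: the factorisation argument relies on $\mathbb F$ being an integral domain, since otherwise a product can vanish without either factor vanishing and the root count can exceed the degree (for example, $X^2 - 1$ has four roots in $\mathbb Z/8\mathbb Z$). Everything else is a straightforward application of polynomial division and elementary probability, and no quantum or data-structure machinery from the rest of the paper is needed.
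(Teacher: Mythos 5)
Your proof is correct. Note, however, that the paper does not prove this lemma at all — it simply cites it (as Lemma~16.4 of Jukna's book) and uses it as a black box. So there is no in-paper argument to compare against; what you have written is a complete, self-contained proof of the cited fact. The reduction to the root-counting bound is the standard one, the induction via polynomial division is clean, and you correctly pin down where the field (integral-domain) hypothesis is used, with a good counterexample ($X^2 - 1$ over $\mathbb Z/8\mathbb Z$) showing why it cannot be dropped. One tiny point worth making explicit: in the final counting step you should note that the roots of $f$ lying in $S$ form a subset of the roots of $f$ in $\mathbb F$, so there are at most $d$ of them, and since $r$ is chosen uniformly from the $|S|$ elements of $S$ the probability is at most $d/|S|$ — you state this, and it is immediate, but it is the only place where $S$ enters and is worth the one sentence you gave it.
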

Therefore, the probability that $\bar f (r) = 0$ (and hence $f_x(r) = f_y(r)$)
is at most
$$\frac{\text{deg}(\bar f)}{q}\leq \frac{n-1}{q} \leq \frac{(n-1)\epsilon}{n-1}= \epsilon\text .$$

Alice and Bob could use the following protocol. Alice selects a
random $r\in \mathbb F$ and sends $r, f_x(r)$ to Bob. Bob computes
$f_y(r)$ and compares $f_x(r)$ and $f_y(r)$. If they are equal, he sends $1$
to Alice to indicate that $x = y$. Otherwise, he sends $0$.

If $x=y$, this protocol will always output the correct answer.
Otherwise, there is an error probability of at most $\epsilon$.
The communication that is required is $2\log |\mathbb F|+1$. If we choose
$\epsilon$ as some small constant (or even $\epsilon = 1/poly (n)$),
we can choose $q = O(n)$ ($q\in poly(n)$) and have communication complexity
$O(\log n)$.

This protocol depends on Alice and Bob sharing a random number. But what
if they cannot do that? Let us now consider the following scenario:
Alice and Bob again have inputs $x$ and $y$ respectively, but now there is a
\em referee \em whose task is to decide whether $x = y$. Alice and Bob have
to enable the referee to do that with good probability.
They may each send only \em one \em message to
the referee and cannot communicate with each other (in the quantum case, they
also do not share entanglement). It is clear that the scheme described above
does not help us here since Alice and Bob cannot share randomness.
However, we can use a quantum version of our previous scheme
by putting the values $f_a(r)$ in superposition. More precisely, let
$$\ket{\phi _a} =\sum _{r\in\mathbb F}\frac{1}{\sqrt{|\mathbb F|}}\ket r\ket{f_a(r)}\text .$$
Alice sends $\ket{\phi _x}$ to the referee and Bob sends $\ket{\phi _y}$.
If $x = y$, these states are identical, but if $x\neq y$, they are nearly
orthogonal. Two polynomials of degree $\leq n-1$ can have the same value on
at most $n-1$ elements of $\mathbb F$. Thus, for distinct $x$ and $y$,
$$0\leq |\inp{\phi _x}{\phi _y}| \leq \frac{n-1}{q}\leq \epsilon \text .$$
The referee then applies a \em swap test \em (see \cite[Section 8]{dewolf_thesis} for details) to determine whether these states are identical or almost
orthogonal. If the states are equal, the test will always have result 1, and if
not, it has result 1 with probability below $(1 + \epsilon^2)/2$. Repeating the
swap test several times on different fingerprints, we can tell these two cases
apart with good probability. We can again choose $\epsilon$ as some small
constant and $q\in O(n)$ to obtain a protocol that solves the problem with low
error probability and $O(\log n)$ communication.

Let us now see how to construct a data structure for the set membership problem
from these quantum states. First, we show how to store singletons.
We encode $x\in [m]$ as $\ket{\phi _x}$ where we choose $\mathbb F$ as a field
of size at least $(\log m -1)/\epsilon$. This encoding requires
$2\log |\mathbb F| = O(\log\log m - \log\epsilon)$ qubits.
Queries ``$y = x?$'' are answered
by first appending a fresh qubit initialized to $\ket 0$ to $\ket{\phi _x}$
and performing the unitary transform given by
$$\ket{r}\ket{z}\ket b\mapsto \ket r\ket{z}\ket{b\oplus [z = f_y(r)]}$$
where $[z = f_y(r)]$ denotes 1 if $z = f_y(r)$ and 0 otherwise.
The state after the transform is
$$
\ket{\phi _x} = \sum _{r\in\mathbb F,f_x(r) \neq f_y(r)}\sqrt{\frac{1}{|\mathbb F|}}\ket r\ket{f_x(r)}\ket 0+
\sum _{r\in\mathbb F,f_x(r) = f_y(r)}\sqrt{\frac{1}{|\mathbb F|}}\ket r\ket{f_x(r)}\ket 1
$$
and therefore, if $x = y$, we will always receive outcome 1 when measuring
the last qubit. If $x\neq y$, then there are less than $\epsilon\cdot q$
elements $r\in\mathbb F$ such that $f_x(r) = f_y(r)$. Thus, the probability
of measuring $1$ is less than $\epsilon$ in this case.

If $y = x$, the measurement does not alter $\ket{\psi _x}$.
If
$y\neq x$, let $S_y = \{ r\in\mathbb F\mid f_x(r)\neq f_y(r)\}$.
We have
$$
|S_y|\geq |\mathbb F| - (\log m-1)\geq \frac{\log m -1}{\epsilon} - (\log m -1)
= \frac{1-\epsilon}{\epsilon}(\log m-1)
$$
If the measurement returned 0, the state after the measurement will be
$$\sum _{r\in S_y}\frac 1{\sqrt{|S_y|}}\ket r\ket{f_x(r)}\ket 0$$
and if we make another query to this state, Lemma \ref{jukna_lem} can only guarantee
an error probability of at most $\epsilon /(1-\epsilon )$: In the worst case,
the two polynomials $f_x$ and $f_y$ agree on $\log m -1$ values for $r$ and
$(\log m-1)/\epsilon$ is already a prime, so
$|\mathbb F| = (\log m-1)/\epsilon$. Then, our new quantum state only contains
a superposition over $(1-\epsilon )(\log m -1)$ elements $r\in \mathbb F$ and
their corresponding values of the polynomial. In that case,
Lemma \ref{jukna_lem} can only guarantee a success probability of at most
$\epsilon/(1-\epsilon )$ when we check whether $z=x$.

This means that if we originally had error probability $1/k$ then
we can only guarantee $1/(k-1)$ now. If we want to handle more
queries, we can enlarge the field $\mathbb F$.

It is possible to design a data structure such that $k$ successive queries
$y_1 = x?,\dots ,y_k = x?$ are all answered correctly with probability
at least 2/3. Let $\epsilon = 1/(4k)$ and let $\mathbb F$ be a
field of size $q$ with $q = O((\log m)/\epsilon) = O(k\log m)$ and $q\geq (\log m-1)/\epsilon = 4k(\log m-1)$.
Using this field in the construction above yields such a data structure.
After $i< k$ queries, the quantum state of our data structure is
$$\sum _{r\in S_i}\frac{1}{\sqrt{|S_i|}}\ket r\ket{\phi _x(r)}$$
with $|S_i|\geq |\mathbb F| - i(\log m-1)$. Thus, the probability
that each of the $k$ queries has the correct result is at least
$$\prod _{i=1}^{k-1}\left( 1-\frac{\log m-1}{4k(\log m-1)-i(\log m-1)}\right)
= \prod _{i=1}^{k-1}\left( 1-\frac 1{4k-i}\right)\geq \left( 1-\frac 1{3k}\right) ^k\geq \frac 2 3$$
This proves the following theorem:
\begin{thm}
For positive integers $m,k$
there is a quantum data structure that encodes elements $x\in [m]$ in
$O(\log\log m+\log k)$ qubits such that for $k$ successive queries of the form
``$y = x?$'', the probability that they are all answered correctly
is at least 2/3.
\end{thm}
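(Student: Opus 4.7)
The plan is to assemble the theorem from the fingerprinting construction developed just before it, and to control error accumulation across the $k$ queries by tracking how the stored state degrades each time a query with $y\neq x$ is made.

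First I would fix the parameters. Set $\epsilon = 1/(4k)$ and choose $\mathbb{F}$ to be a finite field of prime-power size $q$ with $4k(\log m - 1) \leq q = O(k\log m)$; such a $q$ exists by Bertrand's postulate applied to prime powers (or even just using primes). The encoding is
\[
\ket{\phi_x} \;=\; \frac{1}{\sqrt{q}}\sum_{r\in\mathbb{F}} \ket{r}\ket{f_x(r)},
\]
where $f_x \in \mathbb{F}[X]$ is the degree-$(\log m - 1)$ polynomial whose coefficients are the bits of $x$. This uses $2\log q = O(\log k + \log\log m)$ qubits, matching the claimed bound.

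Next I would describe the query procedure. Given a query $y$, append a fresh qubit in state $\ket 0$ and apply the unitary $\ket r\ket z\ket b \mapsto \ket r\ket z\ket{b\oplus [z = f_y(r)]}$, then measure the last qubit. As shown in the discussion preceding the theorem, if $y = x$ we always observe $1$ and the state is unchanged; if $y \neq x$ we observe $1$ with probability at most $(\log m - 1)/q \leq \epsilon$, and conditional on observing $0$ the state collapses to the uniform superposition over the set $S_y = \{r : f_x(r)\neq f_y(r)\}$ of ``surviving'' evaluation points tensored with the corresponding values.

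The heart of the proof is the cumulative-error bound. I would argue by induction that after $i$ queries with distinct wrong answers $y_1,\dots,y_i\neq x$ all correctly answered $0$, the state is the uniform superposition over some set $S_i$ with $|S_i| \geq q - i(\log m - 1)$, since each wrong query removes at most $\log m - 1$ evaluation points (two polynomials of degree $\leq \log m - 1$ agree on at most $\log m - 1$ points). Therefore, conditional on the first $i$ queries being answered correctly, the probability that the $(i+1)$st query $y_{i+1}\neq x$ is also answered correctly is at least $1 - (\log m - 1)/|S_i| \geq 1 - 1/(4k - i)$. (Queries with $y = x$ are free: they are correct with probability $1$ and do not alter the state.) The worst case is that all $k$ queries are of the $y\neq x$ kind, so the probability that all $k$ answers are correct is at least
\[
\prod_{i=0}^{k-1}\left(1 - \frac{1}{4k - i}\right) \;\geq\; \left(1 - \frac{1}{3k}\right)^{k} \;\geq\; \frac{2}{3},
\]
where the last inequality holds because $(1 - 1/(3k))^k$ is increasing in $k$ and tends to $e^{-1/3} > 2/3$; one can also check the small $k$ cases directly. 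This gives the claimed success probability and completes the proof.

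The only real obstacle is the inductive tracking of $|S_i|$ and the arithmetic ensuring that the product of survival probabilities stays above $2/3$; both are essentially worked out inline in the paragraphs preceding the theorem, so the proof is really a matter of stitching those observations together and picking $\epsilon = 1/(4k)$ to make the numerics go through.
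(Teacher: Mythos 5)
Your proof is correct and follows essentially the same route as the paper: same fingerprint encoding with a field of size $q \geq 4k(\log m - 1)$, same query unitary and collapse analysis, and the same inductive bound $|S_i| \geq q - i(\log m - 1)$ leading to the product of survival probabilities bounded below by $(1-\tfrac{1}{3k})^k \geq 2/3$. The only difference is that you start the product at $i=0$ rather than $i=1$, which correctly accounts for the first query's error probability; the paper's version drops that factor, a harmless slip since your stronger product still clears the same $2/3$ bound.
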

We can encode a set $S\subseteq [m]$ by simply storing a fingerprint
for each element. We answer the query $y\in S$ by answering
whether $y = x$ for any $x\in S$. Note that we have to reduce
the error probability for the individual fingerprints to, say, $1/(4n)$.
\begin{thm}
There is a data structure that stores sets $S\subseteq [m]$ of
size at most $n$ in $O(n(\log\log m+\log k + \log n))$ (or
$O(n(\log\log m + \log k))$ for $k$ or $m$ large enough)
qubits such that $k$ successive queries
``$y\in S?$'' are all answered correctly with probability at least
2/3.
\end{thm}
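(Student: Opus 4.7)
The plan is to store the set $S = \{x_1, \ldots, x_s\}$ (with $s \leq n$) as the tensor product $\ket{\phi_{x_1}} \otimes \cdots \otimes \ket{\phi_{x_s}}$ of $s$ independent single-element fingerprints from the previous theorem, and to answer ``$y \in S?$'' by running the equality test ``$y = x_i?$'' in parallel on each of the $s$ registers and outputting the OR of the outcomes. Each external membership query then triggers up to $n$ internal equality tests, so over $k$ external queries at most $nk$ internal tests are performed; to bring the total failure probability below $1/3$, it suffices to drive the per-test error down by an extra factor of $n$ compared to the singleton construction.

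Concretely, I would fix $\epsilon = 1/(4nk)$, choose a prime power $q \geq (\log m - 1)/\epsilon$ (hence $q = O(nk \log m)$), and build each $\ket{\phi_{x_i}}$ over the field $\mathbb{F}$ of size $q$ as before. Each fingerprint then occupies $2\log q = O(\log n + \log k + \log\log m)$ qubits, and the whole data structure uses $O(n(\log n + \log k + \log\log m))$ qubits. Whenever $\log n = O(\log k + \log\log m)$ (in particular for $k$ or $m$ large enough) the $\log n$ term is absorbed and this simplifies to the claimed $O(n(\log\log m + \log k))$ bound.

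For correctness I would argue register-by-register. Fix a register $i$. If $y_j = x_i$ the equality test returns $1$ deterministically and leaves the state unaltered. Otherwise, assuming the first $j-1$ tests on this register were correctly answered, the reduced state of register $i$ is uniform on some set $S_i^{(j-1)} \subseteq \mathbb{F}$ of size at least $q - (j-1)(\log m - 1)$, so by Lemma \ref{jukna_lem} the $j$-th test errs with probability at most $(\log m - 1)/|S_i^{(j-1)}|$. Telescoping exactly as in the singleton analysis gives
$$\Pr[\text{register $i$ correct on all $k$ queries}] \;\geq\; \prod_{j=0}^{k-1}\!\left(1 - \frac{1}{4nk - j}\right) \;\geq\; \left(1 - \frac{1}{3nk}\right)^{k}.$$
Since the $n$ registers are on disjoint qubits and evolve independently, taking the product over $i$ yields overall success probability at least $(1 - 1/(3nk))^{nk} \geq 2/3$, by the same elementary inequality $(1 - 1/(3M))^{M} \geq 2/3$ used in the singleton theorem. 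Conditional on every individual test being correct, the OR correctly decides membership: if $y \in S$ the unique $i$ with $x_i = y$ outputs $1$, and if $y \notin S$ all $s$ registers output $0$.

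The only subtle point, and the one I would double-check carefully, is the synchronisation of error accumulation across two dimensions: each register accumulates shrinkage only through tests for which $y_j \neq x_i$, and each such test removes at most $\log m - 1$ points from $S_i^{(j-1)}$ by Lemma \ref{jukna_lem}. This is precisely the guarantee needed to justify the uniform lower bound $q - (j-1)(\log m-1)$ used above, and it is the reason one inflation factor of $n$ (rather than anything larger) in the field size already suffices. Beyond this bookkeeping, everything is a direct product-form lifting of the preceding singleton theorem.
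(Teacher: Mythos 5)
Your proof is correct and follows exactly the paper's (very terse) approach: store one fingerprint per element of $S$ with per-test error driven down to $\epsilon = 1/(4nk)$, test each register for equality on each query, and output the OR. You supply the register-by-register telescoping bound and the $(1-1/(3nk))^{nk}\geq 2/3$ calculation that the paper leaves implicit, and your accounting of where the extra factor of $n$ in the field size comes from matches the paper's remark about reducing the per-fingerprint error by $1/n$.
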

Let us compare this result to the classical setting. The information-theoretic
minimum for storing $S\subseteq [m]$ of size $n$ is $\Omega(n\log m)$
bits. The only way around this limitation would be to also consider
data structures that work for \em most \em queries but fail
on some. We could
encode single elements $x\in [m]$ by selecting $r\in\mathbb F$ uniformly
at random and storing $(r,f_x(r))$. Then, we could check whether $y = x$ by comparing
$f_x(r)$ and $f_y(r)$. This method could be extended to sets by fingerprinting
each element, as in the previous theorem.

While this requires as many bits as our quantum data structure requires
qubits, the downside of the classical version is that, while
for most $y\in [m]$, the query ``$y\in S?$'' will be answered correctly,
there are some $y\in [m]$, determined when the set $S$ is encoded,
such that ``$y\in S?$'' will \em always \em be answered incorrectly.

A lower bound from \cite{dewolf_thesis} on the size of fully quantum set membership data structures is
$\Omega (n)$.
\begin{thm}
Every fully quantum data structure for the set membership problem requires
$\Omega (n)$ qubits.
\end{thm}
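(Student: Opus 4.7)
The plan is to reduce the space lower bound to Nayak's bound for quantum random-access codes.

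First, I would restrict attention to the sub-universe $[n]\subseteq [m]$. Every $S\subseteq [n]$ automatically satisfies $|S|\leq n$, so any fully quantum data structure for set membership on $[m]$ must in particular encode every subset of $[n]$. Identifying $S\subseteq [n]$ with its characteristic string $x_S\in\{0,1\}^n$ (where $(x_S)_i = 1$ iff $i\in S$), the data structure yields a map $x\mapsto\rho_x$ from $n$-bit strings to density matrices on at most $q$ qubits, where $q$ is the qubit count of the data structure.

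Second, running the query algorithm for ``$i\in S?$'' on input $\rho_{x_S}$ returns the bit $(x_S)_i$ with error probability at most $1/3$ (this is the basic single-query guarantee; the preceding theorem shows that after paying a logarithmic overhead in $\log\log m + \log k$ one can even support $k$ successive queries, but a single successful query already suffices here). Hence $x\mapsto\rho_x$ is an $(n,q,2/3)$-quantum random-access code: an encoding of $n$-bit strings into $q$-qubit states from which any single coordinate can be recovered with probability at least $2/3$.

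Third, I would invoke Nayak's lower bound on such codes, which asserts that every $(n,q,p)$-quantum random-access code with $p>1/2$ satisfies $q\geq (1-H(p))\,n$, where $H$ denotes the binary entropy function. Setting $p=2/3$ yields $q\geq (1-H(2/3))\,n = \Omega(n)$, as claimed.

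The main obstacle is Nayak's bound itself. Its proof combines Holevo's theorem (the accessible information from a $q$-qubit state is at most $q$ bits), Fano's inequality (translating a $2/3$-recovery probability into a mutual information of at least $1-H(2/3)$ per bit), and a chain-rule argument for mutual information applied to a uniformly random input $X\in\{0,1\}^n$, which aggregates these per-bit bounds into a total of $n(1-H(2/3))$ bits of mutual information between $X$ and $\rho_X$. Each ingredient is standard in quantum information theory, but the composition is nontrivial, so in the spirit of this survey the cleanest presentation is to cite Nayak's theorem (as treated in \cite{dewolf_thesis}) rather than reproduce the full derivation.
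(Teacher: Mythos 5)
Your proposal is correct and takes essentially the same route as the paper: both reduce to quantum random-access codes by identifying $x\in\{0,1\}^n$ with the set $S_x=\{i\mid x_i=1\}\subseteq[n]$, observe that a single membership query yields bit $x_i$ with bounded error, and then invoke Nayak's bound $q\geq(1-H(p))n$. Your additional remark that restricting to the sub-universe $[n]$ guarantees $|S_x|\leq n$, and your sketch of the Holevo/Fano/chain-rule ingredients behind Nayak's theorem, are helpful elaborations but do not change the argument.
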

\begin{proof}
We show how we can use such a data structure as a \em quantum random access code \em (QRAC)
and then apply a lower bound on such codes due to Nayak in \cite{nayak}.
A QRAC encodes bit strings $x\in\{ 0,1 \} ^n$ in $l$-qubit states
$\ket{\psi _x}$ such that for each $i\in [n]$, we can recover $x_i$ from
$\ket{\psi _x}$ with probability $p$. A QRAC has to guarantee that we can
recover \em any \em bit of our choice with good probability, but it does not
have to guarantee that
we can recover more than one bit. The lower bound by Nayak is
$l\geq (1-H(p))n$ where $H(p) = -p\log p - (1-p)\log (1-p)$ is the binary
entropy function.

We represent $x$ as $S_x = \{ i \mid x_i = 1\}$.
Given a quantum data structure for set membership that uses $S(m, n)$ qubits,
we can store $S_x$ and use this as a QRAC for $x$ by
querying it on the index $i$ that we are interested in. Thus, if our data
structure achieves a success probability greater than $1/2$, we must have
$S(m, n)\geq \Omega(n)$.
\end{proof}

\subsection{Quantum Walks and Data Structures}
\newcommand{\pr}{\text{Pr}}
In this section, we will present \em quantum walks\em , a framework for the construction
of quantum algorithms, and show how it can use fully quantum data structures.
Using this framework, one can construct algorithms that work similar to
Grover's search algorithm. This framework also makes it easy to analyse
different kinds of costs of the constructed algorithms.
In contrast to the rest of this survey, the data structures here
are dynamic. That is, it also is important that they can be
updated with low cost.
We will use the framework to give an algorithm for triangle finding that was
discovered by Jeffery, Kothari and Magniez in \cite{quant_walk}.

First, let us have a look at the classical counterpart of quantum walks: \em Random walks\em .
We will consider random and quantum walks on a graph $\mathcal G = (\mathcal V,\mathcal E)$
where each vertex has exactly $d$ neighbours for some $d$.
More generally, we can consider Markov chains instead of graphs, but for
simplicity, we stick with graphs. This suffices for the application
which we describe. Let $\mathcal I$ be a set of possible inputs.
With each $x\in\mathcal I$, we associate a set $M_x\subseteq\mathcal V$.
Let $G$ be the adjacency matrix of $\mathcal G$ and $\delta$ the spectral
gap of $\frac 1 dG$. The spectral gap can, somewhat imprecisely, be described
as the difference between the largest and second largest eigenvector. If
$\lambda _1,\lambda _2,\dots , \lambda _n$ are the eigenvalues of $\frac 1 d G$,
listed with multiplicity and sorted in descending order with respect to their absolute
values, then the spectral gap is defined as $\delta = |\lambda _1| - |\lambda _2|$.
If $G$ is the adjacency matrix of some graph, then the eigenvalue of
$\frac 1 dG$ with the largest absolute value is always
$1$, so we have $\delta = 1-|\lambda _2|$.
We also call the spectral gap of $\frac 1 dG$ the spectral gap of the graph
$\mathcal G$.

We want to construct an algorithm that finds an element of $M_x$
given $x$. A random walk works as follows:
\begin{enumerate}
\item Choose a vertex $u\in\mathcal V$ uniformly at random.
\item Repeat the following until a vertex $v\in M_x$ is found:
\begin{enumerate}
\item Check if $u\in M_x$, if yes, output $u$.
\item Do the following $\lceil 1/\delta \rceil$ times:
Select a neighbour $v$ of $u$ uniformly at random and set $u = v$.
\end{enumerate}
\end{enumerate}

We will analyse the expected cost of this algorithm in terms of \em cost vectors\em . A
cost vector may store several kinds of costs for one operation that
are considered relevant. For example, we might associate with some
algorithm a vector that contains only the bit-probe complexity or
we might consider both the bit-probe and circuit complexity.
The three operations that form the random walk are \textbf{Setup}, step 1
above, \textbf{Checking}, step 2.(a), and \textbf{Update}, step
2.(b) and the associated cost vectors are $S$, $C$ and $U$,
respectively. Then, the expected cost is roughly
$$T(\epsilon, \delta) = S+\frac{1}{\epsilon}\left( C+\frac 1\delta U\right)$$
where $\epsilon = |M_x|/|\mathcal V|$. This holds because one can show that
by making about $1/\delta$ random steps starting from any vertex, we sample
a distribution on $\mathcal V$ that is close to uniform.

Obviously, if $M_x = \emptyset$, this algorithm can never terminate. If
we know $\epsilon > 0$ such that for all $M_x\neq \emptyset$, $\epsilon \leq |M_x|/|\mathcal V|$
(a trivial lower bound would be $1/|\mathcal V|$)
then we can construct a bounded-error algorithm that determines
whether $M_x = \emptyset$ and, if not, finds
some $u\in M_x$. This is done by running $\lceil 3/\epsilon\rceil$
iterations of the loop in step 2. If we obtain some output,
we have found $u\in M_x$. If not, we conclude that $M_x = \emptyset$.
This algorithm has a worst-case cost of $O(T(\epsilon , \delta ))$.
If $M_x = \emptyset$, the algorithm cannot make an error. If $M_x\neq\emptyset$,
it will only output a vertex $u$ if it is indeed in $M_x$. The probability
that it will falsely report $M_x = \emptyset$ is less than $1/3$.

To see this, let the random variable $T_x$ be the number of iterations of the
loop in step 2 to find some $u\in M_x$. Then, $E[T_x] \leq 1/\epsilon$. We now
estimate the probability that the random walk requires more than
$3/\epsilon$ iterations. By Markov's inequality,
$$\pr \left[ T_x > \frac 3 \epsilon \right]<
\frac\epsilon 3 \mathbb E[T_x]\leq\frac 1 3$$

Data structures can improve the efficiency of random walks as follows: With
each vertex $v$ of the graph, we associate some data $d_{v, x}$ that helps
us to decide whether $v\in M_x$, i.e., knowing $d_{v, x}$ reduces the checking
cost $C$. In the Setup phase, we also store some representation of $d_{v, x}$
for the starting vertex $v$. When we move from vertex $v$ to vertex $v'$,
we update the representation of $d_{v, x}$ to $d_{v', x}$. Thus, we can trade
off an increase in the Setup cost $S$ and Update cost $U$ for a decrease in
$C$. Depending on the problem and the data, this might reduce the overall cost.

With a quantum computer, we can reduce the factors $1/\epsilon$
and $1/\delta$ to their square roots, using the quantum walk framework.
To understand quantum walks, it
is helpful to first know how Grover's search algorithm works. We will, however,
not give a full proof for Grover's algorithm. Such a proof can be found in
\cite[Chapter 6]{mikeike}, for example.

\begin{thm}[Grover's Algorithm]
Let $x$ be a  bit string of length $N = 2^n$. Suppose that for some $\epsilon > 0$
we are guaranteed that if $x$ has a non-zero entry then at least $\epsilon N$
of its entries are 1. (Such a guarantee is trivial for $\epsilon = 1/N$.)
There is a bounded-error quantum bit-probe algorithm that makes $O(\sqrt{1/\epsilon})$
bit-probes to $x$ and outputs an $i\in [n]$ such that $x_i = 1$ or reports
that $x$ is the all-zero string. In particular, $O(\sqrt N)$ bit-probes
suffice for any string $x$, using the trivial value for $\epsilon$ mentioned
above.
\end{thm}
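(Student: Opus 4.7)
The plan is to follow the standard geometric analysis of Grover iteration. First I would prepare the uniform superposition $\ket{\psi_0} = \frac{1}{\sqrt N}\sum_{i=0}^{N-1}\ket i$ by applying $H^{\otimes n}$ to $\ket{0}^n$; this costs no oracle queries. Let $M = |\{i : x_i = 1\}|$ and assume $M \geq 1$ (we will deal with the all-zero case separately). Split the computational basis into ``good'' indices $G = \{i : x_i = 1\}$ and ``bad'' indices $B = [N]\setminus G$, and write $\ket{\psi_0} = \cos\theta\,\ket{B} + \sin\theta\,\ket{G}$ where $\ket G$ and $\ket B$ are the uniform superpositions over the respective sets and $\sin^2\theta = M/N$.

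Next I would define the Grover iteration $Q = (2\ket{\psi_0}\bra{\psi_0} - I)\,O_{x,\pm}$, where $O_{x,\pm}$ is the $\pm$-type oracle from Section \ref{comp} that flips the sign of $\ket i$ iff $x_i = 1$. A direct computation (done in \cite[Chapter 6]{mikeike}) shows that $Q$ acts as a rotation by angle $2\theta$ in the two-dimensional real subspace spanned by $\ket G$ and $\ket B$. Hence after $k$ iterations the state is $\cos((2k+1)\theta)\,\ket B + \sin((2k+1)\theta)\,\ket G$, and choosing $k = \lfloor \pi/(4\theta)\rfloor$ makes $(2k+1)\theta$ close to $\pi/2$, so that measuring in the computational basis yields some $i\in G$ with constant probability. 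Since $M/N \geq \epsilon$, we have $\theta \geq \arcsin\sqrt\epsilon \geq \sqrt\epsilon$, so $k = O(1/\sqrt\epsilon)$ oracle queries suffice. A single further bit-probe verifies that the measured index is indeed marked, converting the success probability to a one-sided error that we can drive below $1/3$ by $O(1)$ repetitions.

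The remaining obstacle is that $M$ is unknown, so we do not know the correct number of iterations $k$ in advance. The standard fix, which I would cite from \cite[Chapter 6]{mikeike}, is to run the procedure with $k$ drawn uniformly from $\{0,1,\dots,K-1\}$ for geometrically increasing guesses $K = 1,2,4,\dots$ up to $K = \lceil c/\sqrt\epsilon\rceil$. A short calculation shows that for the correct scale of $K$, the expected success probability is bounded below by a constant, so $O(\sqrt{1/\epsilon})$ total queries suffice. Finally, to handle the case $M = 0$: if after the full schedule of $O(\sqrt{1/\epsilon})$ queries no measured index $i$ satisfies $x_i = 1$, we output ``all-zero''. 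By the guarantee on $\epsilon$, a non-zero $x$ is detected with probability at least $2/3$, while an all-zero $x$ is never falsely reported as having a marked element. The main subtlety in the write-up is the $M$-unknown part; the two-dimensional rotation analysis itself is routine.
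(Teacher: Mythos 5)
Your proposal is correct and follows essentially the same route as the paper: the standard two-dimensional rotation analysis of the Grover iterate (with the diffusion written as $2\ket{\psi_0}\bra{\psi_0}-I$, which the paper shows equals $H^{\otimes n}O_G H^{\otimes n}$), choosing $k = O(1/\sqrt\epsilon)$ iterations, and invoking the Boyer--Brassard--H\o yer--Tapp exponential-guessing schedule for the unknown number of marked items. The only cosmetic differences are that you make the verification bit-probe and one-sided-error conversion explicit and phrase things in terms of $M$ rather than $\epsilon N$, neither of which changes the argument.
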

\begin{proof}[Proof sketch]
We use the oracle
$$O_{x,\pm} : \ket{i}\mapsto\begin{cases}-\ket{i} &\text{ if }x_i = 1\\
    \ket i&\text{ if } x_i = 0\end{cases}$$
for bit-probes to $x$. First, we show how to
find an index $i\in [n]$ with $x_i = 1$ when we know that \em exactly \em
$\epsilon N$ bits of $x$ are 1. The algorithm operates on $n$ qubits which start
in the $\ket 0$-state. First, a Hadamard gate is applied to every qubit which
creates a uniform superposition $\ket{\mathcal U}$ over the states
$\ket 0,\dots , \ket{N-1}$.
Let $\ket{\mathcal G}$ be the uniform superposition over all ``good'' states,
i.e., the states $\ket i$ with $x_i = 1$, and $\ket{\mathcal B}$ the uniform
superposition over the ``bad'' states, i.e., $\ket i$ with $x_i = 0$. We can
write
$$\ket{\mathcal U} = \frac 1{\sqrt{N}}\sum _{i = 0}^{N-1}\ket i
= \sin (\theta)\ket{\mathcal G} + \cos (\theta )\ket{\mathcal B}
\text{ for } \theta = \arcsin (\sqrt{\epsilon})\text .$$
Each iteration of Grover's algorithm shifts the amplitude from the ``bad''
states towards the ``good'' ones.
After $k$ iterations, the amplitude of $\ket{\mathcal G}$ is
$\sin ((2k+1)\theta)$.
This
is achieved by applying the transform $H^{\otimes n}O_GH^{\otimes n}O_{x,\pm}$
where
$$O_G: \ket i\mapsto \begin{cases} \ket 0 &\text{ if }i = 0\\
-\ket i &\text{ if } i\neq 0\end{cases}$$
to our working state. To see that this transform has the intended effect,
notice first that for computational basis states $\ket i$
$$O_{x,\pm} : \ket i\mapsto \begin{cases} -\ket i &\text{ if }\ket i\text{
is orthogonal to }\mathcal B\\
\ket i &\text{ otherwise}\end{cases}$$
i.e., $O_{x, \pm}$ is a \em reflection \em through $\mathcal B$.
The transform
$H^{\otimes n}O_GH^{\otimes n}$ implements a reflection through
$\mathcal U$. This can be seen as follows: We can write
$O_G = 2\ket 0\bra 0 - I$. This gives us
$$H^{\otimes n}O_GH^{\otimes n} =
2\left( H^{\otimes n}\ket 0\right) \left( \bra 0H^{\otimes n}\right)
-H^{\otimes n}H^{\otimes n}
= 2\ket{\mathcal U}\bra{\mathcal U} - I$$
which shows that $H^{\otimes n}O_GH^{\otimes n}$ reflects through $\ket{\mathcal U}$.
The angle between $\ket{\mathcal U}$ and $\ket{\mathcal B}$ is $-\theta$.
If the state \em before \em the iteration was $\sin ((2k-1)\theta)\ket{\mathcal
G} + \cos ((2k-1)\theta )\ket{\mathcal B}$, the angle $(2k-1)\theta$ is first
changed to $-(2k-1)\theta$ by the reflection through $\ket{\mathcal B}$. Now
the angle between $\ket{\mathcal U}$ and our current working state is
$2k\theta$. Thus, after the reflection through $\ket{\mathcal U}$, our
working state becomes $\sin ((2k+1)\theta )\ket{\mathcal G} + 
\cos ((2k+1)\theta )\ket{\mathcal B}$.

If $\epsilon = \sin ^2(\pi /(2\cdot(2k+1)))$ for some positive integer $k$,
then after
$k$ iterations, our working state will be $\ket{\mathcal G}$. Measuring it
will give us an index $i\in [n]$ such that $x_i = 1$, by definition of
$\ket{\mathcal G}$. If $\epsilon$ is not of this form, we can nevertheless
bring our working state close to $\ket{\mathcal G}$ so that a measurement
will have a correct result with high probability by picking an integer $k$
such that $\sin ^2(\pi /(2\cdot (2k+1)))$ is as close as possible to $\epsilon$.

Since $\sin x\approx x$ for small $x$, we choose
$$k\approx \frac{\pi}{4\sqrt{\epsilon}}-\frac 1 2 = O\left(\frac 1{\sqrt{\epsilon}}\right)\text .$$
See Figure \ref{fig_grover} for an illustration of
the circuit. But what can we do when we do not know $\epsilon$ exactly?
If we apply too many iterations of Grover's algorithm, we will end up \em
decreasing \em the amplitude of the ``good'' states again.

\begin{figure}[h]
\centering
\includegraphics{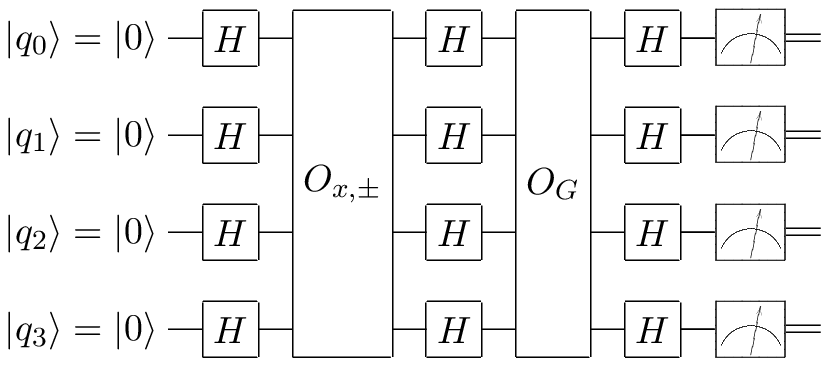}
\caption{Grover's search algorithm for input length $N = 16$, assuming that
$\epsilon = 1/4$ of the bits in $x$ are 1. In that case, we have
$\theta = \arcsin (1/2) = \pi/6$. Thus, one iteration will bring our working
state to $\ket{\mathcal G}$.}
\label{fig_grover}
\end{figure}

The case where either $x$ is all-zero or exactly $\epsilon N$ bits are 1 is
easy: We apply Grover's algorithm for $\epsilon$ and obtain an $i\in [n]$.
If $x_i = 1$, we output $i$ and if $x_i = 0$, we say that $x$ is all-zero.
If we only know that either $x$ is all-zero or at least $\epsilon N$ bits are
1, we can use a method by Boyer, Brassard, H\o yer and Tapp in
\cite{bbht} which solves our problem with $O(N/\epsilon )$ bit-probes using
several systematic guesses for the actual number of indices $i$ with $x_i = 1$.
\end{proof}
In this proof, we can identify the following basic ingredients:
\begin{itemize}
\item A uniform superposition $\ket{\mathcal U}$ over the whole search space,
\item A reflection through $\ket{\mathcal B}$, the uniform superposition
of the 0-elements,
\item A reflection through $\ket{\mathcal U}$.
\end{itemize}
In quantum walk algorithms, we apply these ingredients on a graph (or Markov
chain) instead of a bit string. The basic operations of quantum walk algorithms
are explained below. Our presentation here is based on lecture notes by
Ronald de Wolf \cite{lec_notes} which only treats quantum walks on graphs.
For a survey about walks on Markov chains, see \cite{quant_walk2}.
Let $\mathcal G = (\mathcal V,\mathcal E)$ be a $d$-regular graph and $G$ its
adjacency matrix. Let $|\mathcal V| = n$. For every $u\in\mathcal V$, let
$\mathcal V_u$ be the set of neighbours of $u$.
With each $x\in \mathcal I$ for some set $\mathcal I$ of possible inputs, we associate
a set $M_x\subseteq \mathcal V$. We want to design an algorithm for
deciding whether $M_x\neq \emptyset$ and finding a vertex $v\in M_x$ if it is
non-empty.
For a Hilbert space $\mathcal H_D$, let $D^x:\mathcal V\mapsto\mathcal H_D$
be a function that associates $x$ and a vertex $u\in \mathcal V$
with some quantum state $\ket{D^x(u)}$.
We call $\{ \ket{D^x(u)}\} _{u\in\mathcal V}$ the data structure associated
with $x$ and $\mathcal G$.
Fully quantum data structures were first used in \cite{quant_walk}
for this purpose; previously, only classical ones had been used.

Let $\mathcal H_L \simeq\mathcal H_R$ be Hilbert spaces with orthonormal
basis vectors $\ket 0_L, \ket u _L$ for $u\in\mathcal V$ and
$\ket 0 _R, \ket u_R$ for $u\in\mathcal V$, respectively. Since the basis
states of $\mathcal H_L$ and $\mathcal H_R$ share the same set of labels,
we use $L$ and $R$ subscripts to specify the Hilbert space that a basis vector
belongs to. The \textbf{Setup}
consists of constructing the quantum state
$$\ket{\mathcal U^x} = \frac{1}{\sqrt{dn}}\sum _{u\in\mathcal V}
\sum _{v\in\mathcal V_u}\ket u_L\ket v_R
\ket{D^x(u)}$$
given bit-probe access to $x$. We can view this state as the uniform
superposition over all edges of the graph. We denote the cost vector for this
operation by $S$.

As the \textbf{Checking} operation, we compute the following transform:
$$\ket u_L\ket v_R\ket{D^x(u)}
\mapsto\begin{cases}-\ket u_L\ket v_R\ket{D^x(u)} &\text{ if }u\in M_x\\
\ket u_L\ket v_R\ket{D^x(u)}&\text{ otherwise}\end{cases}$$
and we denote the cost vector for this operation with $C$. The checking
operation may also have some small error probability. This operation implements
the reflection through $\ket{\mathcal B}$.

It now remains to implement the reflection through $\ket{\mathcal U}$. For
this, we use the \textbf{Update} operation.
The \textbf{Update} operation consists of the following unitary transforms,
where $\mathcal V_u$ denotes the set of neighbours of $u$ in $\mathcal G$
\begin{align*}
P_A: \ket u_L\ket 0_R\ket{D^x(u)}&\mapsto\sum _{v\in\mathcal V_u}
\frac 1{\sqrt{d}}\ket u_L\ket v_R\ket{D^x(v)}\\
P_B: \ket 0_L\ket v_R\ket{D^x(v)}&\mapsto\sum _{u\in\mathcal V_u}
\frac 1{\sqrt{d}}\ket u_R\ket v_L\ket{D^x(u)}
\end{align*}
The cost vector for executing these two operations, plus their inverses is
denoted by $U$. We can use these operations to reflect through
$\ket{\mathcal U^x}$, but showing this is not straightforward. In the following
discussion, we will suppress the data structure in the notation to make it more
readable. For every vertex $u$, let
$$\ket{p_u} = \frac{1}{\sqrt{d}}\sum _{v\in\mathcal V_u}\ket{u}_L
    \ket{v}_R$$
and for every vertex $v$, let
$$\ket{q_v} = \frac{1}{\sqrt{d}}\sum _{u\in\mathcal V_v}\ket{u}_L
\ket{v}_R$$
and let
\begin{align*}
\mathcal A &= \text{span}\{ \ket{p_u} \mid u\in\mathcal V\}\\
\mathcal B &= \text{span}\{ \ket{q_v}\mid v\in\mathcal V\}
\end{align*}
Let $\text{ref}_{\mathcal A}$ and $\text{ref}_{\mathcal B}$ be the reflections
through $A$ and $B$ and let $W(G) = \text{ref}_{\mathcal B}\cdot\text{ref}_{
\mathcal A}$. We can compute the reflection through $\mathcal A$ by
applying $P_A^*$, putting a $-$ in the amplitude if the second register is not
in state $\ket 0_R$ and then applying $P_A$. Similarly, we compute $\text{ref}
_{\mathcal B}$. Thus, we can compute $W(G)$.
We define matrices
$$A = \sum _{u\in\mathcal V} \ket{p_u}\bra{u}_L\bra 0_R,
B= \sum _{v\in\mathcal V} \ket{q_v}\bra{v}_L\bra 0_R\text .$$
Define the \em discriminant matrix \em $D(A, B)$ as $A^*B$.

The discriminant matrix is related to the adjacency matrix of our graph.
For every $u\in\mathcal V$ we have
\begin{align*}
\inp{p_u}{q_v} = \frac 1 d\left(\sum _{v'\in\mathcal V_v}\bra u_L\bra{v'}_R\right)
\left( \sum _{u'\in\mathcal V_v}\ket{u'}_L\ket{v}_L\right) &= \frac 1 d \sum _{u',v'}\inp{u}{u'}\inp{v'}{v}\\
	&= \begin{cases} 1/d &\text{ if } u\text{ and }v\text{ are neighbours}\\
		0 &\text{ otherwise}
\end{cases}
\end{align*}
and thus, $D(A,B) = \frac 1 d G$. The operation $D(A,B) = A^*B$ can thus be
seen as an analogue of a step in the random walk.

A theorem by Mario Szegedy relates the eigenvalues and -vectors of $D(A, B)$
to those of $W(G)$. We prove a simplified version here that suffices
for our purpose.

\begin{thm}[{Spectral Lemma, \cite[Theorem 1]{szeg04}}]
\label{spec_lem}
Let $A, B\in \mathbb C^{n\times m}$ such that each column of $A$ and $B$ is a
vector in $\mathbb C^n$ of length 1, $A^*A = B^*B = I$ and $D(A, B) = A^*B$
is Hermitian, i.e., $D(A,B) = D(A,B)^*$.
Let $\mathcal A$ and $\mathcal B$ be the subspaces of $\mathbb C^n$
spanned by the column vectors of $A$ and $B$, respectively, and
$\pi _A = AA^*$ and $\pi _B = BB^*$ projectors on these spaces.
Let $W = (2BB^* - I)(2AA^* - I) = \text{ref}_{\mathcal B}\cdot
\text{ref}_{\mathcal A}$. The following statements hold:
\begin{enumerate}
\item Every eigenvalue of $D(A, B)$ is real and has absolute value at most 1.
\item For every $\theta\in [ 0, \pi ]$, if $\cos\theta$ is an eigenvalue of $D(A,B)$
then $e^{\pm i\theta}$ are eigenvalues of $W$. If $e^{i2\theta}$ or $e^{-i2\theta}$
is an eigenvalue of $W$ with eigenvector in
$$\mathcal A + \mathcal B = \{ a + b\mid a\in\mathcal A, b\in\mathcal B\}$$
then $\cos \theta$ is an eigenvalue of $D(A,B)$.
\item On $\mathcal A\cap\mathcal B$ and on $\mathcal A^\perp\cap\mathcal B^\perp$,
$W$ acts as the identity.
\item On $\mathcal A\cap\mathcal B^\perp$ and $\mathcal A^\perp\cap\mathcal B$,
$W$ acts as $-I$.
\item If $v$ is an eigenvector of $W$ in $\mathcal A+\mathcal B$ with eigenvalue
1, then $v\in\mathcal A\cap\mathcal B$.
\end{enumerate}
\end{thm}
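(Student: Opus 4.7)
The plan is to handle parts (1), (3), and (4) directly, carry out the substantive computation in part (2) by exhibiting a $2$-dimensional $W$-invariant subspace for each eigenvector of $D(A,B)$, and then obtain part (5) as a corollary of the resulting spectral decomposition of $W$ on $\mathcal{A}+\mathcal{B}$. For part (1), reality of the eigenvalues of $D(A,B)$ is immediate from Hermiticity, and the bound $|\lambda|\leq 1$ follows from $\|A^*\|_{\mathrm{op}}\leq 1$: since $(AA^*)^2 = A(A^*A)A^* = AA^*$ using $A^*A = I$, the map $AA^*$ is an orthogonal projector, so $\|A^*x\|^2 = x^*AA^*x\leq\|x\|^2$, and likewise for $B^*$, giving $\|D(A,B)\| = \|A^*B\|\leq 1$. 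Parts (3) and (4) I would dispatch by direct computation on the four intersections: on $\mathcal{A}\cap\mathcal{B}$ both $\mathrm{ref}_{\mathcal{A}}$ and $\mathrm{ref}_{\mathcal{B}}$ act as $I$, on $\mathcal{A}^\perp\cap\mathcal{B}^\perp$ both act as $-I$, and on $\mathcal{A}\cap\mathcal{B}^\perp$ (resp.\ $\mathcal{A}^\perp\cap\mathcal{B}$) exactly one acts as $-I$, so composing gives $W=I$ in the first two cases and $W=-I$ in the last two.

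The substance of the theorem is part (2). I would take a unit eigenvector $b$ of $D(A,B)$ with $A^*Bb = \cos\theta\cdot b$ and set $\ket{\alpha} = Ab$, $\ket{\beta} = Bb$; these are unit vectors since $A^*A = B^*B = I$, with $\inp{\alpha}{\beta} = b^*A^*Bb = \cos\theta$. Hermiticity of $D(A,B)$ yields $B^*A = A^*B$, whence $B^*\ket{\alpha} = A^*\ket{\beta} = \cos\theta\cdot b$, and from this one computes
\begin{align*}
\mathrm{ref}_{\mathcal{A}}\ket{\alpha} &= \ket{\alpha}, & \mathrm{ref}_{\mathcal{A}}\ket{\beta} &= 2\cos\theta\cdot\ket{\alpha} - \ket{\beta},\\
\mathrm{ref}_{\mathcal{B}}\ket{\alpha} &= 2\cos\theta\cdot\ket{\beta} - \ket{\alpha}, & \mathrm{ref}_{\mathcal{B}}\ket{\beta} &= \ket{\beta}.
\end{align*}
For $\theta\in(0,\pi)$ the span $V_b = \mathrm{span}\{\ket{\alpha},\ket{\beta}\}$ is a $2$-dimensional $W$-invariant subspace of $\mathcal{A}+\mathcal{B}$, and the matrix of $W|_{V_b}$ in the basis $(\ket{\alpha},\ket{\beta})$ has trace $2\cos 2\theta$ and determinant $1$, hence eigenvalues $e^{\pm 2i\theta}$; this gives the forward direction. (The boundary cases $\theta = 0,\pi$ collapse $V_b$ into $\mathcal{A}\cap\mathcal{B}$, consistent with part (3).) For the converse I would diagonalize the Hermitian $D(A,B)$ in an orthonormal eigenbasis $\{b_k\}$, form the corresponding $V_{b_k}$, and verify that together with $\mathcal{A}\cap\mathcal{B}$, $\mathcal{A}\cap\mathcal{B}^\perp$, and $\mathcal{A}^\perp\cap\mathcal{B}$ they form an orthogonal decomposition of $\mathcal{A}+\mathcal{B}$; any $W$-eigenvector there with eigenvalue $e^{\pm 2i\theta}$, $\theta\in(0,\pi)$, must then lie in the span of those $V_{b_k}$ with $\cos\theta_k = \cos\theta$, producing the required eigenvector of $D(A,B)$.

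Part (5) then drops out of this decomposition: the only summand of $\mathcal{A}+\mathcal{B}$ on which $W$ has eigenvalue $1$ is $\mathcal{A}\cap\mathcal{B}$, because $W$ acts as $-I$ on $\mathcal{A}\cap\mathcal{B}^\perp$ and $\mathcal{A}^\perp\cap\mathcal{B}$, while each $V_{b_k}$ with $\theta_k\in(0,\pi)$ contributes only eigenvalues $e^{\pm 2i\theta_k}\neq 1$. The main obstacle I anticipate is justifying the orthogonality and completeness of this decomposition of $\mathcal{A}+\mathcal{B}$: verifying $V_{b_k}\perp V_{b_\ell}$ for $k\neq\ell$ relies on $\inp{b_k}{b_\ell} = 0$ combined with $A^*A = B^*B = I$ and $B^*A = A^*B$, and the exhaustion step requires careful accounting for the degenerate cases — the eigenvectors of $D(A,B)$ with $\cos\theta_k = \pm 1$ collapse their $V_{b_k}$ into $\mathcal{A}\cap\mathcal{B}$, and one also has to identify vectors of $\mathcal{A}$ lying in the kernel of $B^*$ (and symmetrically for $\mathcal{B}$) with the pure pieces $\mathcal{A}\cap\mathcal{B}^\perp$ and $\mathcal{A}^\perp\cap\mathcal{B}$. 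This bookkeeping is the technically delicate part of the proof.
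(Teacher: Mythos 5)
Your approach matches the paper's: reality of eigenvalues from Hermiticity, the bound $|\lambda|\leq 1$ from $AA^*$, $BB^*$ being projectors, the $2$-dimensional invariant subspace spanned by $Ab$ and $Bb$ for part (2), and the direct computation on the four intersection spaces for parts (3)--(4). Where you differ is in explicitness and rigor rather than strategy. For part (2) you compute the matrix of $W$ on $V_b$ and read off eigenvalues from trace $2\cos 2\theta$ and determinant $1$, whereas the paper invokes the geometric fact that two reflections compose to a rotation by $2\theta$; these are interchangeable. More substantively, the paper's treatment of the converse in (2) and of (5) is quite loose: it merely asserts that the vectors $Av,Bv$ over eigenvectors $v$ generate $\mathcal A + \mathcal B$ and that a vector in $\mathcal A+\mathcal B$ fixed by $W$ must lie in $\mathcal A\cap\mathcal B$, without establishing the orthogonal decomposition of $\mathcal A+\mathcal B$ into the $W$-invariant pieces $V_{b_k}$ (plus the boundary cases) that actually justifies these claims. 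You correctly identify this as the delicate point and sketch how to close it: mutual orthogonality of the $V_{b_k}$ follows from $\inp{b_k}{b_\ell}=0$ together with $A^*A=B^*B=I$ and $A^*B=B^*A$, and exhaustion follows since $\{b_k\}$ is a basis of $\mathbb C^m$. One small economy you could add: the eigenvectors with $\cos\theta_k = 0$ already give $Ab_k\in\mathcal A\cap\mathcal B^\perp$ and $Bb_k\in\mathcal A^\perp\cap\mathcal B$ (since $B^*Ab_k = D(A,B)b_k = 0$), so those ``pure'' pieces are automatically captured by the eigenvector decomposition and need not be adjoined separately. Your proposal is thus a more complete version of the paper's own argument.
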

\begin{proof}
For point 1, note that we can interpret $D(A, B)$ as an orthogonal projector
from $\mathcal B$ to $\mathcal A$ in the sense that for every $v\in\mathbb C^m$,
$\pi _{\mathcal A}Bv = AA^*Bv = AD(A, B)v$. Conversely, $D(A,B)^*$ can be viewed
as a projector from $\mathcal A$ to $\mathcal B$ since $\pi_{\mathcal B}Av =
BB^*Av = BD(A,B)^*v$. Since $D(A,B)$ is Hermitian, its eigenvalues are real.
If $v$ is an eigenvector of $D(A, B)$ with eigenvalue $\lambda$, we have
\begin{align*}
\pi _{\mathcal A}Bv &= AD(A,B)v = \lambda Av\\
\pi _{\mathcal B}Av &= BD(A,B)^*v = BD(A,B)v = \lambda Bv
\end{align*}
Combining these two equations, we get $\pi _{\mathcal B}\pi _{\mathcal A}Bv
= \lambda ^2 Bv$. Since projectors cannot increase the length of a vector,
it follows that $|\lambda |\leq 1$.

For point 2, let $v$ be a unit-length eigenvector of $D(A,B)$ with eigenvalue
$\cos \theta$. The angle $\theta$ has a geometric meaning: It is
the angle between $Av$ and $Bv$ since $\inp{Av}{Bv} = v^*A^*Bv = v^*D(A,B)v =
\cos\theta$.
Since $AA^*Bv = AD(A,B)v = \cos \theta Av$ and
$BB^*Av = BD(A,B)^*v = \cos \theta Bv$, the vector space $V$ spanned by $Av$ and
$Bv$ is invariant under $W$. Moreover, the action of $W$ on $V$ is a reflection
through $Av$ followed by a reflection through $Bv$ which corresponds to a
rotation with angle $2\theta$. Therefore, the eigenvectors in this subspace
have eigenvalues $e^{\pm2i\theta}$.

Since the eigenvectors of $D(A,B)$ form a basis of $\mathbb C^n$, the set
$$\{ Av, Bv \mid v\text{ is an eigenvector of }D(A,B)\}$$
is a generating set for $\mathcal A + \mathcal B$. Thus, $D(A,B)$ can
have no other eigenvalues with eigenvectors in the subspace
$\mathcal A + \mathcal B$.

Points 3 and 4 follow easily from the fact that $\text{ref} _{\mathcal A}$
acts as the identity on $\mathcal A$ and as $-I$ on $\mathcal A^\bot$
while $\text{ref}_{\mathcal B}$ acts as $I$ on $\mathcal B$ and as
$-I$ on $\mathcal B^\bot$. To see that point 5 holds, note that
a vector in $\mathcal A+\mathcal B$ can only be mapped to itself under $W$
if it is in $\mathcal A\cap\mathcal B$.
\end{proof}
\begin{rem}
Szegedy formulated his theorem for matrices $A$ and $B$ of arbitrary size,
using the \em singular values \em of $D(A,B)$. In the case that $D(A,B)$ is
square and Hermitian, the singular values coincide with the absolute values
of the eigenvalues.
\end{rem}

The subspace $\mathcal A \cap \mathcal B$ is spanned by $\ket{\mathcal U^x}$:
The projector from $\mathcal H_L\otimes \mathcal H_R$ on $\mathcal A\cap
\mathcal B$ is given by
\begin{align*}
\left( \sum _{u\in\mathcal V} \ket{p_u}\bra{p_u}\right)\cdot
\left( \sum _{v\in\mathcal V} \ket{q_v}\bra{q_v}\right) &=
\sum_{u\in\mathcal V}\sum _{v\in\mathcal V} \inp{p_u}{q_v}\ket{p_u}\bra{q_v}\\
&= \frac 1 d\sum _{u\in\mathcal V}\sum_{v\in\mathcal V_u}\ket{p_u}\bra{q_v}\\
\end{align*}
and thus, if $x = \sum _u\lambda _u\ket{p_u}$ is in
$\mathcal A \cap \mathcal B$, it must hold that
$$x = \frac 1 d\sum_{u'\in\mathcal V}\sum _{v\in\mathcal V_u}\sum _{u\in\mathcal V}
\lambda _{u}\ket{p_{u'}}\inp{q_v}{p_u}$$
and hence, for all $w\in\mathcal V$,
$$\lambda _w = \frac 1{d} \sum _{v\in\mathcal V}\sum_{u\in\mathcal V_{v}}
\lambda _{u}\inp{q_v}{p_{u}}\text .$$
That is, all $\lambda _w$ must be identical. Hence, every vector in $\mathcal A
\cap\mathcal B$ can be written as
$$\sum _{u\in\mathcal V}\lambda \ket{p_u} = \lambda\sum _{u\in\mathcal V}
\ket u_L\sum _{v\in\mathcal V_u}\frac 1{\sqrt d}\ket v_R$$
and thus it must be a scalar multiple of $\ket{\mathcal U^x}$.

This allows us to distinguish $\ket{\mathcal U^x}$ from other states in
$\mathcal A + \mathcal B$ as follows. From Theorem \ref{spec_lem}, we know that
$\ket{\mathcal U^x}$ is an eigenvector of $W(G)$ with eigenvalue 1. Let
$\delta$ be the spectral gap of $\frac 1 dG$. Then, every other eigenvector
of $W(G)$ in $\mathcal A + \mathcal B$ must have eigenvalue
$e^{\pm i2\theta}$ for some $\theta$ with
$$\delta \leq 1-|\cos (\theta )| \leq \theta ^2/2\Leftrightarrow
|\theta | \geq \sqrt{2\delta}\text .$$
To distinguish $\ket{\mathcal U^x}$ from other states in $\mathcal A +
\mathcal B$, we use a quantum algorithm called \em phase estimation\em :
Fix some unitary $U$. For any quantum state $\ket{\phi}$ that is an eigenvector
of $U$ with eigenvalue $e^{2\pi i\alpha}$, where $0\leq \alpha < 1$, we can
obtain an estimate of $\alpha$ with good probability. Let $\alpha '$ be
$\alpha$ rounded to $n$ binary digits. Phase estimation maps
$\ket{0^n}\ket{\phi}\mapsto \ket{2^n\alpha'}\ket{\phi}$ with high
probability.
This requires applying Hadamards on the first $n$ qubits, the inverse of the
\em quantum Fourier transform \em and $n$ times the transform $U$. The
Fourier transform on $n$ qubits is the unitary mapping
$$\ket{j}\mapsto \frac{1}{\sqrt{2^n}}\sum _{k=0}^{2^n-1}e^{2\pi i\cdot jk/2^n}\ket{k}\text .$$

To give some intuition about phase estimation, we describe it for the case
where $\alpha = \alpha '$ has \em exactly \em $n$ binary digits in which it gives
the correct result with certainty. We start with state
$\ket{0^n}\ket{\phi}$. Applying the Hadamard gates, we obtain
$\sum _{j=0}^{2^n-1}\ket{j}\ket{\phi}$. Then, we apply the transform that maps
$\ket{j}\ket{\phi}\mapsto \ket j U^j\ket{\phi} = e^{2\pi ij\alpha }\ket j\ket{\phi}$.
Applying this on our quantum state gives us $\sum _{j=0}^{2^n-1}e^{2\pi ij\alpha}\ket{j}\ket{\phi}$.
But this is also the state that results from applying the Fourier transform
on the first $n$ qubits of $\ket{2^n\alpha}\ket{\phi}$. Therefore, computing
the inverse of the Fourier transformation will give us the state we want.
The cost of this algorithm is the cost of generating the uniform superposition
($n$ gates), computing $n$ times the unitary $U$ and then computing the inverse
Fourier transformation ($O(n^2)$ gates for computing it exactly, $O(n\log n)$
for a good approximation). See \cite[Chapter 5]{mikeike} for a more complete
description of the phase estimation algorithm.

Using phase estimation with precision $O(1/\sqrt{\delta})$, we can
distinguish with good probability the case where a given state $\ket \phi$ in
$\mathcal A + \mathcal B$ is $\ket{\mathcal U^x}$
from the case where $\ket\phi$ is an eigenvector of $W$ with eigenvalue
$\neq 1$. We can then implement the reflection by performing the phase
estimation, putting a minus in the phase if the eigenvalue is not 1 and reversing
the phase estimation again. Assuming that the cost of the computation of
$W(G)$ dominates the other costs in the phase estimation, it costs
$O(1/\sqrt{\delta})U$ to reflect through $\ket{\mathcal U^x}$ and
$C$ to reflect through $\ket{\mathcal B}$. As in Grover's algorithm, we will
have to perform these two reflections $O(\sqrt{1/\epsilon})$ times each to find
a vertex $u\in M_x$ or determine that $M_x = \emptyset$. Thus,
we have the following theorems:

\begin{thm}
\label{walk2}
Let $\mathcal G$ be a graph and let $\delta$ be the spectral gap. Let $\epsilon > 0$ be such that
for all $M_x\neq \emptyset$, $\epsilon\leq |M_x|/|\mathcal V|$.
There is a bounded-error quantum algorithm that on input $x$, finds an element
of $M_x$ or determines that $M_x$ is empty with cost
$$O\left( S+\frac 1{\sqrt\epsilon}\left( C + \frac 1 {\sqrt\delta} U\right)\right)\text .$$
\end{thm}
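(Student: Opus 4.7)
The plan is to run a Grover-style amplitude amplification on the ``edge'' Hilbert space, using the three ingredients already set up in the preceding discussion. First I would prepare $\ket{\mathcal U^x}$ at cost $S$. Writing $\mathcal V_{\text{good}}=M_x$ and $\mathcal V_{\text{bad}}=\mathcal V\setminus M_x$, I decompose
$$\ket{\mathcal U^x}=\sin\theta\,\ket{\mathcal G^x}+\cos\theta\,\ket{\mathcal B^x},$$
where $\ket{\mathcal G^x}$ (resp.\ $\ket{\mathcal B^x}$) is the normalized superposition over edges $\ket u_L\ket v_R\ket{D^x(u)}$ with $u\in M_x$ (resp.\ $u\notin M_x$), and $\sin^2\theta=|M_x|/|\mathcal V|\geq\epsilon$ whenever $M_x\neq\emptyset$. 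The Checking operation is precisely the reflection $R_{\mathcal B}$ through $\ket{\mathcal B^x}$, at cost $C$. If I can also implement the reflection $R_{\mathcal U}$ through $\ket{\mathcal U^x}$, then, exactly as in Grover's algorithm, $k$ iterations of $R_{\mathcal U} R_{\mathcal B}$ rotate the state by angle $2k\theta$ inside the two-dimensional real span of $\ket{\mathcal G^x}$ and $\ket{\mathcal B^x}$, and choosing $k=O(1/\sqrt\epsilon)$ makes the overlap with $\ket{\mathcal G^x}$ bounded below by a constant. Measuring the $L$-register then yields some $u\in M_x$ with constant probability, which can be verified classically (or by one more application of Checking).

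The nontrivial step is implementing $R_{\mathcal U}$ using only the Update operation. Here I invoke the spectral lemma (Theorem~\ref{spec_lem}): $\ket{\mathcal U^x}$ spans $\mathcal A\cap\mathcal B$ and is therefore the unique $+1$ eigenvector of $W(G)=\mathrm{ref}_{\mathcal B}\cdot\mathrm{ref}_{\mathcal A}$ inside $\mathcal A+\mathcal B$, while every other eigenvector in $\mathcal A+\mathcal B$ has eigenvalue $e^{\pm 2i\theta'}$ with $\cos\theta'$ an eigenvalue of $\tfrac1d G$ other than $1$, so $|\theta'|\geq\sqrt{2\delta}$ by the spectral gap assumption. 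Phase estimation on $W(G)$ with precision $\Theta(\sqrt\delta)$ then distinguishes eigenvalue $1$ from all other eigenvalues with high probability; conditioning the sign on whether the estimated phase is zero and uncomputing gives an approximate $R_{\mathcal U}$. One application of $W(G)$ costs $U$ (since $\mathrm{ref}_{\mathcal A}$ and $\mathrm{ref}_{\mathcal B}$ are built from $P_A,P_B$ and their inverses, with a sign flip controlled by the $\ket 0_R$/$\ket 0_L$ register), so phase estimation to precision $\Theta(\sqrt\delta)$ costs $O(U/\sqrt\delta)$.

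The main technical obstacle is controlling the error from the approximate $R_{\mathcal U}$. I would show that the accumulated deviation from the exact Grover trajectory is $O(k\eta)$, where $\eta$ is the per-step error of phase estimation. Taking $\eta$ to be a small constant divided by $k=O(1/\sqrt\epsilon)$ keeps the total error bounded, and since the cost of phase estimation grows only logarithmically with $1/\eta$, the asymptotic cost per iteration remains $O(U/\sqrt\delta)$. Components lying outside $\mathcal A+\mathcal B$ do not arise because the initial state $\ket{\mathcal U^x}$ lies in $\mathcal A\cap\mathcal B$ and both $R_{\mathcal B}$ (which acts nontrivially only on the data-free part inherited from Checking) and $W(G)$ preserve $\mathcal A+\mathcal B$; I would verify this bookkeeping carefully.

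Finally I would handle the case where $|M_x|/|\mathcal V|$ may equal the lower bound $\epsilon$ or be much larger (so the optimal $k$ is unknown), and the case $M_x=\emptyset$, by the standard BBHT doubling trick: run the above procedure with $k$ taken from a geometric sequence $1,2,4,\dots,\lceil 1/\sqrt\epsilon\rceil$, each time randomizing the actual number of iterations as in~\cite{bbht}, and verify every candidate output classically. If no candidate is accepted across all $O(\log(1/\epsilon))$ phases, report $M_x=\emptyset$. Summing geometric series, the total cost is
$$O\!\left(S+\frac{1}{\sqrt\epsilon}\Bigl(C+\frac{1}{\sqrt\delta}\,U\Bigr)\right),$$
as claimed, with two-sided error bounded below $1/3$.
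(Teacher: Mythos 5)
Your proposal takes essentially the same route as the paper: Grover-style amplitude amplification in which the Checking operation supplies one reflection and phase estimation on the walk operator $W(G)$ (justified by the Spectral Lemma, Theorem~\ref{spec_lem}) supplies the approximate reflection through $\ket{\mathcal U^x}$, at cost $O(U/\sqrt\delta)$ per reflection and $O(1/\sqrt\epsilon)$ iterations. The paper establishes this theorem only through the informal discussion preceding its statement, and your explicit error-accumulation argument and BBHT doubling trick are reasonable refinements of details the paper leaves implicit.

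One inaccuracy worth flagging: you assert that the Checking reflection preserves $\mathcal A+\mathcal B$. It does preserve $\mathcal A$, since it sends each $\ket{p_u}$ to $\pm\ket{p_u}$. But acting on $\ket{q_v}=\frac1{\sqrt d}\sum_{u\in\mathcal V_v}\ket u_L\ket v_R\ket{D^x(u)}$ it introduces sign changes on individual terms (those $u\in M_x$), so the image is generally outside $\mathcal B$ and outside $\mathcal A+\mathcal B$. The paper's discussion glosses over the same point; a fully rigorous argument must account for this leakage, e.g.\ by a careful analysis of phase estimation on the component outside $\mathcal A+\mathcal B$ (where $W$ also has eigenvalue $1$ on $\mathcal A^\perp\cap\mathcal B^\perp$). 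You do note that you ``would verify this bookkeeping carefully,'' which is the right instinct, but the assertion as written would not survive that verification.
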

\begin{thm}[Jeffery, Kothari, Magniez, \cite{quant_walk}]
\label{walk1}
For $\mathcal G$, $\delta$ and $\epsilon$ as before, there is a bounded-error
quantum algorithm that implements the transform
$$\ket{\mathcal U ^x}\mapsto\begin{cases}-\ket{\mathcal U ^x}&\text{ if }M_x\neq \emptyset\\
\ket{\mathcal U^x}&\text{ otherwise}\end{cases}$$
with cost
$$O\left( \frac 1{\sqrt\epsilon}\left( C + \frac 1 {\sqrt\delta} U\right)\right)\text .$$
\end{thm}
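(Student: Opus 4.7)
The plan is to obtain the desired phase flip by running quantum amplitude estimation on the standard Grover iterate and then uncomputing. Decompose $\ket{\mathcal{U}^x} = \sqrt{p_x}\ket{G^x} + \sqrt{1-p_x}\ket{B^x}$, where $p_x = |M_x|/|\mathcal V|$ and $\ket{G^x}, \ket{B^x}$ are the normalized projections of $\ket{\mathcal{U}^x}$ onto the spans of basis states with $u\in M_x$ and $u\notin M_x$ respectively. Thus $p_x = 0$ when $M_x = \emptyset$ and otherwise $p_x \geq \epsilon$, so the task reduces to detecting coherently whether $p_x > 0$ and kicking a $-1$ phase back based on the answer.

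First I would assemble two reflections. The Checking operation described just before the theorem is already the reflection $R_M$ negating states with $u\in M_x$, at cost $C$. For the reflection $R_{\mathcal{U}}$ through $\ket{\mathcal{U}^x}$ I would use Szegedy's Spectral Lemma (Theorem \ref{spec_lem}): $\ket{\mathcal{U}^x}$ spans $\mathcal A \cap \mathcal B$, so it is the only eigenvector of $W(G)$ in $\mathcal A + \mathcal B$ with eigenvalue $1$, while every other such eigenvector has eigenvalue $e^{\pm 2i\theta}$ with $|\theta| \geq \sqrt{2\delta}$. Phase estimation on $W(G)$ with precision $O(\sqrt\delta)$, a conditional sign flip when the outcome is nonzero, and uncomputation therefore realize $R_{\mathcal{U}}$ at cost $O(U/\sqrt\delta)$, exactly as in the paragraph preceding the theorem.

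Now form the Grover iterate $Q = R_{\mathcal{U}} R_M$. On the invariant two-dimensional subspace $\text{span}(\ket{G^x},\ket{B^x})$, which contains $\ket{\mathcal{U}^x}$, the operator $Q$ acts as rotation by $2\theta_x$ with $\sin\theta_x = \sqrt{p_x}$, so $\ket{\mathcal{U}^x}$ is an equal superposition of two eigenvectors of $Q$ with eigenvalues $e^{\pm 2i\theta_x}$; when $M_x = \emptyset$ we have $R_M = I$ and $Q = R_{\mathcal{U}}$, and $\ket{\mathcal{U}^x}$ is an exact $+1$-eigenvector. I would then run phase estimation on $Q$ with precision $\Theta(\sqrt\epsilon)$, applied to $\ket{\mathcal{U}^x}$ together with a fresh ancilla register, apply a $-1$ phase conditional on the estimated angle being nonzero, and invert the phase estimation. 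When $M_x = \emptyset$ the estimate is $0$, no phase is applied, and the ancilla returns to $\ket 0$. When $M_x \neq \emptyset$ we have $|\theta_x| \geq \arcsin(\sqrt\epsilon) \geq \sqrt\epsilon$, so on both eigenvector branches the estimate is bounded away from $0$ with high probability, the $-1$ phase fires on each, and uncomputation returns $-\ket{\mathcal{U}^x}$. The $O(1/\sqrt\epsilon)$ applications of $Q$ then give a total cost of $O((1/\sqrt\epsilon)(C + U/\sqrt\delta))$.

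The main technical obstacle is error control: both the inner phase estimation that implements $R_{\mathcal{U}}$ and the outer one that drives the conditional flip are only approximately correct, and a misidentified eigenvalue leaves residual ancilla entangled with the data register, so the output is only close to $\pm\ket{\mathcal{U}^x}$ in trace distance rather than equal to it. I would handle this by setting the precisions to sufficiently small constant multiples of $\sqrt\delta$ and $\sqrt\epsilon$ and, if needed, boosting via median amplification so that the failure probability on every branch is an arbitrarily small constant; this is enough for the bounded-error implementation the theorem asks for.
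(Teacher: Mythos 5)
Your proposal is correct and it fills in a step that the paper only gestures at. The paper's discussion preceding the theorem constructs exactly the two reflections you use --- $R_{\mathcal U}$ via phase estimation on $W(G)$ with precision $O(1/\sqrt\delta)$ (justified by the Spectral Lemma, Theorem~\ref{spec_lem}), and $R_M$ via the \textbf{Checking} operation --- and then says ``as in Grover's algorithm, we will have to perform these two reflections $O(\sqrt{1/\epsilon})$ times each to find a vertex $u\in M_x$ or determine that $M_x = \emptyset$.'' That remark directly supports Theorem~\ref{walk2} (the amplitude-amplification/search version), but the paper never explains how to convert it into the \emph{coherent} controlled phase flip on $\ket{\mathcal U^x}$ that Theorem~\ref{walk1} asserts. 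Your plan supplies the missing idea: run phase estimation on the Grover iterate $Q = R_{\mathcal U}R_M$ to precision $\Theta(\sqrt\epsilon)$, flip the sign conditioned on a nonzero estimate, and uncompute. Your eigenvalue analysis is right: $\ket{\mathcal U^x}$ lies in the invariant $2$-dimensional subspace as an equal-weight superposition of eigenvectors with eigenvalues $e^{\pm 2i\theta_x}$, $\sin\theta_x = \sqrt{p_x}$, $p_x = |M_x|/|\mathcal V|$; when $M_x=\emptyset$ it is an exact $+1$-eigenvector, and otherwise $\theta_x\geq\arcsin\sqrt\epsilon\geq\sqrt\epsilon$, so the two cases are distinguishable at the chosen precision, and both branches acquire the same $-1$ before uncomputation. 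The cost tally also matches. Your closing paragraph on error control (the reflection $R_{\mathcal U}$ and the outer estimation are each only approximate, so the implementation is bounded-error rather than exact, with amplification if constants need tightening) is the right caveat and is consistent with the ``bounded-error'' qualifier in the theorem statement.
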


As a first example, let us see an algorithm for the \em element distinctness \em
problem given by Ambainis in \cite{amb}. In this problem, we are given as
input integers $x_1,\dots , x_n$ and
we have to determine if there are distinct indices $i$ and $j$ such that
$x_i = x_j$ and if yes, we have to output such indices. Assuming that we can
completely read any integer in the input with $O(1)$ cell-probes, there exists
a quantum walk algorithm that solves the element distinctness problem
with $O(n^{2/3})$ cell-probes. Classically, we have to make $\Omega (n)$
cell-probes to solve the problem. The complexity for quantum cell-probe
algorithms is lower-bounded by $\Omega (n^{2/3})$, which is proved in
\cite{aaronson_shi}, so Ambainis's algorithm
is asymptotically optimal.
The quantum walk takes place on a
\em Johnson graph \em which we define below.
\begin{defin}[Johnson graph]
For positive integers $n$ and $r$, the vertices of the Johnson graph $J(n,r)$
are the subsets of $[n]$ with exactly $r$ elements. Two vertices are neighbours
if and only if their symmetric difference has exactly two elements.
That is, we get from a vertex $u$ to a neighbour $v$ by removing
one element from $u$ and adding a different one.
\end{defin}
The spectral gap of $J(n,r)$ is
$\delta = n/(r(n-r)) = \Omega (1/r)$. We will treat $r$ as a parameter
for now and determine a suitable value for it later.
For input $x = x_1,\dots , x_n$, we let $M_x = \{ u\in J(n,r)\mid \exists
i,j\in u: x_i = x_j\}$, the set of vertices of $J(n,r)$ that contain colliding
indices. When we have found an element of $M_x$, we can find distinct $i$ and
$j$ such that $x_i = x_j$ by making $O(r)$ cell-probes to our input. We find
such a set $M_x$ via quantum walk. With input
$x$ and vertex $u$ of $J(n,r)$, we associate a
representation $\ket{D^x(u)}$ of the set $\{ (i, x_i)\mid i\in u\}$.
In the \textbf{Setup}-phase, we want to construct the state
$$\sqrt{\frac{1}{r(n-r)}\cdot \binom{n}{r}^{-1}}
\sum _{u\in J(n,r)}\sum _{v\in J(n,r)_u}\ket{u}_L\ket{v}_R\ket{D^x(u)}$$
and it costs $r$ quantum cell-probes to create $\ket{D^x(u)}$. The
\textbf{Update}-phase requires $O(1)$ quantum cell-probes since neighbouring
vertices $u$ and $v$ only differ in two elements. The \textbf{Checking}-step
requires no cell-probes since all the information we need to decide if
$u\in M_x$ is contained in $\ket{D^x(u)}$. Let us now determine $\epsilon$.
Suppose that $M_x\neq \emptyset$ and let $i$ and $j$ be distinct indices
such that $x_i = x_j$. If we select $u\in J(n,r)$ at random, there is a
probability of
$$\epsilon = \frac{r}{n}\cdot \frac{r-1}{n-1}$$
that $u$ contains those two indices. Thus, our quantum walk algorithm makes
$$O\left( r + \sqrt{\frac{n(n-1)}{r(r-1)}}\sqrt\frac{r(n-r)}{n}\right)
= O\left( r + \frac n{\sqrt r}\right)$$
cell-probes. If we set $r = n^{2/3}$, the complexity becomes
$O(n^{2/3} + n^{1-1/3}) = O(n^{2/3})$. When we have found an element of
$M_x$, we need to make $O(r)$ cell-probes to actually find colliding indices.
Thus, we can solve the element distinctness problem with $O(n^{2/3})$
cell-probes.

We now describe a framework for \em nested \em quantum walks given in
\cite{quant_walk}.
Such quantum walks consist of an outer walk on a graph $\mathcal G$
where the checking step is implemented by a quantum walk on another
$d'$-regular graph $\mathcal G' = \left( \mathcal V',\mathcal E'\right)$.
With each $x$ and $u$, we associate
a set $M_x^u\subseteq \mathcal V'$ such that $M_x^u\neq\emptyset$
if and only if $u\in M_x$.
Let
$$\ket{\mathcal U ^x_u} = \sum _{u'\in\mathcal V'}\sum _{v'\in\mathcal V'_{u'}}\frac 1{\sqrt{d'|\mathcal V'|}}\ket{u'}_{L'}\ket{v'}_{R'}\ket{D^x_u(u')}$$
where $D^x_u$ is a data structure for the walk on $\mathcal G'$.
This means that $\ket{\mathcal U ^x_u}$ is the initial state of the quantum walk
on $\mathcal G'$.
The \textbf{Setup} consists of
preparing a quantum state
$$\ket{\mathcal U ^x} = \sum_{u\in\mathcal V}\sum_{v\in\mathcal V_u}
\frac 1{\sqrt{d|\mathcal V|}}\ket{u}_L\ket{v}_R\ket{\mathcal U ^x_u}\text .$$
That
is, the data structure for the outer walk associated with $u$ is
the initial state of the walk on $\mathcal G'$. The
\textbf{Update} operation is as before.
The \textbf{Checking} operation maps
$$\ket u_L \ket v_R\ket{\pi ^x_u}\mapsto
\begin{cases}
-\ket u_L\ket v_R\ket{\mathcal U ^x_u}&\text{ if }u\in M_x\\
\ket u_L\ket v_R\ket{\mathcal U ^x_u}&\text{ otherwise}
\end{cases}$$
and since $u\in M_x$ if and only if $M_x^u\neq\emptyset$, we can
implement this operation with bounded error by applying Theorem \ref{walk1} to
the inner walk.
\begin{thm}[Nested Quantum Walks, \cite{quant_walk}]
Let $\delta '$ be the spectral gap of $\mathcal G'$ and $\epsilon ' >0$ be a
lower bound
for $|M_x^u|/|\mathcal V'|$ with $M_x^u\neq\emptyset$.
Suppose that an update of the inner walk has
update cost at most $U'$ and the checking cost for $u'\in M_x^u$ is at
most $C'$. Then, for $\tilde{O}(f(n)) = O(f(n)\cdot \text{polylog}(f(n)))$, we
have a bounded-error algorithm that finds an element in $M_x$ or determines
that $M_x = \emptyset$ with cost
$$\tilde{O}\left( S + \frac{1}{\sqrt\epsilon}
\left( \frac 1{\sqrt\delta} U + \frac{1}{\sqrt{\epsilon '}}\left( C' + \frac{1}{\sqrt{\delta '}} U'\right)\right)\right)$$
\end{thm}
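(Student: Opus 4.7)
The plan is to combine Theorem \ref{walk2} applied to the outer graph $\mathcal G$ with Theorem \ref{walk1} applied to the inner graph $\mathcal G'$. By construction, the initial state $\ket{\mathcal U^x}$ of the outer walk carries in its data register the initial state $\ket{\mathcal U^x_u}$ of the inner walk associated with the current outer vertex $u$. Thus, implementing the outer \textbf{Checking} step -- which must flip the sign of the basis states indexed by $u\in M_x$ -- is exactly the task of implementing the phase flip $\ket{\mathcal U^x_u}\mapsto -\ket{\mathcal U^x_u}$ when $M^u_x\neq\emptyset$ and the identity otherwise, conditioned on the outer address registers. Since $u\in M_x$ iff $M_x^u\neq\emptyset$, Theorem \ref{walk1} applied to the inner walk implements precisely this phase flip with cost
$$O\!\left(\frac{1}{\sqrt{\epsilon'}}\!\left( C' + \frac{1}{\sqrt{\delta'}}U'\right)\right).$$
This is the value we plug in as the outer checking cost $C$.

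Next, I would substitute this $C$ into the complexity bound of Theorem \ref{walk2} for the outer walk, obtaining
$$O\!\left( S + \frac{1}{\sqrt{\epsilon}}\!\left(\frac{1}{\sqrt{\delta'}}\cdot\text{something}+\frac{1}{\sqrt{\delta}}U + \frac{1}{\sqrt{\epsilon'}}\!\left(C'+\frac{1}{\sqrt{\delta'}}U'\right)\right)\right),$$
which, after rearrangement, matches the statement of the theorem up to polylogarithmic factors. The outer \textbf{Setup} $S$ must now build $\ket{\mathcal U^x}$ with the inner initial states $\ket{\mathcal U^x_u}$ in the data register; by construction this is still a single preparation and is absorbed into $S$. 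The outer \textbf{Update} $U$ is unchanged, because it only permutes address registers and leaves the (inner) data register alone.

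The main obstacle -- and the reason the final cost carries an $\tilde{O}$ rather than $O$ -- is that the inner phase flip from Theorem \ref{walk1} is only bounded-error, whereas Theorem \ref{walk2} assumes an (essentially) exact checking operation. I would address this in two steps. First, the inner algorithm must be amplified so that its error probability is small enough that the total error over all invocations along the outer walk is still bounded; since the outer walk uses $O(1/\sqrt{\epsilon})$ reflections, amplifying to error $O(\sqrt{\epsilon})$ is sufficient, which costs a $\log(1/\epsilon)$ factor. Second, the inner reflection produces a state close to, but not exactly equal to, the intended reflected state, so one must verify (via a standard hybrid argument on the sequence of outer reflections) that the accumulated error in trace distance remains bounded by a constant. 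Together these give the claimed $\text{polylog}$ overhead.

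Finally, I would observe that the correctness argument for the inner subroutine is slightly more delicate than a black-box application of Theorem \ref{walk1}, because the inner initial state $\ket{\mathcal U^x_u}$ is not held as a standalone state but is entangled with the outer address registers $\ket{u}_L\ket{v}_R$. However, because Theorem \ref{walk1} is implemented by unitary transforms that act solely on the inner registers (phase estimation applied to the inner walk operator $W(G')$ controlled on $u$), it commutes with the outer address registers and acts correctly on each branch of the superposition. This justifies directly substituting the inner cost into the outer bound and yields the stated complexity.
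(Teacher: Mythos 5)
Your proposal is correct and takes essentially the same approach as the paper, which gives only a two-sentence proof sketch: substitute the cost of Theorem \ref{walk1} (applied to the inner walk) as the outer \textbf{Checking} cost $C$ in Theorem \ref{walk2}, with the polylogarithmic overhead attributed to amplifying the success probability of the checking subroutine. Your write-up is in fact more detailed than the paper's sketch — the remarks on error accumulation across the $O(1/\sqrt{\epsilon})$ outer reflections and on the inner reflection acting coherently conditioned on the outer address registers are correct and useful, though the stray ``$\frac{1}{\sqrt{\delta'}}\cdot\text{something}$'' placeholder in your intermediate display should simply be deleted, since the clean substitution of $C$ already yields the stated bound.
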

\begin{proof}[Proof (sketch)]
As discussed above, this result follows by implementing the \textbf{Checking}
operation via Theorem \ref{walk1}. The polylogarithmic factor hidden in the
$\tilde O$-notation comes from the fact that we need to amplify the success
probability for the checking step of the inner and outer walks.
\end{proof}

As an application of this framework, we describe a quantum walk algorithm for
\em triangle finding \em in graphs from \cite{quant_walk}.
A \em triangle \em in a graph is a set of three vertices where every two vertices in
the set are neighbours. We are given oracle access to
(the adjacency matrix of) a graph $G$.
If the input graph $G$ has $n$ vertices (which translates to roughly
$n^2$ input bits), the quantum bit-probe complexity
of the algorithm is $\tilde O(n^{9/7})$. First, we introduce some notation.
If $G = (V,E)$ is a graph and
$R\subseteq V$, we let $G_R$ be the restriction of $G$ to $R$, i.e.,
$(R,E|_{R\times R})$. If $L$ is a possible set of edges on vertices $V$, we let
$G(L) = (V,E\cap L)$.
\begin{thm}
\label{walk3}
There is a quantum bit-probe algorithm with bounded error
that decides whether a graph $G$ contains a triangle using
$\tilde O(n^{9/7})$ bit-probes.
\end{thm}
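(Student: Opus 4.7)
The plan is to apply the nested quantum walk framework with two levels, both instantiated on Johnson graphs, exploiting the fully quantum data structures developed in the previous subsection. The outer walk would run on $J(n,r_1)$, whose vertices are subsets $A\subseteq V$ with $|A| = r_1$; at outer vertex $A$ I keep the induced subgraph $G_A$, learned by $O(r_1^2)$ bit-probes. I would call $A$ marked iff some triangle of $G$ has at least two of its three vertices in $A$. The Johnson graph has spectral gap $\delta = \Omega(1/r_1)$, and a uniformly random $r_1$-subset contains two fixed vertices with probability $\Omega(r_1^2/n^2)$, so, provided a triangle exists, $\epsilon = \Omega(r_1^2/n^2)$.

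The checking step for outer vertex $A$ would itself be implemented by an inner walk on $J(n,r_2)$. At inner vertex $B$ with $|B| = r_2$ the data is the bipartite adjacency information between $A$ and $B$, and $B$ is marked iff some $v\in B$ has two $A$-neighbours joined by an edge of $G_A$. This predicate is visible in the stored data, so $C' = O(1)$; the Johnson parameters are again $\delta' = \Omega(1/r_2)$, and conditional on $A$ being marked the third triangle vertex falls in a random $r_2$-subset with probability $\epsilon' = \Omega(r_2/n)$.

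The remaining task is to bound the setup and update costs. I would store all of this information in a fully quantum data structure of the kind developed in the previous subsection, so that the outer update $U$ and the inner update $U'$ can be performed in superposition without re-querying from scratch: only the edges newly touched by the swapped vertex need to be probed, and the uniform superposition $\ket{\mathcal U^x_u}$ that the outer walk carries as its data gets refreshed coherently. Once $S$, $U$, $C'$, $U'$ are in hand I would plug them into the nested walk cost expression
\[
\tilde{O}\!\left( S + \frac{1}{\sqrt{\epsilon}}\left( \frac{U}{\sqrt{\delta}} + \frac{1}{\sqrt{\epsilon'}}\left( C' + \frac{U'}{\sqrt{\delta'}}\right)\right)\right)
\]
and optimize jointly in $r_1$ and $r_2$; the balance is reached around $r_1 = \Theta(n^{4/7})$ and $r_2 = \Theta(n^{2/7})$, which is where the advertised $\tilde{O}(n^{9/7})$ bound comes out.

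The main obstacle is the careful design of these quantum data structures: a naive classical encoding of the bipartite adjacency between $A$ and $B$ forces the inner update to probe $\Omega(r_1)$ new edges every time a vertex of $B$ is swapped, which, after multiplication through the nested-walk formula, only reproduces the earlier $\tilde{O}(n^{13/10})$ bound of Magniez, Santha and Szegedy (or worse, depending on how one accounts for the outer update). The improvement therefore hinges on showing that the quantum data can be refreshed at an amortized cost strictly below $r_1$, and that the refresh indeed leaves the inner uniform superposition invariantly correct so that the spectral analysis behind the nested walk theorem still applies; this is the step I expect to occupy most of the work.
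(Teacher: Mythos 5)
Your design is structurally different from the paper's, and as written it does not achieve the claimed bound. The decisive issue is the outer marking criterion: you mark $A$ when a triangle has \emph{two} of its vertices in $A$, so $\epsilon = \Omega(r_1^2/n^2)$ and $1/\sqrt{\epsilon} = \Theta(n/r_1)$. The paper marks $R_1$ when it contains just \emph{one} triangle vertex, giving $\epsilon = r_1/n$ and the much smaller prefactor $\sqrt{n/r_1}$. Your gain from $C' = O(1)$ does not compensate: with the classical bipartite encoding you describe, $U' = \Theta(r_1)$, and the inner-walk contribution to the cost is $\frac{1}{\sqrt{\epsilon}}\cdot\frac{1}{\sqrt{\epsilon'}}\cdot\frac{U'}{\sqrt{\delta'}} = \frac{n}{r_1}\cdot\sqrt{\tfrac{n}{r_2}}\cdot r_1\sqrt{r_2} = n^{3/2}$; both $r_1$ and $\sqrt{r_2}$ cancel, so no choice of parameters saves you — you get $\tilde{O}(n^{3/2})$, which is worse than both the Magniez--Santha--Szegedy bound and the target $n^{9/7}$. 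In the paper's choice of parameters $r_2 > r_1$ (specifically $r_2 = n^{5/7} > n^{4/7} = r_1$), which is the opposite of yours.

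You have also mislocated where the improvement over $n^{13/10}$ comes from. The paper does \emph{not} invoke any amortized quantum data structure to push $U'$ below $r_1$; it keeps $U' = O(r_1)$. Instead, because the outer marking condition is weaker (only one triangle vertex in $R_1$, and the data is \emph{only} the bipartite adjacency between $R_1$ and $R_2$, not $G_{R_1}$), the inner check $C'$ is nontrivial: it must search for the third triangle vertex over all $n$ vertices of $G$, implemented via Grover search wrapping a \emph{third}, innermost walk on $J(r_1,k)\times J(r_2,k)$, yielding $C' = \tilde{O}(\sqrt{n}(r_1 r_2)^{1/3})$. It is this three-level nesting, with a cheap setup $S = r_1 r_2$ and inexpensive outer $\epsilon$, that produces $\tilde{O}(n^{9/7})$ at $r_1 = n^{4/7}$, $r_2 = n^{5/7}$. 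The coherent-refresh/amortization idea you flag as the crux is not part of this proof, and if it actually yielded $U' = o(r_1)$ in your scheme you would beat the stated bound — which is your own signal that you are not proving the theorem the paper proves but rather positing a stronger, unestablished ingredient.
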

\begin{proof}
The outer walk is on the Johnson graph $\mathcal G = J(n,r_1)$ and the inner
walk on $\mathcal G' = J(n,r_2)$ for $r_1$ and $r_2$ such that
$r_1\leq r_2\leq r_1^2$.  We will fix values for $r_1$ and $r_2$ later.
Let $G$ be the input graph. We identify the vertices of $J(n,r_1)$ and
$J(n,r_2)$ with $r_1$- or $r_2$-size sets of vertices of $G$.
We define $M_G$ such that for every vertex $R_1$ of $\mathcal G$, we have
$R_1\in M_G$ if and only if $R_1$ contains a vertex of $G$ that is part of a
triangle. In the inner walk, we let $R_2\in M_G^{R_1}$ if and only
if $R_1$ contains a $G$-vertex $v_1$ and $R_2$ contains a $G$-vertex
$v_2\neq v_1$ such that $v_1$ and $v_2$ are part of the same triangle in $G$.

The data structure of the inner walk is given by $D_G^{R_1}(R_2)$
being the subgraph of $G$ that contains exactly those edges that
have one endpoint in $R_1$ and one in $R_2$. Let $R_2$
and $R_2'$ be neighbours. Since there is exactly one vertex $r$ in
$R_2'$ that is not in $R_2$, updating the data structure for operations
$P_A$ and $P_B$ requires querying for each $r'\in R_1$
whether $r$ and $r'$ are neighbours. Thus, the \textbf{Update}
of the inner walk requires $U' = O(r_1)$ bit-probes.

We can compute the \textbf{Checking} step of the inner walk with sufficiently
low error probability using $C' = \tilde O(\sqrt{n}(r_1r_2)^{1/3})$ bit-probes.
This is done as follows: For any vertex $v$ of $G$, we can use a subroutine from
\cite[Appendix A]{quant_walk}, which we will describe later on, to look for
vertices $v_1\in R_1$ and $v_2\in R_2$ such that $v$, $v_1$ and $v_2$ form a
triangle. This subroutine requires $O((r_1r_2)^{1/3})$ bit-probes. We then use a
variant of Grover's algorithm to find out whether there exists a vertex $v$ of
$G$ that forms a triangle with one vertex from $R_1$ and one vertex from $R_2$.
This is done by replacing the oracle query of Grover's algorithm with the
subroutine we just mentioned. We need to make $O(\sqrt{n}(r_1r_2)^{1/3})$
bit-probes to do this.

Let us now describe the subroutine; it is similar to the algorithm for
element distinctness. Given two graphs $G_1 = (V_1, E_1)$ and $G_2 = (V_2,E_2)$,
let $G_1\times G_2$ be the graph with vertices $V_1\times V_2$
where two vertices $(u_1, u_2)$ and $(u'_1, u'_2)$ are neighbours if and only if
$u_1$ and $u'_1$ are neighbours in $G_1$ and $u_2$ and $u'_2$ are neighbours in
$G_2$. The subroutine is again a quantum walk, this time on the graph
$\mathcal G'' = J(r_1, k)\times J(r_2, k)$ for $k$ to be determined later. We
assign every vertex in $R_1$ a number in $[r_1]$ and every vertex in $R_2$ a
number in $[r_2]$. We then can view every vertex $(U_1, U_2)$ of $\mathcal G''$
as a pair that consists of a set of $k$ vertices in $R_1$ and a set of $k$
vertices in $R_2$. We say that a vertex $(U_1, U_2)$ of $\mathcal G''$ is \em
marked \em if and only if there is a vertex $v_1\in U_1$ and a vertex
$v_2\in U_2$ such that $v$, $v_1$ and $v_2$ form a triangle in $G$. Our goal
is to find a marked vertex of $\mathcal G''$.

During the walk we maintain a data
structure that records for every vertex in $U_1\cup U_2$ whether it is a
neighbour of $v$. Using this data structure and the information from
$D_G^{R_1}(R_2)$, we can check whether a vertex of $\mathcal G''$ is marked
without any bit-probes. For the \textbf{Setup}, we have to make $O(k)$
bit-probes and for the \textbf{Update} we have to make $O(1)$ bit-probes.
The spectral gap of $J(r_1, k)\times J(r_2,k)$ is $\delta '' = \Omega (1/k)$ and
a lower bound on the fraction of marked vertices in $\mathcal G''$ -- if there
are any -- is $\epsilon ''= k^2/(r_1r_2)$ which can be seen as
follows: Suppose there is a vertex $v_1$ in $R_1$ and a vertex $v_2$ in $R_2$
such that $v,v_1,v_2$ form a triangle. If we select a vertex of $J(r_1,k)$ at
random, the probability that it contains $v_1$ is $k/r_1$ and if we
select a random vertex of $J(r_2,k)$, the probability that it contains $v_2$ is
$k/r_2$. Thus, a random vertex of $\mathcal G''$ is marked with
probability at least
$$\epsilon '' = \frac{k}{r_1r_2}\text .$$
The complexity of the subroutine is
$$O\left( k + \frac 1{\sqrt{\epsilon''}}\cdot \frac 1{\sqrt{\delta ''}}\right)
= O\left( k + \frac{\sqrt{r_1r_2}}{k}\cdot\sqrt{k}\right) = O\left( k +
\frac{\sqrt{r_1r_2}}{\sqrt{k}}\right)$$
bit-probes. Choosing $k = (r_1r_2)^{1/3}$ results in a complexity of
$O((r_1r_2)^{1/3})$ bit-probes, as claimed.

Let us now return to the inner walk on $\mathcal G'$.
The spectral gap of $J(n,r_2)$ is $\delta ' = \Omega (1/r_2)$
and we have $\epsilon ' = r_2/n$. To see this, suppose
that there is a vertex $v_1$ in $R_1$ that is part of a triangle. Then, a
vertex $R_2$ of $\mathcal G'$ is in $M_G^{R_1}$ if and only if it contains a
vertex $v_2\neq v_1$ of $G$ such that $v_1$ and $v_2$ are part of the same
triangle in $G$. Fix such a vertex $v_2$. The probability that a random vertex
in $\mathcal G'$ contains $v_2$ is
$\epsilon ' = r_2/n$.

Thus, the \textbf{Checking} step for the outer walk has cost
$$C = \tilde O\left( \frac 1{\sqrt{\epsilon '}}\left( \frac 1{\sqrt{\delta '}}U' + C'\right)\right)
= \tilde O\left( \sqrt{n}r_1 + \frac{nr_1^{1/3}}{r_2^{1/6}}\right)$$
The \textbf{Setup} cost of the outer walk is $S = r_1r_2$ since the number of
possible edges in $D_G^{R_1}(R_2)$ is $S=r_1r_2$. The cost for the
\textbf{Update} operation is $U = O(r_2)$ since updating the data structure from
$R_1$ to a neighbour $R_1'$ requires querying for the $r'\in R_1'\setminus R_1$
and every $r\in R_2$ whether $r'$ and $r$ are neighbours.

We have $\delta = \Omega (1/r_1)$. Also, similar to the inner walk, we have
$\epsilon = r_1/n$. Thus, Theorem \ref{walk3} gives us an algorithm
that makes
$$
\tilde O\left( r_1r_2 + \frac{\sqrt{n}}{\sqrt{r_1}}\left( \sqrt{r_1}r_2 + \sqrt{n}r_1 + \frac{nr_1^{1/3}}{r_2^{1/6}}\right)\right)
= \tilde O\left( r_1r_2 + \sqrt{n}r_2 + n\sqrt{r_1} +
        \frac{n^{3/2}}{(r_1r_2)^{1/6}}\right)
$$
bit-probes. Setting $r_1 = n^{4/7}$ and $r_2 = n^{5/7}$ gives an
algorithm that makes $\tilde O(n^{9/7})$ bit-probes.
\end{proof}
This proof shows how we can reduce costs by putting the more expensive
operations into the outer walk: Since $r_2 > r_1$, the \textbf{Update}-step
of the outer walk is more expensive than the \textbf{Update}-step of the inner
walk. A classical algorithm for triangle finding must make $\Omega (n^2)$
queries: Consider a complete bipartite graph where the sets of vertices
$V_1$ and $V_2$ both have $n/2$ elements. This graph does not contain any
triangles. But if we add \em any \em edge between two vertices in $V_1$ or
two vertices in $V_2$, we have a triangle. Thus, it is necessary to check
$\Omega \left( 2\cdot \binom{n}{2} \right) = \Omega\left( n^2 \right)$
edges to distinguish the complete bipartite graph from a graph with a triangle.
It is not known whether $\tilde O(n^{9/7})$ is optimal in the quantum setting,
but no better algorithm has been found.
\section{Summary}
What can quantum computing do for the data structure problems that we investigated
in this survey? If one looks at it superficially, one might say that it does
not do much.
For the set membership problem, the Perfect Hash method offers a solution
that achieves the information-theoretic minimum of memory up to a constant
factor and has a time complexity of $\log m$ bit-probes. This time complexity
cannot be improved in the quantum bit-probe model as long as we require
an exact query algorithm or one with one-sided error. If we allow bounded
error, there are classical data structures which need only one bit-probe.
For the predecessor search problem, we showed that the time complexity of the
data structure by Beame and Fich cannot be improved without raising the space
complexity, even in the address-only quantum cell-probe model. While this
does \em not \em exclude the possibility that a quantum query algorithm exists
which uses fewer cell-probes, such an algorithm must be relatively complicated.
The qubits which receive the cell-probe results cannot be in some simple
state such as $\ket 0$. They have to be entangled with the work-space qubits.

While knowing that quantum computing can \em not \em help us in some given area
has some value on its own, our survey also demonstrates that the theory of
quantum computing is a valuable mathematical tool. Even though one might say
that arguments from quantum computing are really linear algebra arguments with
odd notation, quantum computing represents a unique style of mathematical
arguments that provides useful results.
The lower bounds presented in this survey carry over to classical probabilistic
computing and they are stronger and easier to prove than
previous lower bounds for these problems.
These are examples how the theory of quantum computing can be relevant to classical computing
(for more examples, see \cite{qu_method}).

Finally, we saw how fully quantum data structures allow us to beat the
information-theoretic lower bound for the set membership problem. However,
in that setting, we must take into account that accessing the data structure
might disturb the quantum state which limits the number of times that it can
be used. Also, fully quantum data structures can improve quantum walk
algorithms.

\subsubsection*{Acknowledgements}
I thank Ronald de Wolf for supervising the writing of this survey and for his
helpful comments and advice.

\bibliographystyle{plain}
\bibliography{refs}
\end{document}